\definecolor{BeauBlue}{rgb}{0, 0.2, .9}
\definecolor{BeauOrange}{rgb}{.8, .1, 0}
\numberwithin{equation}{section}
\newtheorem{theorem}{Theorem}[section] 
\newtheorem{proposition}[theorem]{Proposition} 
\newtheorem{corollary}[theorem]{Corollary}
\newtheorem{lemma}[theorem]{Lemma}
\newtheorem{remark}[theorem]{Remark}
\newtheorem{assumption}[theorem]{Assumption}
\DeclareMathOperator{\Tr}{Tr} 
\newcommand{\numberset}{\mathbb}
\newcommand{\N}{\numberset{N}} 
\newcommand{\Z}{\numberset{Z}} 
\newcommand{\R}{\numberset{R}}
\newcommand{\ul}{\underline}
\title{Large Scale Response of Gapless $1d$ and Quasi-$1d$ Systems}
\author[1]{Marcello Porta}
\author[1]{Harman Preet Singh}
\affil[1]{Mathematics Area, SISSA, Via Bonomea 265, 34136 Trieste, Italy}
\date{\today}
\begin{document}

\maketitle

\begin{abstract}
We consider the transport properties of non-interacting, gapless one-dimensional quantum systems and of the edge modes of two-dimensional topological insulators, in the presence of time-dependent perturbations. We prove the validity of Kubo formula, in the zero temperature and infinite volume limit, for a class of perturbations that are weak and slowly varying in space and in time, in an Euler-like scaling. The proof relies on the representation of the real time Duhamel series in imaginary time, which allows to prove its convergence uniformly in the scaling parameter and in the size of system, at low temperatures. Furthermore, it allows to exploit a suitable cancellation for the scaling limit of the model, related to the emergent anomalous chiral gauge symmetry of relativistic one-dimensional fermions. The cancellation implies that, as the temperature and the scaling parameter are sent to zero, the linear response is the only contribution to the full response of the system. The explicit form of the leading contribution to the response function is determined by lattice conservation laws. In particular, the method allows to prove the quantization of the edge conductance of $2d$ quantum Hall systems from quantum dynamics.
\end{abstract}

\tableofcontents

\section{Introduction}\label{sec:intro}

The linear response approximation is a widely used tool in condensed matter physics, allowing to determine the response of systems initially at equilibrium after introducing a weak and time-dependent perturbation. Let us informally describe the setting. We consider lattice fermions, on a lattice $\Gamma_{L} \subset \mathbb{Z}^{d}$ with side length $L$, in a grand-canonical picture. Let $\mathcal{H}$ be the Fock space Hamiltonian of the system, and let $\rho_{\beta,\mu, L}$ be the associated equilibrium Gibbs state, at inverse temperature $\beta$ and chemical potential $\mu$:
\begin{equation}
\rho_{\beta,\mu,L} = \frac{e^{-\beta (\mathcal{H} - \mu \mathcal{N})}}{\Tr e^{-\beta (\mathcal{H} - \mu \mathcal{N})}}\;,
\end{equation}
with $\mathcal{N}$ the number operator. We will always suppose that the Hamiltonian $\mathcal{H}$ is local, {\it i.e.} it is given by the sum of monomials in the fermionic creation and annihilation operators that only involve subsets $X$ of $\Gamma_{L}$ with diameter bounded uniformly in $L$. Suppose that, far in the past, the system is exposed to an external, local perturbation $\mathcal{P}$, that is turned on slowly in time; we describe this setting by introducing the time-dependent Hamiltonian:
\begin{equation}
\mathcal{H}(\eta t) = \mathcal{H} + \varepsilon e^{\eta t} \mathcal{P}\;,\qquad t\leq 0\;.
\end{equation}
The parameter $\eta > 0$ is called the adiabatic parameter, and it describes the rate of variation of the perturbation in time, while the parameter $\varepsilon$ describes the strength of the perturbation. In what follows, we will be interested in the regime in which $\beta$ and $L$ are large, and eventually will be taken to infinity. We will always suppose that $\varepsilon$ and $\eta$ are small, uniformly in $\beta , L$. 

The evolution of the state is defined by the Schr\"odinger-von Neumann equation:
\begin{equation}\label{eq:SvN}
i\partial_{t} \rho(t) = [ \mathcal{H}(\eta t), \rho(t) ]\;,\qquad \rho(-\infty) = \rho_{\beta,\mu, L}\;,
\end{equation}
and we will be interested in the variation of physical observables computed on the solution of (\ref{eq:SvN}):
\begin{equation}\label{eq:variation}
\Tr \mathcal{O}_{X} \rho(t) - \Tr \mathcal{O}_{X} \rho(-\infty)\;,
\end{equation}
with $\mathcal{O}_{X}$ a lattice observable spatially supported on $X\subset \Gamma_{L}$. In order to understand the effect of the perturbation on (\ref{eq:variation}), it is natural to try to set up a perturbative expansion, for $\varepsilon$ small. By Duhamel iteration one has, neglecting higher order terms in $\varepsilon$:
\begin{equation}\label{eq:kubointro}
\Tr \mathcal{O}_{X} \rho(0) - \Tr \mathcal{O}_{X} \rho(-\infty) = -i \varepsilon \int_{-\infty}^{0} dt\, e^{\eta t} \Tr \Big[ \mathcal{O}_{X}, e^{i\mathcal{H}t} \mathcal{P} e^{-i\mathcal{H}t} \Big] \rho_{\beta, \mu, L} + \text{h.o.t..}
\end{equation}
The explicit term in the right-hand side of (\ref{eq:kubointro}) is called Kubo formula \cite{Kubo}, and it describes the linear response of the system at time $t=0$.

The first natural problem to address is about the evaluation of Kubo formula, for specific models, and for interesting perturbations $\mathcal{P}$. This is the starting point of the theory of the integer quantum Hall effect (IQHE), where Kubo formula is used to describe the variation of the current after introducing a weak external electric field, for gapped fermionic models at zero temperature. Already proving that Kubo formula gives a finite result as $\eta \to 0$ is nontrivial: in the case of the IQHE, this is ultimately due to the presence of a spectral gap, which makes the time integral convergent uniformly in $\eta$, thanks to oscillations. One can actually prove much more than just finiteness of Kubo formula: the transverse conductivity of the system, at zero temperature and for $L\to \infty$, is quantized in multiples of $1/2\pi$ (in units such that $e^{2} = \hbar = 1$, with $e$ the electric charge and $\hbar$ the reduced Planck's constant), provided the chemical potential $\mu$ is chosen in a spectral gap of $\mathcal{H}$. The mathematically rigorous literature on the subject is by now very large, and covers a large class of models, see {\it e.g.} \cite{Graf, AW} for mathematical reviews. Recently, the proof of quantization of the Hall conductivity has been extended to gapped many-body systems, \cite{HM, GMPhall, BBDRF}, see \cite{GMPrev} for a recent review.

More generally, Kubo formula is widely used to predict the behavior of semi-metallic and metallic systems as well, where no spectral gap is available. This is the case of graphene, for instance, where the longitudinal conductivity defined from Kubo formula is equal to $1/4$. This remarkable phenomenon has been observed in \cite{Na}. A first explanation starting from a non-interacting model has been given in \cite{SPG}; more recently, the universality of the longitudinal conductivity of graphene has been proved in \cite{GMPcond}, starting from weakly interacting fermions on the honeycomb lattice. In recent times, rigorous results concerning the Kubo formula for a wide class of interacting gapless systems have been obtained; specifically, one-dimensional metals \cite{BM1, BM2, MPdrude, CMP}, two-dimensional Hall systems at criticality \cite{GJMP, GMPhald} and edge modes of $2d$ systems \cite{AMP, MPspin, MPmulti}; three dimensional Weyl semimetals \cite{GMPweyl}. See \cite{GMPrev} for an overview.

The next natural question is about the validity of Kubo formula, in the sense of rigorously proving that higher order terms in (\ref{eq:kubointro}) are indeed subleading, uniformly in $\beta$ and in $L$. For gapped systems at zero temperature, the validity of Kubo formula can be explained using the adiabatic theorem. This has been done in \cite{ASY} for noninteracting systems on the lattice, in \cite{ES} for the Landau Hamiltonian and more recently in \cite{BDRF,MT,Teu} for interacting lattice models. See \cite{HT} for a review and further references. In the case of the quantum Hall effect, one can actually show that linear response is exact, for non-interacting \cite{KS, MM} and also for interacting system \cite{BDRFL}: all power-law corrections after the linear response vanish. Recently, the derivation of Kubo formula for the quantum Hall effect has been extended to the case of $2d$ disordered, non-interacting systems, where the spectral gap is replaced by a mobility gap \cite{DREF}. A different proof of the validity of Kubo formula, which also applies to small positive temperatures, has been given in \cite{GLMP}, for weakly interacting fermions, using cluster expansion methods. There, the higher order terms in the Duhamel expansion are controlled using an extension of the argument behind the stability of KMS states \cite{BR2}; thanks to a suitable complex deformation argument, all terms in the Duhamel expansion can be represented in terms of imaginary time correlations, whose good decay properties for gapped systems can be proven using cluster expansion techniques. This method can also be used to prove the validity of the many-body adiabatic theorem \cite{GLMP}.

It is an open problem to understand the validity of Kubo formula in a metallic phase starting from a microscopic model. This is of obvious physical relevance, in view of the ubiquitous applications of linear response in cases in which no spectral gap is present. In the case of a finite system coupled to external leads at thermal equilibrium, the related problem of proving convergence to a non-equilibrium steady state (and proving the validity of Kubo formula) has been widely studied in the literature. Rigorous results have been proved under suitable spectral assumptions, for non-interacting \cite{AH, AP, Ne} and also interacting \cite{FMU, JOP, MCP, CMP2} systems.

In this paper we shall address the validity of Kubo formula for closed, extensive systems, in the presence of time-dependent perturbations. In order to prove the validity of Kubo formula for metallic systems, one has to face two main technical problems: the first is about the control of the time integration, that for gapped systems was ultimately possible thanks to the oscillations introduced by the spectral gap; the second is about the slow space decay of correlations, needed in order to control expressions of the type, setting $\tau_{t}(\mathcal P):=e^{i\mathcal H t}\mathcal P e^{-i\mathcal H t}$:
\begin{equation}
\sum_{\{X_{i}\}}\Tr \Big[ \Big[ \cdots \Big[ \Big[ \mathcal{O}_{X}, \tau_{t_{1}}(\mathcal{P}_{X_{1}}) \Big], \tau_{t_{2}}(\mathcal{P}_{X_{2}})\Big], \cdots \Big], \tau_{t_{n}}(\mathcal{P}_{X_{n}})\Big] \rho_{\beta, \mu, L}\;,
\end{equation}
which appear naturally in the Duhamel expansion of (\ref{eq:kubointro}), after representing the perturbation $\mathcal{P}$ as a sum of local term, $\mathcal{P} = \sum_{Y} \mathcal{P}_{Y}$. Due to the absence of a spectral gap, one cannot expect the correlation functions of the system to be integrable uniformly in $t$, $\beta$, $L$.

In this paper we focus on the simplest, yet nontrivial, gapless cases relevant for condensed matter applications. We shall consider non-interacting Fermi gases in one dimension, and edge modes for two-dimensional topological insulators, that can be viewed as quasi-$1d$ metallic systems. For the sake of the introduction, let us focus on one-dimensional systems. We shall denote by $\Gamma_{L}$ a one-dimensional lattice with length $L$ and periodic boundary conditions. The Hamiltonian $\mathcal{H}$ has the form:
\begin{equation}\label{eq:17}
\mathcal{H} = \sum_{x,y \in \Gamma_{L}}\sum_{\rho,\rho' = 1}^{M} a^{*}_{x,\rho} H_{\rho\rho'}(x;y) a_{y,\rho'}\;,
\end{equation}
where $\rho = 1,\ldots, M$ labels internal degrees of freedom, $a^{*}_{x,\rho}, a_{x,\rho}$ are the fermionic creation and annihilation operators, and $H$ is a single-particle Hamiltonian, which we assume to be translation-invariant: $H_{\rho\rho'}(x;y) \equiv H_{\rho\rho'}(x-y)$. This allows to introduce the Bloch Hamiltonian $\hat H(k)$, $k\in \mathbb{T}$, an $M\times M$ Hermitian matrix, whose eigenvalues as a function of $k$ describe the energy bands. We will suppose that the energy bands cross the Fermi level $\mu$ at $N$ distinct points, called the Fermi points, $k_{F}^{\omega}$, $\omega = 1,\ldots, N$. We will assume that the intersections are transversal, which is generically true; this allows to approximate the dispersion relation in proximity of the Fermi level $\mu$ by an asymptotically linear law, with slope $v_{\omega}$, also called the Fermi velocity. 

Let us now describe the class of perturbations we will consider. Let $\mu(x)$ be a smooth and fast-decaying function. We shall consider the following type of perturbations:
\begin{equation}\label{eq:pert}
\varepsilon e^{\eta t}\mathcal{P} = \theta e^{\eta t}\sum_{x\in \Gamma_{L}} \sum_{\rho = 1}^{M} \mu(\theta x) a^{*}_{x,\rho} a_{x,\rho}
\end{equation}
where $0<\theta \ll 1$ and $0<\eta \ll 1$, independent of $\beta, L$. From (\ref{eq:pert}), we see that the perturbation lives in a spatial region with diameter of order $1/\theta$, and its strength is $\theta$. We will be interested in the following order of limits: first $L\to \infty$; then $\beta \to \infty$; and then $\eta, \theta \to 0$. More precisely, we will consider an Euler-like scaling, in which $\theta = a \eta$, with $a\leq \left|\log \eta\right|$. This leaves room to have $\eta \gg \theta$ or $\eta \ll \theta$; the two regimes will give rise to different outcomes, as we will see. The Euler scaling is the natural regime in which one expects that the long-time behavior of large class of integrable systems is effectively described by hydrodynamic models, see \cite{Doyon, Ess} for recent reviews. Proving that this is indeed the case is a hard mathematical problem, which partially motivates the present work.

In order to clarify the picture, let us give some heuristics on the large-scale properties of the model. Let us consider the Euclidean two-point function, defined as:
\begin{equation}\label{eq:2ptintro}
\langle {\bf T}\, \gamma_{x_{0}} (a_{x,\rho}) \gamma_{y_{0}}(a^{*}_{y,\rho'}) \rangle_{\beta, L}\;,
\end{equation}
with $0\leq x_{0}, y_{0} < \beta$, $\gamma_{t}(A) = e^{t(\mathcal{H} - \mu\mathcal{N})} A  e^{-t(\mathcal{H} - \mu\mathcal{N})}$ the imaginary time evolution, and ${\bf T}$ the fermionic time-ordering (defined below in (\ref{eq:time})). Let $\ul x = (x_{0}, x)$ and $\ul y = (y_{0}, y)$. Then, for $|\ul x - \ul y|  = O(1/\theta)$ and $\theta \ll 1$, the zero temperature and infinite volume two-point function is given by, up to faster decaying terms:
\begin{equation}
\langle {\bf T}\, \gamma_{x_{0}} (a_{x,\rho}) \gamma_{y_{0}}(a^{*}_{y,\rho'}) \rangle_{\infty} \simeq \sum_{\omega = 1}^{N} e^{ik_{F}^{\omega}(x - y)} P_{\rho\rho'}^{\omega} \int \frac{d{\ul q}}{(2\pi)^{2}}\, \frac{e^{i\ul q\cdot (\ul x - \ul y)}}{iq_{0} + v_{\omega} q_{1}} \chi(|\ul q| / \theta)\;,
\end{equation}
where $P^{\omega}$ is the Bloch projector associated with the energy band crossing the Fermi level at $k_{F}^{\omega}$, and $\chi(\cdot)$ is a smooth cutoff function, equal to $1$ if the argument is less than $1$ and equal to zero if the argument is larger than $2$. That is, the two-point function is a superpositions of ultraviolet-regularized two-point functions of $1+1$ dimensional chiral fermions, with propagators given by $(\partial_{0} - i v_{\omega} \partial_{1})^{-1}$.

Thus, it makes sense to define $\underline{X} = \theta \underline{x}$ and $\underline{Y} = \theta \underline{y}$. In terms of these rescaled variables,
\begin{equation}\label{eq:gresc}
\langle {\bf T}\, \gamma_{X_{0} / \theta} (a_{X / \theta,\rho}) \gamma_{Y_{0} / \theta}(a^{*}_{Y / \theta,\rho'}) \rangle_{\infty} \simeq \theta \sum_{\omega = 1}^{N} e^{ik_{F}^{\omega}(X - Y) / \theta} P_{\rho\rho'}^{\omega} \int \frac{d{\ul q}}{(2\pi)^{2}}\, \frac{e^{i\ul q\cdot (\ul X - \ul Y)}}{iq_{0} + v_{\omega} q_{1}} \chi(|\ul q|)\;;
\end{equation}
that is, up to a fast oscillating phase, the two-point function decays as $\theta \times | \ul X - \ul Y |^{-1}$. Concerning the perturbation (\ref{eq:pert}), it can be rewritten as, defining $T = \theta t$, and recalling that $\eta = (1/a)\theta$:
\begin{equation}
\varepsilon e^{\eta t}\mathcal{P} = \theta e^{(1/a) T}\sum_{X\in \theta \Gamma_{L}}\sum_{\rho=1}^{M} \mu(X) a^{*}_{X / \theta,\rho} a_{X / \theta,\rho}\;;
\end{equation}
hence, defining $\widetilde\rho(T) := \rho(T / \theta)$ with $\rho(\cdot)$ the solution of (\ref{eq:SvN}), we get:
\begin{equation}\label{eq:tilderho}
i\partial_{T} \widetilde\rho(T) =  \Big[ \widetilde{\mathcal{H}}((1/a) T), \widetilde \rho(T) \Big]\;,
\end{equation}
where:
\begin{equation}\label{eq:rescH}
\widetilde{\mathcal{H}}((1/a)T) = \sum_{X,Y}\sum_{\rho,\rho'} a^{*}_{X/\theta,\rho} \widetilde{H}_{\rho\rho'}(X-Y) a_{Y/\theta,\rho'} + e^{(1/a) T}\sum_{X\in \theta \Gamma_{L}}\sum_{\rho} \mu(X) a^{*}_{X / \theta,\rho} a_{X / \theta,\rho}
\end{equation}
and $\widetilde{H}(X-Y) = (1/\theta) H((X-Y)/\theta)$. The leading behavior of the two-point function of the rescaled, unperturbed Hamiltonian in (\ref{eq:rescH}), computed at imaginary times $X_{0}$, $Y_{0}$, is given by (\ref{eq:gresc}). Thus, in order to understand the contribution of the different chiralities to the solution of (\ref{eq:tilderho}), as $\theta \to 0$, one is tempted to consider the dynamics of the following time-dependent, continuum model:
\begin{equation}\label{eq:cont1}
\mathcal{H}_{\omega}((1/a)T) = \int_{\mathbb{R}} dX\,\Psi^{*}_{X} v_{\omega} i\partial_{X} \Psi_{X} + e^{(1/a) T} \int_{\mathbb{R}} dX\, \mu(X) \Psi^{*}_{X} \Psi_{X}\;,
\end{equation}
where $\Psi_{X}$ is now a fermionic field, understood as an operator-valued distribution. It is reasonable to expect that, as $\theta \to 0$, the leading contribution to the transport properties associated with (\ref{eq:tilderho}) can be represented as a superposition of terms that evolve according to the dynamics generated by (\ref{eq:cont1}). Eq. (\ref{eq:cont1}) can be viewed as describing an adiabatic perturbation of relativistic $1+1$ dimensional chiral fermions. Of course, the expression (\ref{eq:cont1}) is formal, in view of the unboundedness from above and from below of the dispersion relation. Also, observe that, since we are making no assumption on the size of $\mu(\cdot)$, after rescaling time no small parameter is left in front of the perturbation.

Putting aside the issue of mathematical rigor, the advantage of this representation is that the model (\ref{eq:cont1}) can be studied via bosonization: the density operator $\Psi^{*}_{X} \Psi_{X}$ is known to behave like a free bosonic field, see {\it e.g.} \cite{ML, FGM, Gia}. Since the Duhamel series associated with (\ref{eq:cont1}) is expressed in terms of correlation function of density operators, this observation is expected to introduce a drastic simplification in the perturbation theory for the lattice model: for quasi-free bosonic systems, all connected correlation functions of order higher than two are zero. This property is expected to hold also in presence of weak density-density interactions, and it is an instance of the integrability of the emergent QFT. In the last years, there has been enormous interest in the application of methods from the theory of integrable systems to study non-equilibrium quantum dynamics, starting from \cite{OCDY, BCNF}. This led to the derivation of generalized hydrodynamic models for the long-time dynamics of quantum systems; see {\it e.g.} \cite{Doyon, Ess} for reviews.

In this paper, we will rigorously study the transport properties of the family of lattice models described after (\ref{eq:17}), for perturbations in the Euler-like scaling regime, recall (\ref{eq:pert}). Concerning the observable $\mathcal{O}_{X}$ in (\ref{eq:variation}), we will restrict the attention to the density operator and to the current density operator. They are related by a lattice continuity equation, and this will play a crucial role in our analysis. Due to the absence of a spectral gap, the validity of Kubo formula cannot be inferred from the adiabatic theorem, since the latter is not available in a quantitative form suitable for our purposes. Instead, our method relies on the application of \cite{GLMP} to the case of gapless systems. This approach, based on a complex deformation argument in time, allows us to obtain an exact representation for the real time Duhamel series in terms of Euclidean correlation functions, where the real time evolution is replaced by an imaginary time dynamics, as in (\ref{eq:2ptintro}). This allows to translate the oscillations introduced by the real time dynamics into (slow) decay properties of Euclidean correlations, which can be studied in a much more efficient way. 

Furthermore, this representation is particularly convenient for comparing the perturbation theory of the original lattice model with the perturbation theory of the effective continuum model (\ref{eq:cont1}), in the presence of a suitable ultraviolet cutoff. This allows in particular to detect a cancellation in the perturbation theory for the lattice model, at every order, which is related to the bosonic behavior of the density operators of the effective model. The proof of this cancellation however does not rely on bosonization, but rather on a careful analysis of the regularized loop integrals arising in the perturbation theory of the lattice model. This type of cancellation is well-known in the QFT literature, see \cite{FGM2}. It can also be viewed as a consequence of the emergent, anomalous gauge symmetry of the $1+1$ dimensional relativistic chiral fermions. In our framework, this cancellation is used up to a high, $\theta$-dependent order; to control the higher orders, we show that the Duhamel series is actually convergent, and this allows to show that very high orders do not matter as $\theta \to 0$. All together, we show that the linear response term dominates, up to terms that are subleading as $\theta \to 0$: that is, in the scaling limit the linear response is exact. Concerning the linear response term, it can be computed explicitly in terms of the Fermi velocities, following the strategy already used in \cite{BM1, BM2, MPdrude, AMP, MPmulti}, based on lattice conservation laws. 

Then, we adapt the strategy to study edge transport for $2d$ topological insulators. The complementary question of understanding bulk transport in presence of a bulk spectral gap has been studied in \cite{HT2}, where the validity of a many-body (super-)adiabatic theorem has been established. More recently, the validity of Kubo formula for many-body system under a local gap assumption has been proved in \cite{HW}. Here, we focus on the edge transport properties, which are effectively described by a massless quasi-$1d$ system, and we prove the validity of Kubo formula for the edge current and for the edge density. This expression is the starting point of the analyses of \cite{AMP, MPmulti}, which also considered interacting systems. In particular, the explicit computation of the edge linear response, combined with the subleading estimate for the sum of all the higher order terms, allows us to prove the quantization of the edge conductance from quantum dynamics.

In perspective, we believe that the strategy introduced in the present paper could be extended to consider non-integrable, weakly interacting gapless systems on the lattice. To do so, the analysis of Euclidean correlation functions performed in the present paper has to be carried out using rigorous renormalization group methods. In the last years, these methods allowed to put on rigorous grounds several predictions based on formal bosonization arguments for $1d$ and quasi-$1d$ systems; see {\it e.g.} \cite{BMchiral, BFM1, BFM2, Mabook, AMP, MPmulti}. We plan to come back to this interesting extension in the near future.

The paper is organized as follows. In Section \ref{sec:1d} we introduce the setting for one-dimensional fermions; we introduce the class of models we consider, we define the current and density operators, and we recall the rigorous Wick rotation of \cite{GLMP}. In Section \ref{sec:main1d} we state our main result for one-dimensional fermions, Theorem \ref{thm:main1d}, that gives an explicit expression for the full response of the density and of the current operator on large scales. In Section \ref{sec:proof1d} we discuss the proof of Theorem \ref{thm:main1d}. Finally, in Section \ref{sec:2d} we discuss the extension of Theorem \ref{thm:main1d} to the case of edge transport in two-dimensional topological insulators. Our main result here is Theorem \ref{thm:main2d}, that gives the explicit expression of the large-scale response of the system, and in particular proves the quantization of the edge conductance.

\paragraph{Acknowledgements.} M. P. and H. P. S. acknowledge support by the European Research Council through the ERC-StG MaMBoQ, n. 802901. M. P. acknowledges support from the MUR, PRIN 2022 project MaIQuFi cod. 20223J85K3. This work has been carried out under the auspices of the GNFM of INdAM. We thank the anonymous referees for valuable comments.

\paragraph{Data availability statement.} This manuscript has no associated data.

\paragraph{Conflict of interest.} The authors have no conflict of interest to declare that are relevant to the content of this article.

\section{Lattice fermions}\label{sec:1d}

\subsection{Hamiltonian and Gibbs state}

Let $L \in 2\mathbb{N}+1$. We consider fermions on a one-dimensional lattice 
\[
\Gamma_{L} = \left[-\left\lfloor\frac{L}{2}\right\rfloor,\left\lfloor\frac{L}{2}\right\rfloor\right]\cap \Z
\]
endowed with periodic boundary conditions. We shall allow for internal degrees of freedom, {\it e.g.} the spin, and we shall denote by $S_{M} = \{ 1,\ldots, M \}$ the set of their labels. We shall use the notation $\Lambda_{L} = \Gamma_{L} \times S_{M}$, which we can view as a decorated $1d$ lattice. Given $x\in \Gamma_{L}$ and $\rho \in S_{M}$, we shall denote by ${\bf x} = (x,\rho)$ the corresponding point in $\Lambda_{L}$. We will use the following distance on $\Gamma_{L}$:
\begin{equation}
| x- y |_{L}^{2} = \min_{n\in \mathbb{Z}} | x - y + n L |^{2}\;.
\end{equation}
Let $H$ be a single-particle Hamiltonian on $\Lambda_{L}$. That is, $H$ is a self-adjoint matrix, with entries given by $H({\bf x}; {\bf y}) \equiv H_{\rho\rho'}(x;y)$. We shall suppose that the Hamiltonian is finite-ranged. Also, we shall assume that the Hamiltonian $H$ is the periodization of an Hamiltonian $H^{\infty}$ defined over $\ell^{2}(\mathbb{Z} \times S_{M})$:
\begin{equation}
H_{\rho\rho'}(x; y) = \sum_{n \in \mathbb{Z}} H^{\infty}_{\rho\rho'}( x + nL; y )\;.
\end{equation}
Furthermore, we assume that $H^{\infty}$, and hence $H$, are translation-invariant:
\begin{equation}\label{eq:periodiz}
H_{\rho\rho'}^{\infty}( x; y ) = H_{\rho\rho'}^{\infty}( x + z; y + z)\qquad \text{for all $z\in \mathbb{Z}$}.
\end{equation}
A natural example is the lattice Laplacian, for spinless fermions ($M=1$):
\begin{equation}
(H^{\infty} \psi)(x) = t(\psi_{x-1} + \psi_{x+1} - 2\psi_{x})\;,
\end{equation}
with $t \in \mathbb{R}$ the hopping parameter.

Translation-invariance allows to coveniently express the Hamiltonian in momentum space. To this end, let us introduce the set of allowed quasi-momenta:
\begin{equation}\label{eq:BLdef}
B_{L} = \Big\{ k = \frac{2\pi}{L} n \, \Big| \, 0\leq n \leq L-1 \Big\}\;.
\end{equation}
The set $B_{L}$ is the (discretized) Brillouin zone. Given a function $f$ on $\Gamma_{L}$, we define its Bloch transform as:
\begin{equation}\label{eq:fk}
\hat f(k) = \sum_{x\in \Gamma_{L}} e^{-ikx} f(x)\qquad \text{for all $k\in B_{L}$.}
\end{equation}
This relation can be inverted, as:
\begin{equation}
f(x) = \frac{1}{L} \sum_{k\in B_{L}} e^{ikx} \hat f(k)\;.
\end{equation}
We will always consider functions that satisfy periodic boundary conditions on $B_{L}$. Thus, it is natural to introduce the following norm in momentum space:
\begin{equation}
| k |_{\mathbb{T}} = \min_{n\in \mathbb{Z}} | k + 2\pi n |\qquad \text{for all $k\in B_{L}$.}
\end{equation}
Next, we define the Bloch Hamiltonian $\hat H(k)$ as the matrix with entries:
\begin{equation}\label{eq:bloch1d}
\hat H_{\rho\rho'}(k) = \sum_{x\in \Gamma_{L}} e^{-ikx} H_{\rho\rho'}(x; 0)\;.
\end{equation}
That is, $\hat H(k)$ is an $M\times M$ Hermitian matrix. By (\ref{eq:periodiz}), it is given by the restriction of $\hat H^{\infty}(k)$, defined on the circle $\mathbb{T}$ of length $2\pi$, to $B_{L}$. We shall make the following assumptions on the spectrum of the Bloch Hamiltonian and on the chemical potential $\mu \in \mathbb{R}$. 

\begin{assumption}[Low energy spectrum.]\label{ass:A} There exists $\Delta > 0$ such that the following is true.
\begin{itemize}
\item[(i)] There exists $N \in \mathbb{N}$, disjoint sets $I_{\omega} \subset \mathbb T$ labelled by $\omega = 1, \ldots, N$, and strictly monotone, smooth functions $e_{\omega}: I_{\omega} \to \mathbb{R}$ such that:
\begin{equation}
\sigma(H^{\infty}) \cap (\mu - \Delta, \mu + \Delta) = \bigcup_{\omega = 1}^{N} \mathrm{Ran}(e_{\omega})\;.
\label{eq:spectrum}
\end{equation}
\item[(ii)] We introduce the $\omega$-Fermi point $k_{F}^{\omega} \in I_{\omega}$ and the $\omega$-Fermi velocity $v_{\omega}$ as
\begin{equation}
e_{\omega}(k_{F}^{\omega}) = \mu\;,\qquad v_{\omega} = \partial_{k} e_{\omega}(k_{F}^{\omega})\;. 
\end{equation}
Notice that $v_{\omega} \neq 0$, by the strict monotonicity of $e_{\omega}$.
\item[(iii)] For any $\omega=1,\dots, N$, and $k\in I_{\omega}$, $e_{\omega}(k)$ is a non-degenerate eigenvalue of $\hat H^{\infty}(k)$.
\end{itemize}
\end{assumption}
\begin{remark}
\begin{itemize}
\item[(i)] The simplest example of Hamiltonian that fits the setting is the Laplacian on $\mathbb{Z}$, whose energy band is:
\begin{equation}
\varepsilon(k) = 2t(\cos(k) - 1)\;,
\end{equation}
and where $N=2$.
\item[(ii)] In general, it is a well-known fact that, for short-ranged one-dimensional lattice models, the number of Fermi points $N$ has to be even, with net chirality equal to zero: $\sum_{\omega =1 }^{N} v_{\omega} / |v_{\omega}| = 0$.
\item[(iii)] The above setting is generic, and it covers a large class of translation-invariant, $1d$ quantum systems. We are, however, ruling out Fermi points with zero velocity, and degenerate Fermi points. We believe that the latter restriction could be avoided, by a technical but straightforward extension of the analysis carried out in the present paper. Instead, we think that the presence of Fermi points with zero velocity would have a dramatic effect in our analysis, and we do not know whether the our results would extend to that case.
\end{itemize}
\end{remark}
We will describe the many-particle system in a grand-canonical setting, in the fermionic Fock space. The fermionic Fock space associated with the model on a finite lattice is:
\begin{equation}
\mathcal{F} = \mathbb{C} \oplus \bigoplus_{n\geq 1} \ell^{2}_{\text{a}}(\Lambda_{L}^{n})\;,
\end{equation}
with $\ell_{\text{a}}^{2}(\Lambda_{L}^{n})$ the set of antisymmetric, square summable functions on $\Lambda_{L}$. We shall use the notation ${\bf x} = (x, \rho)$ for points in $\Lambda_{L}$, with $x\in \Gamma_{L}$ and $\rho \in S_{M}$, and we shall denote by $a^{*}_{{\bf x}}, a_{{\bf x}}$ the usual fermionic creation and annihilation operators, satisfying the canonical anticommutation relations:
\begin{equation}
\{ a_{{\bf x}}, a^{*}_{{\bf y}} \} = \delta_{{\bf x}, {\bf y}}\;,\qquad \{ a_{{\bf x}}, a_{{\bf y}} \} = \{ a^{*}_{{\bf x}}, a^{*}_{{\bf y}} \} = 0\;.
\end{equation}
In terms of these objects, we shall lift the Hamiltonian $H$ to the Fock space as:
\begin{equation}\label{eq:H1d}
\mathcal{H} = \sum_{{\bf x}, {\bf y} \in \Lambda_{L}} a^{*}_{{\bf x}} H({\bf x}; {\bf y}) a_{{\bf y}}\;.
\end{equation}
In the following, it will also be convenient to rewrite the Hamiltonian in momentum space. For any $k \in B_{L}$, we define the Bloch transform of the fermionic operators as:
\begin{equation}\label{eq:foua}
\hat a_{(k, \rho)} = \sum_{x\in \Gamma_{L}} a_{(x, \rho)} e^{-ikx}\;,\qquad \hat a^{*}_{(k, \rho)} = \sum_{x\in \Gamma_{L}} a^{*}_{(x,\rho)} e^{ikx}\;,
\end{equation}
and we shall also set $\hat a^{\sharp}_{{\bf k}} \equiv \hat a^{\sharp}_{(k,\rho)}$ with ${\bf k} = (k, \rho)$. Eqs. (\ref{eq:foua}) can be inverted as, for $x\in \Gamma_{L}$:
\begin{equation}\label{eq:foua2}
a_{(x,\rho)} = \frac{1}{L} \sum_{k \in B_{L}} e^{ikx} \hat a_{(k, \rho)}\;,\qquad a^{*}_{(x,\rho)} = \frac{1}{L} \sum_{k \in B_{L}} e^{-ikx} \hat a^{*}_{(k, \rho)}\;.
\end{equation}
In terms of the momentum-space operators, the Hamiltonian can be rewritten as:
\begin{equation}
\mathcal{H} = \frac{1}{L}\sum_{k \in B_{L}} \sum_{\rho,\rho'\in S_{M}} \hat a^{*}_{(k, \rho)} \hat H_{\rho\rho'}(k) \hat a_{(k, \rho')}\;.
\end{equation}
Finally, the Gibbs state of the system at inverse temperature $\beta>0$ is:
\begin{equation}
\langle \mathcal{O} \rangle_{\beta,L} = \Tr \mathcal{O} \rho_{\beta, \mu, L}\;,\qquad \rho_{\beta, \mu, L} = \frac{e^{-\beta (\mathcal{H} - \mu \mathcal{N})}}{\Tr e^{-\beta (\mathcal{H} - \mu \mathcal{N})}}\;,
\end{equation}
with $\mu \in \mathbb{R}$ the chemical potential and $\mathcal{N}$ the number operator. We shall assume that Assumption \ref{ass:A} holds, for our choice of the chemical potential $\mu$.

\subsection{Dynamics and linear response}



\paragraph{Perturbing the system.} We will be interested in the response properties of the system, after exposing it to a time-dependent and slowly varying perturbation. We shall consider time-dependent Hamiltonians of the form:
\begin{equation}
\mathcal{H}(\eta t) = \mathcal{H} + e^{\eta t} \mathcal{P}\;,
\end{equation}
for $\eta > 0$ and $t\leq 0$. We shall consider finite-ranged perturbations:
\begin{equation}
\mathcal{P} = \sum_{X\subseteq \Lambda_{L}} \Phi_{X}
\end{equation}
with $\Phi_{X} = 0$ if $|X| > R$, with $R>0$ independent of $L$. Later, we will make a more specific choice of the perturbation. The time variation of the perturbation is the usual adiabatic switching, with the explicit choice of the exponential switch function, widely used in applications. 

Let $t,s\leq 0$, and let $\mathcal{U}(t;s)$ be the two-parameter unitary group generated by $\mathcal{H}(\eta t)$:
\begin{equation}
i\partial_{t} \mathcal{U}(t;s) = \mathcal{H}(\eta t) \mathcal{U}(t;s),\qquad \mathcal{U}(s;s) = \mathbbm{1}\;.
\end{equation}
Let us consider the evolution of the Gibbs state $\rho_{\beta, \mu, L}$,
\begin{equation}
\rho(t) = \lim_{T\to +\infty} \mathcal{U}(t;-T) \rho_{\beta, \mu, L} \mathcal{U}(t;-T)^{*}.
\end{equation}
We will be interested in the variation of physical observables,
\begin{equation}\label{eq:diffO}
\Tr \mathcal{O} \rho(t) - \Tr \mathcal{O} \rho_{\beta, \mu, L}\;,
\end{equation}
and on the dependence on the external perturbation. We will be interested in the following order of limits: first $L\to \infty$, then $\beta \to \infty$ and then $\eta \to 0^{+}$. For small perturbations, the dynamics of the system can be studied via the Duhamel series around the initial equilibrium state.

\begin{proposition}[Duhamel series]\label{prop:duha} Let $\tau_{t}(\mathcal{A})$ be the Heisenberg evolution of the observable $\mathcal{A}$:
\begin{equation}
\tau_{t}(\mathcal{A}) = e^{i\mathcal{H} t} \mathcal{A} e^{-i\mathcal{H} t}\;.
\end{equation}
The following identity holds true:
\begin{equation}\label{eq:duhamel}
\begin{split}
\Tr \mathcal{O} \rho(t) -  \langle \mathcal{O} \rangle_{\beta,L} &= \sum_{n=1}^{\infty} (-i)^{n} \int_{-\infty \leq s_{n} \leq \ldots \leq s_{1} \leq t} d \underline{s}\, e^{\eta(s_{1} + \ldots +s_{n})} \\&\quad \cdot \langle [ \cdots [[ \tau_{t}(\mathcal{O}), \tau_{s_{1}}(\mathcal{P})], \tau_{s_{2}}(\mathcal{P})] \cdots \tau_{s_{n}}(\mathcal{P}) ] \rangle_{\beta,L}\;.
\end{split}
\end{equation}
For $\eta > 0$, the sum in the right-hand side is absolutely convergent.
\end{proposition}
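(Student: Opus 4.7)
The plan is a textbook Dyson iteration in the interaction picture, combined with a crude operator-norm bound to check absolute convergence. The argument never exploits the structure of $\mathcal{H}$ or $\mathcal{P}$ beyond their being bounded operators on the finite-volume Fock space $\mathcal{F}$; the uniform-in-$L$, $\beta$, $\eta$ refinement of this bound, carried out in the subsequent sections via the Wick rotation of \cite{GLMP}, constitutes the actual work of the paper.

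First I would pass to the interaction picture with respect to the unperturbed Heisenberg evolution $\tau_{t}$. Setting $\tilde\rho(t) := e^{i\mathcal{H}t}\rho(t)e^{-i\mathcal{H}t}$ and differentiating with (\ref{eq:SvN}), a direct computation gives $i\partial_{t}\tilde\rho(t) = e^{\eta t}[\tau_{t}(\mathcal{P}),\tilde\rho(t)]$, with $\tilde\rho(-\infty)=\rho_{\beta,\mu,L}$ (well-defined since $\rho_{\beta,\mu,L}$ commutes with $\mathcal{H}$ and $e^{\eta t}$ is integrable at $-\infty$). Integrating in time and iterating yields the Dyson expansion
$$\tilde\rho(t)=\sum_{n=0}^{\infty}(-i)^{n}\int_{-\infty\leq s_{n}\leq\ldots\leq s_{1}\leq t}d\underline{s}\,e^{\eta(s_{1}+\ldots+s_{n})}\bigl[\tau_{s_{1}}(\mathcal{P}),[\tau_{s_{2}}(\mathcal{P}),[\cdots,[\tau_{s_{n}}(\mathcal{P}),\rho_{\beta,\mu,L}]\cdots]]\bigr].$$
Computing $\Tr\mathcal{O}\rho(t)=\Tr\tau_{t}(\mathcal{O})\tilde\rho(t)$ by cyclicity of the trace, and repeatedly applying the identity $\Tr A[B,C]=\Tr[A,B]C$ to transfer each commutator onto $\tau_{t}(\mathcal{O})$, converts each term into $\Tr\bigl[\cdots[[\tau_{t}(\mathcal{O}),\tau_{s_{1}}(\mathcal{P})],\tau_{s_{2}}(\mathcal{P})],\cdots,\tau_{s_{n}}(\mathcal{P})\bigr]\rho_{\beta,\mu,L}$, as in (\ref{eq:duhamel}). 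The $n=0$ term reproduces $\langle\mathcal{O}\rangle_{\beta,L}$, which cancels upon subtracting.

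For absolute convergence, iterating $\|[X,Y]\|\leq 2\|X\|\|Y\|$ bounds the norm of the $n$-fold commutator by $2^{n}\|\mathcal{O}\|\|\mathcal{P}\|^{n}$. The time integral evaluates, via the substitution $u_{j}=t-s_{j}$ followed by symmetrization over permutations, to
$$\int_{-\infty\leq s_{n}\leq\ldots\leq s_{1}\leq t}e^{\eta(s_{1}+\ldots+s_{n})}d\underline{s}=\frac{e^{n\eta t}}{n!\,\eta^{n}},$$
so the series is dominated term by term by $\|\mathcal{O}\|\exp(2\|\mathcal{P}\|e^{\eta t}/\eta)$, finite for every $\eta>0$. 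The same estimate, applied to a truncated Duhamel iteration for $\mathcal{U}(t;-T)\rho_{\beta,\mu,L}\mathcal{U}(t;-T)^{*}$, yields uniform convergence in $T$ and justifies the existence of the limit defining $\rho(t)$.

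There is no genuine obstacle in this proposition; the only point requiring care is the cyclic bookkeeping that converts the Dyson series (commutators acting on $\rho_{\beta,\mu,L}$) into the Kubo form of (\ref{eq:duhamel}) (commutators acting on $\tau_{t}(\mathcal{O})$), and the correct accounting of signs and of the ordering of $s_{1},\ldots,s_{n}$. The estimate is deliberately crude and blows up with $L$ and $\beta$ via $\|\mathcal{P}\|$; extracting uniform bounds is precisely the content of the main theorems.
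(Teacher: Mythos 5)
Your proof is correct: it is the standard interaction-picture Duhamel/Dyson expansion with the commutator transferred onto $\tau_{t}(\mathcal{O})$ by cyclicity and the crude $2^{n}\|\mathcal{O}\|\|\mathcal{P}\|^{n}e^{n\eta t}/(n!\,\eta^{n})$ bound for absolute convergence, which is precisely the standard argument the paper does not spell out but delegates to \cite{GLMP}. Nothing essential is missing (the only implicit point, that the remainder of the truncated iteration vanishes, follows from the same term-by-term estimate), so your route and the paper's are the same.
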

\begin{proof} The proof is standard. We refer the reader to {\it e.g.} \cite{GLMP}.
\end{proof}
The linear response approximation amounts to the truncation to first order of the Duhamel series for the non-autonomous dynamics generated by $\mathcal{H}(\eta t)$. We have:
\begin{equation}\label{eq:diffO2}
\Tr \mathcal{O} \rho(t) - \Tr \mathcal{O} \rho_{\beta, \mu, L} = -i \int_{-\infty}^{t} ds\, e^{\eta s} \Tr \big[ \tau_{t}(\mathcal{O}), \tau_{s}(\mathcal{P}) \big] \rho_{\beta, \mu, L} + \text{h.o.t.}.
\end{equation}
In many applications it is tacitly assumed that the linear response approximation is valid; this is very often motivated by the quantitative agreement with experimental observations. From a mathematically rigorous viewpoint, however, proving that indeed linear response dominates the expansion is a nontrivial task, particularly so in gapless cases as the one considered here. 

We shall consider perturbations of the following type:
\begin{equation}\label{eq:Pdef}
\mathcal{P} =  \theta \sum_{{\bf x} \in \Lambda_{L}} \mu(\theta x) a^{*}_{{\bf x}} a_{{\bf x}}\;;
\end{equation}
that is, we shall consider the second-quantization of a time-dependent local potential (see Figure \ref{fig:perturb1d}), slowly varying in space, on a length scale $1/\theta$, for $0<\theta \ll 1$. Concerning the function $\mu(\theta x)$, we assume it is the periodization of a smooth, compactly supported function $\mu_{\infty}(\theta x)$ on $\mathbb{R}$:
\begin{equation}\label{eq:period}
\mu(\theta x) = \sum_{a\in \mathbb{Z}} \mu_{\infty}(\theta (x + a L))\;.
\end{equation}
In particular, we can rewrite the function $\mu(\theta x)$ in terms of the Fourier transform of $\mu_{\infty}$ as:
\begin{equation}\label{eq:muFou}
\mu(\theta x) = \frac{1}{L} \sum_{p \in B_{L}} e^{ipx} \hat \mu_{\theta}(p)\;,\qquad \hat \mu_{\theta}(p) = \sum_{n \in \mathbb{Z}} \hat \mu_{\infty}((p + 2\pi n)/\theta)\;.
\end{equation}
\begin{figure}
    \centering
    \includegraphics[scale=0.75]{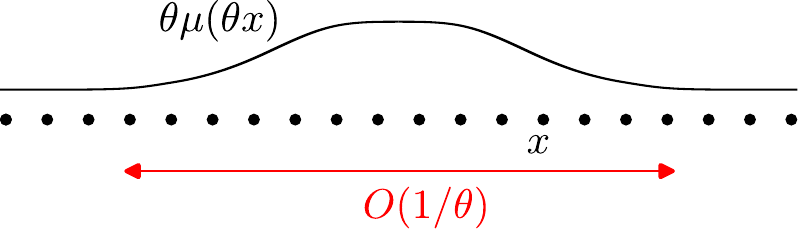}
    \caption{Qualitative representation of the rescaled local potential entering the definition of $\mathcal P$, Eq. (\ref{eq:Pdef}).}
    \label{fig:perturb1d}
\end{figure}
\paragraph{Density and current operators.} Let:
\begin{equation}
n_{{\bf x}} = a^{*}_{{\bf x}} a_{{\bf x}}\;,\qquad n_{x} = \sum_{\rho \in S_{M}} n_{(x, \rho)}\;.
\end{equation}
In order to introduce the current operator, let us consider the time-variation of the density operator:
\begin{equation}\label{eq:conteq}
\begin{split}
\partial_{t} \tau_{t}( n_{{\bf x}} ) &= i \tau_{t} \big( [ \mathcal{H}, n_{{\bf x}} ]\big) \\
&= \sum_{{\bf y} \in \Lambda_{L}} \tau_{t}( i a^{*}_{{\bf y}} H({\bf y}; {\bf x}) a_{{\bf x}} - i a^{*}_{{\bf x}} H({\bf x}; {\bf y}) a_{{\bf y}}  ) \\
&\equiv \sum_{{\bf y} \in \Lambda_{L}} \tau_{t}( j_{{\bf y}, {\bf x}})\;,
\end{split}
\end{equation}
where $j_{{\bf y}, {\bf x}} = i a^{*}_{{\bf y}} H({\bf y}; {\bf x}) a_{{\bf x}} - i a^{*}_{{\bf x}} H({\bf x}; {\bf y}) a_{{\bf y}}$ is the bond current between ${\bf x}$ and ${\bf y}$. Let us derive the conservation law for $n_{x}$. To this end, in (\ref{eq:conteq}) we sum over the the internal label $\rho$ in $\mathbf x=(x,\rho)$, and we introduce the discrete derivative $\text{d}_{x}$ as $(\text{d}_{x}f)(x) = f(x) - f(x-1)$ to obtain the continuity equation
\begin{equation}\label{eq:cons1d}
\partial_{t} \tau_{t}( n_{x} ) = -\text{d}_{x} \tau_{t}(j_{x})\;,
\end{equation}
where we introduced the lattice current $j_{x}$, determined by
\begin{equation}\label{eq:defj1d}
\mathrm d_{x}j_{x}:= \sum_{y\in \Gamma_{L}}\sum_{\rho,\rho'\in S_{M}} j_{(x,\rho),(y,\rho')}\;.
\end{equation}
The simple identity (\ref{eq:cons1d}) will have far reaching consequences in our analysis. Let $j_{\nu,x}$, $\nu = 0, 1$, be defined as $j_{0,x}= n_{x}$, $j_{1, x} = j_{x}$, then we may characterize them in momentum-space as:
\begin{equation}\label{eq:currFou}
j_{\nu,x}=\frac{1}{L}\sum_{p\in B_{L}} e^{ipx} \hat \jmath_{\nu, p}\qquad \hat\jmath_{\nu,p}=\frac{1}{L}\sum_{k\in B_{L}} (a^{*}_{k-p}, \hat J_{\nu}(k,p) a_{k})\;,
\end{equation}
where we used the shorthand notation $(a^{*}, J a)=\sum_{\rho,\rho'} a^{*}_{\rho}J_{\rho\rho'} a_{\rho'}$. We have
\begin{equation}\label{eq:Jnudef}
\hat J_{0}(k,p) = 1\;, \qquad \hat J_{1}(k,p)=i\frac{\hat H(k)-\hat H(k-p)}{1-e^{-ip}}\;.
\end{equation}
\begin{remark}
If the Hamiltonian $H$ has range 1, then $j_{1,x}$ takes a particularly simple form, given by the bond current between sites $x$ and $x+1$:
\begin{equation}
j_{1,x}\equiv j_{x}=\sum_{\rho,\rho'\in S_{M}} j_{(x,\rho),(x+1,\rho')}\;.
\end{equation}
\end{remark}
Finally, we shall define a smeared version of the density and of the current operators as
\begin{equation}\label{eq:jmuphi}
j_{\nu}(\mu_{\theta}) = \sum_{x\in \Gamma_{L}} j_{\nu,x} \theta \mu(\theta x)\;.
\end{equation}
\paragraph{Imaginary time evolution.}  For $t\in \mathbb{R}$, we define the imaginary time evolution of an observable $\mathcal{O}$ as:
\begin{equation}\label{eq:gammatdef}
\gamma_{t}(\mathcal{O}) := e^{t ( \mathcal{H} - \mu \mathcal{N} )} \mathcal{O} e^{-t(\mathcal{H} - \mu \mathcal{N})}\;.
\end{equation}
Observe that if $[\mathcal{O}, \mathcal{N}] = 0$, the following identity holds:
\begin{equation}
\gamma_{t}(\mathcal{O}) = \tau_{-it}(\mathcal{O})\;.
\end{equation}
An important consequence of the definition of the imaginary time dynamics (\ref{eq:gammatdef}) is the Kubo-Martin-Schwinger (KMS) identity, for all $t,s\in \mathbb{R}$:
\begin{equation}
\langle \gamma_{t}(\mathcal{A}) \gamma_{s}(\mathcal{B})  \rangle_{\beta,L} = \langle \gamma_{s+\beta}(\mathcal{B}) \gamma_{t}(\mathcal{A})  \rangle_{\beta,L}\;.
\end{equation}
In our finite-dimensional setting, the KMS identity is a simple consequence of the definitions, and of the cyclicity of the trace. Nevertheless, it will have far-reaching consequences. In order to discuss them, we need to introduce the notions of time-ordering and of Euclidean correlation functions. 

Let $0\leq t_{i} < \beta$, for $i=1,2,\ldots, n$, such that $t_{i} \neq t_{j}$ for $i\neq j$. We define the time-ordering of $\gamma_{t_{1}}(a^{\sharp_{1}}_{{\bf x}_{1}}),\ldots, \gamma_{t_{n}}(a^{\sharp_{n}}_{{\bf x}_{n}})$ as:
\begin{equation}\label{eq:time}
{\bf T} \gamma_{t_{1}}(a^{\sharp_{1}}_{{\bf x}_{1}}) \cdots \gamma_{t_{n}}(a^{\sharp_{n}}_{{\bf x}_{n}}) =  (-1)^{\pi}  \gamma_{t_{\pi(1)}}(a^{\sharp_{\pi(1)}}_{{\bf x}_{\pi(1)}}) \cdots \gamma_{t_{\pi(n)}}(a^{\sharp_{\pi(n)}}_{{\bf x}_{\pi(n)}})\;,
\end{equation}
where $\pi$ is the permutation needed in order to bring the times in a decreasing order, from the left, with sign $(-1)^{\pi}$. In case two or more times are equal, the definition (\ref{eq:time}) is extended via normal ordering. Other extensions are of course possible; it is worth anticipating that in our applications this arbitrariness will play no role, ultimately because it involves a zero measure set of times, and because the algebra of fermionic operators on a finite lattice consists of bounded operators. The above definition extends to all operators on the fermionic Fock space by linearity. In particular, for $\mathcal{O}_{1}, \ldots, \mathcal{O}_{n}$ even in the number of creation and annihilation operators, we have:
\begin{equation}\label{eq:Tord1}
{\bf T} \gamma_{t_{1}}(\mathcal{O}_{1}) \cdots \gamma_{t_{n}}(\mathcal{O}_{n}) = \gamma_{t_{\pi(1)}}(\mathcal{O}_{\pi(1)}) \cdots \gamma_{t_{\pi(n)}}(\mathcal{O}_{\pi(n)})\;.
\end{equation}
\paragraph{Euclidean correlation functions and Wick rotation.} Let $t_{i} \in [0,\beta)$, for $i=1,\ldots, n$. Given operators $\mathcal{O}_{1}, \ldots, \mathcal{O}_{n}$, we define the time-ordered Euclidean correlation function as:
\begin{equation}\label{eq:Tord2}
\langle {\bf T} \gamma_{t_{1}}(\mathcal{O}_{1}) \cdots \gamma_{t_{n}}(\mathcal{O}_{n}) \rangle_{\beta,L}\;.
\end{equation}
From the definition of fermionic time-ordering, and from the KMS identity, it is not difficult to check that:
\begin{equation}\label{eq:extension}
\begin{split}
\langle {\bf T} \gamma_{t_{1}}(\mathcal{O}_{1}) \cdots \gamma_{\beta}(\mathcal{O}_{k}) &\cdots \gamma_{t_{n}}(\mathcal{O}_{n}) \rangle_{\beta,L} \\ &= (\pm 1) \langle {\bf T} \gamma_{t_{1}}(\mathcal{O}_{1}) \cdots \gamma_{0}(\mathcal{O}_{k}) \cdots \gamma_{t_{n}}(\mathcal{O}_{n}) \rangle_{\beta,L}\;;
\end{split}
\end{equation}
in the special case in which the operators involve an even number of creation and annihilation operators, which will be particularly relevant for our analysis, the overall sign is $+1$. The property (\ref{eq:extension}) allows to extend in a periodic (sign $+1$) or antiperiodic (sign $-1$) way the correlation functions to all times $t_{i} \in \mathbb{R}$. From now on, when discussing time-ordered correlations we shall always assume that this extension has been taken, unless otherwise specified.

An important example of Euclidean correlation function is the two-point function,
\begin{equation}
\langle {\bf T} \gamma_{t}(a_{{\bf x}}) a^{*}_{{\bf y}} \rangle_{\beta,L}\;.
\end{equation}
By the properties of the fermionic time-ordering, the two point function extends to a $\beta$-antiperiodic function in the imaginary time variables. Let $k_{0} = (2\pi/\beta)( n + 1/2 )$ be the frequencies relevant for $\beta$-antiperiodic functions, also called fermionic Matsubara frequencies, and let $k\in B_{L}$ be a point in the Brillouin zone. We define the Fourier transform of the two-point function as:
\begin{equation}
\begin{split}
(g(\underline{k}))_{\rho\rho'} &:= \int_{0}^{\beta} d t\, e^{-ik_{0} t} \sum_{x\in \Lambda_{L}} e^{-ikx} \langle {\bf T} \gamma_{t}(a_{(x,\rho)}) a^{*}_{(0,\rho')} \rangle_{\beta,L}\;.
\end{split}
\end{equation}
For a non-interacting model, the two-point function can be computed explicitly in terms of the Bloch Hamiltonian. We have:
\begin{equation}\label{eq:gfou}
g(\underline{k}) = \frac{1}{ik_{0} + \hat H(k) - \mu}\;.
\end{equation}
The two-point function completely characterizes the Gibbs state of system: being the Hamiltonian quadratic in the fermionic creation and annihilation operators, all Euclidean correlation function can be computed via the fermionic Wick's rule.

Next, we define the connected time-ordered Euclidean correlation functions, or time-ordered Euclidean cumulants, as:
\begin{equation}\label{eq:Tcumul}
\begin{split}
&\langle {\bf T} \gamma_{t_{1}}(\mathcal{O}_{1}); \cdots; \gamma_{t_{n}}(\mathcal{O}_{n}) \rangle_{\beta,L} \\
&\qquad := \frac{\partial^{n}}{\partial \lambda_{1} \cdots \partial \lambda_{n}} \log \Big\{ 1 + \sum_{I \subseteq \{1, 2,\ldots, n\}} \lambda(I) \langle {\bf T} \mathcal{O}(I)  \rangle_{\beta,L} \Big\}\Big|_{\lambda_{i} = 0}
\end{split}
\end{equation}
where $I$ is a non-empty ordered subset of $\{1, 2,\ldots, n\}$, $\lambda(I) = \prod_{i\in I} \lambda_{i}$ and $\mathcal{O}(I) = \prod_{i\in I} \gamma_{t_{i}}(\mathcal{O}_{i})$. For $n=1$, this definition reduces to $\langle {\bf T} \gamma_{t_{1}}(\mathcal{O}_{1}) \rangle \equiv \langle \gamma_{t_{1}}(\mathcal{O}_{1}) \rangle = \langle \mathcal{O}_{1} \rangle$, while for $n=2$ one gets $\langle {\bf T} \gamma_{t_{1}}(\mathcal{O}_{1}); \gamma_{t_{2}}(\mathcal{O}_{2}) \rangle = \langle {\bf T} \gamma_{t_{1}}(\mathcal{O}_{1}) \gamma_{t_{2}}(\mathcal{O}_{2}) \rangle - \langle {\bf T} \gamma_{t_{1}}(\mathcal{O}_{1}) \rangle \langle {\bf T} \gamma_{t_{2}}(\mathcal{O}_{2}) \rangle$. The following relation between correlation functions and connected correlation function holds true:
\[
\langle {\bf T} \gamma_{t_{1}}(\mathcal{O}_{1}) \cdots \gamma_{t_{n}}(\mathcal{O}_{n}) \rangle_{\beta,L} = \sum_{P} \prod_{J\in P} \langle {\bf T} \gamma_{t_{j_{1}}}(\mathcal{O}_{j_{1}}); \cdots; \gamma_{t_{j_{|J|}}}(\mathcal{O}_{j_{|J|}}) \rangle_{\beta,L}\;,
\]
where $P$ is the set of all partitions of $\{1, 2, \ldots, n\}$ into ordered subsets, and $J$ is an element of the partition $P$, $J = \{ j_{1}, \ldots, j_{|J|} \}$. 

The next proposition establishes the connection between the real time perturbation theory generated by the Duhamel series of Proposition \ref{prop:duha} and the Euclidean correlation functions, for special choices of the adiabatic parameter $\eta$.
\begin{proposition}[Wick rotation]\label{prop:wick} Let $\eta_{\beta} \in \frac{2\pi}{\beta} \mathbb{N}$. Let $\mathcal{O}$ be such that $[\mathcal{O}, \mathcal{N}] = 0$. Then, the following identity holds true:
\begin{equation}\label{eq:wick}
\begin{split}
&\int_{-\infty \leq s_{n} \leq \ldots \leq s_{1} \leq t} d \underline{s}\, e^{\eta_{\beta}(s_{1} + \ldots +s_{n})} \langle [ \cdots [[ \tau_{t}(\mathcal{O}), \tau_{s_{1}}(\mathcal{P})], \tau_{s_{2}}(\mathcal{P})] \cdots \tau_{s_{n}}(\mathcal{P}) ] \rangle_{\beta,L} \\
&= \frac{(-i)^{n} e^{n \eta_{\beta} t}}{n!} \int_{[0,\beta)^{n}} d\underline{s}\, e^{-i\eta_{\beta}(s_{1} + \ldots +s_{n})}\langle {\bf T} \gamma_{s_{1}}(\mathcal{P}); \gamma_{s_{2}}(\mathcal{P}); \cdots; \gamma_{s_{n}}(\mathcal{P});\mathcal{O}\rangle_{\beta,L}\;.
\end{split}
\end{equation}
\end{proposition}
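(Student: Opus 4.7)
The plan is to establish the identity by a Wick rotation (complex deformation) argument applied term by term to the expansion of the nested commutator, with the quantization condition $\eta_\beta \in (2\pi/\beta)\mathbb{N}$ playing a double role: it guarantees that the KMS-induced boundary phases cancel when contours are rotated, and it eliminates all disconnected contributions at the end.

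The first step is to exploit the invariance of the Gibbs state under $\tau_t$ to substitute $s_i \mapsto s_i + t$ inside the trace. This extracts the global factor $e^{n\eta_\beta t}$ and reduces the integration to $-\infty \leq s_n \leq \ldots \leq s_1 \leq 0$ with $\tau_t(\mathcal{O})$ replaced by $\mathcal{O}$. Next, I would expand the $n$-fold nested commutator into a signed sum of $2^n$ monomials of the form $\pm\,\tau_{s_{i_1}}(\mathcal{P})\cdots\tau_{s_{i_k}}(\mathcal{P})\,\mathcal{O}\,\tau_{s_{j_1}}(\mathcal{P})\cdots\tau_{s_{j_{n-k}}}(\mathcal{P})$, indexed by the placement (left or right of $\mathcal{O}$) of each factor.

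For each such monomial, I would rotate every contour $s_m$ from the real half-line $(-\infty,0]$ onto the imaginary segment $\{-iu_m : u_m \in [0,\beta]\}$ via a large rectangular contour. In the finite-dimensional Fock space all operators are bounded and the integrand is entire in each $s_m$, so Cauchy's theorem applies with no residues; the vertical leg at $\mathrm{Re}\,s_m = -\infty$ is killed by the adiabatic factor $e^{\eta_\beta s_m}$; and the horizontal leg at imaginary part $\pm\beta$ is related to the leg at imaginary part $0$ by the KMS identity $\langle \gamma_\beta(A) B\rangle = \langle BA\rangle$, which cyclically permutes an outermost operator through $\rho_{\beta,\mu,L}$. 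The exponential weight $e^{\eta_\beta s_m}$ picks up a phase $e^{\pm i\eta_\beta\beta} = 1$ precisely by the quantization hypothesis, so the adiabatic factors on the two horizontal legs match. The direction of rotation (upward or downward by $\beta$) is chosen according to the position of $\tau_{s_m}(\mathcal{P})$ in the monomial so that the KMS reshuffle produces a string useful for recombination across monomials. Using $\tau_{-iu_m}(\mathcal{P}) = \gamma_{u_m}(\mathcal{P})$ (valid since $[\mathcal{P},\mathcal{N}]=0$) and $ds_m = -i\,du_m$, each monomial is transformed into an imaginary-time expression, contributing the factor $(-i)^n$ and the weight $e^{-i\eta_\beta\sum u_m}$.

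Summing the $2^n$ signed monomials, the KMS-reshuffled strings reorganize into the time-ordered connected correlation function $\langle\mathbf{T}\gamma_{u_1}(\mathcal{P});\ldots;\gamma_{u_n}(\mathcal{P});\mathcal{O}\rangle$: the disconnected pieces cancel thanks to $\int_0^\beta e^{-i\eta_\beta u}\,du = 0$ (the second essential use of the quantization), leaving only the cumulant. The prefactor $1/n!$ arises when the original simplex integration is symmetrized to the full cube in the $u_m$'s. The main obstacle is the combinatorial bookkeeping in this recombination step: one must match, monomial by monomial, the signs of the commutator expansion with the signs of the cumulant formula after the KMS-induced cyclic permutations, and verify that nothing remains outside the connected correlation. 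A clean alternative is induction on $n$, peeling off the outermost commutator with $\tau_{s_1}(\mathcal{P})$, performing a single Wick rotation of $s_1$, and invoking the $(n-1)$-th instance of the identity for the remaining integrations; this reduces the combinatorics to a repeated application of the $n=1$ argument, at the price of propagating the cumulant structure through the induction.
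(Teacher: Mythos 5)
The paper does not prove this proposition itself but defers to \cite{GLMP}, whose argument is precisely the strategy you outline: extract the factor $e^{n\eta_\beta t}$ by time-translation invariance, deform the real-time integrations onto imaginary segments of height $\beta$ (legitimate because finite-volume correlations are entire and the vertical leg at $\mathrm{Re}\,s=-\infty$ is killed by $e^{\eta_\beta s}$), use the KMS identity on the horizontal legs together with $e^{i\eta_\beta\beta}=1$, and discard disconnected contributions — so your plan is correct and essentially the same approach as the cited proof. Two small refinements are worth recording: for a disconnected block containing $k\geq 2$ insertions of $\mathcal{P}$ the cancellation uses imaginary-time translation invariance of that block together with $\int_0^\beta e^{-ik\eta_\beta s}\,ds=0$, not just the $k=1$ integral; and in the deformation the KMS boundary leg of one monomial reproduces its partner monomial rather than each of the $2^n$ monomials being rotated independently, which is exactly why the inductive version you mention at the end is the cleanest way to organize the bookkeeping.
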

We refer the reader to \cite{GLMP} for the proof of this identity, which holds for general lattice models in $d$ dimensions. Eq. (\ref{eq:wick}) introduces a very useful representation of the Duhamel series. In particular, for non-interacting models, all contributions to the Duhamel series can be computed starting from the two-point function (\ref{eq:gfou}) via Wick's rule. Following \cite{GLMP}, for general choices of $\eta>0$ we will use an approximation argument to replace $\eta$ with its best approximant $\eta_{\beta} \geq \eta$, based on Lieb-Robinson bounds for non-autonomous dynamics.

\section{Large scale response of $1d$ systems}\label{sec:main1d}

Let us define the full response as:
\begin{equation}\label{eq:resp1d}
\chi^{\beta, L}_{\nu}(x;\eta,\theta) := \frac{1}{\theta} \Big( \Tr j_{\nu,x} \rho(0) - \Tr j_{\nu,x} \rho_{\beta, \mu, L}\Big)\qquad \nu = 0, 1.
\end{equation}
%
%
%
The quantity $\chi^{\beta,L}_{0}$ is called the susceptibility, while the quantity $\chi^{\beta,L}_{1}$ is called the conductance. The next theorem gives precise information about these objects.
\begin{theorem}[Response of one-dimensional systems]\label{thm:main1d} Let $\theta = a\eta$. There exist constants $w,\gamma,\eta_{0} > 0$ such that, for all $0\leq\eta<\eta_{0}$ and $a \leq w\left|\log\eta\right|$, and for $\beta, L$ large enough:
\begin{equation}\label{eq:main1d}
\begin{split}
\chi^{\beta,L}_{\nu}(x;\eta,\theta) =-\sum_{\omega=1}^{N}\frac{v_{\omega}^{\nu}}{2\pi|v_{\omega}|}\int_\R \hat\mu_{\infty}(q)e^{iq\theta x}\frac{v_\omega q}{-i/a+v_\omega q}\,\frac{dq}{2\pi} + O(\eta^{\gamma}) + O\Big( \frac{1}{\eta^{3} \beta} \Big)
\end{split}
\end{equation}
where $v_{\omega}^{\nu}=\delta_{\nu,0}+ v_{\omega}\delta_{\nu,1}$. 
\end{theorem}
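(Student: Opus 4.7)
The plan is to begin from the Duhamel expansion of Proposition \ref{prop:duha} and to Wick-rotate it term by term using Proposition \ref{prop:wick}. Since the latter requires the adiabatic parameter to lie in $(2\pi/\beta)\mathbb{N}$, I would first replace the physical $\eta$ by its best approximant $\eta_{\beta}\geq \eta$, using Lieb--Robinson bounds on the non-autonomous dynamics to control the error in (\ref{eq:diffO2}) term by term; summing these corrections produces the $O(1/(\eta^{2}\beta))$ remainder. After rotation, $\chi^{\beta,L}_{\nu}(x;\eta,\theta)$ becomes an absolutely convergent series indexed by $n\geq 1$, whose $n$-th contribution is the time-ordered Euclidean cumulant of $n$ smeared density insertions $\mathcal{P}=\theta\sum_{y}\mu(\theta y)n_{y}$ together with one current/density insertion $j_{\nu,x}$, integrated over $[0,\beta)^{n}$ against the phase $e^{-i\eta_{\beta}(s_{1}+\cdots+s_{n})}$.

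For the non-interacting model, Wick's rule expresses every such cumulant as a single fermion loop built from the Green's function $g(\ul k)=(ik_{0}+\hat H(k)-\mu)^{-1}$ in (\ref{eq:gfou}). Passing to momentum space, each $\mathcal{P}$-insertion carries a small momentum of order $\theta$ coming from $\hat\mu_{\theta}$, while the loop sums are dominated by the Fermi points $k_{F}^{\omega}$, where $g$ is approximated by the chiral relativistic propagator $(ik_{0}+v_{\omega}(k-k_{F}^{\omega}))^{-1}$. I would split each loop into a sum over chiralities $\omega$ of a leading chiral contribution plus a more regular lattice remainder. The key input is the chiral cancellation: for the purely density loops with $n\geq 2$ legs the momentum-integrated chiral contribution at a single Fermi point vanishes identically, and for the mixed density-current loop with one $j_{\nu}$-insertion and $n\geq 2$ densities the same vanishing occurs thanks to the lattice continuity equation (\ref{eq:cons1d}), which links $\hat J_{0}$ and $\hat J_{1}$ via (\ref{eq:Jnudef}). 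This is the lattice analogue of the emergent anomalous chiral gauge symmetry of $1+1$-dimensional massless fermions, and I would prove it directly by a residue/contour analysis of the regularized chiral loop.

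The cancellation comes with combinatorial factors growing in $n$, so it can only be used up to a $\theta$-dependent order $n_{*}\sim w|\log\eta|$, which is precisely the source of the restriction $a\leq w|\log\eta|$. For $n>n_{*}$ I would invoke the uniform-in-$\beta,L$ convergence of the Duhamel series, established along the lines of \cite{GLMP} from the stability of the KMS state, to bound the tail by a small positive power of $\eta$; combined with the $\theta^{\alpha}$ gain per loop from the lattice-versus-chiral remainder for $n\leq n_{*}$, this yields the overall $O(\eta^{\gamma})$ error. All that survives is the $n=1$ Kubo term, whose explicit evaluation follows by linearising $\hat J_{\nu}(k,p)$ around each $k_{F}^{\omega}$, using (\ref{eq:cons1d}) to tie the $\nu=0$ and $\nu=1$ kernels together, and performing the Matsubara sum at the resulting chiral poles $ip_{0}+v_{\omega}q=0$: this produces the factor $v_{\omega}q/(-i/a+v_{\omega}q)$, summed against $\hat\mu_{\infty}(q)e^{iq\theta x}$ over the $N$ Fermi points. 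The main obstacle is implementing the chiral cancellation uniformly in $n$ up to $n_{*}(\theta)$: one has to track finitely many ultraviolet and infrared cutoffs simultaneously and verify that the breaking of the emergent chiral symmetry by the lattice regularization is genuinely subleading in $\theta$ at every order, so that the per-order gain survives multiplication by the combinatorial factors that are unavoidable in the Euclidean expansion.
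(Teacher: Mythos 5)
Your overall strategy coincides with the paper's (Wick rotation via $\eta_{\beta}$, reduction of each loop to chiral propagators near the Fermi points, a chiral cancellation up to a $\theta$-dependent order, summability of the tail, and a continuity-equation/Ward-identity input for the $n=1$ term), but the argument as written has genuine gaps at its two load-bearing points. First, the cancellation is asserted in too strong a form: with the ultraviolet cutoff $\chi^{\omega}_{\delta}$ that necessarily accompanies the extraction of the chiral propagator, the single-Fermi-point density loops do \emph{not} vanish identically. What is true (Proposition \ref{prp:loopcanc}) is that the sum over permutations of external momenta telescopes, via the cut-off Ward identity (\ref{eq:ward}), into an anomaly term $F^{\omega}_{\delta}$, Eq. (\ref{eq:defF}), whose support lies on the cutoff scale $\delta$ rather than on the scale $\theta$ of the external momenta; it is this support property, not an exact vanishing, that converts the naive $O(1)$ size of $\theta^{n-1}S_{n;\nu}/n!$ into $O(\theta^{n-1})$ (Proposition \ref{prp:loopest}), and it only works as long as $n(\theta^{1-\alpha}+a\theta)\lesssim\delta$, i.e.\ up to $n(\theta)\sim\theta^{\alpha-1}$ — a power of $1/\theta$, not $\sim|\log\eta|$. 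You correctly flag the cutoff-breaking of the chiral symmetry as ``the main obstacle'', but this is precisely the step that must be supplied; moreover your appeal to the lattice continuity equation for the mixed density--current loops is not how the paper proceeds (there the current vertex is reduced to the constant $v_{\omega}$ at the Fermi point, so the mixed loop is $v_{\omega}$ times a density loop, and the same cancellation applies), and you would need to show such a Ward-identity route actually closes.

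Second, the treatment of the tail and of the constraint $a\leq w|\log\eta|$ is off. For $n>n(\theta)$ you cannot simply ``invoke the uniform convergence of the Duhamel series along the lines of \cite{GLMP}'': that argument relies on a spectral gap, which is absent here. In the paper the large-$n$ bound comes from the fact that all temporal external momenta equal $\eta_{\beta}$, so the poles of the chiral propagators along the Matsubara direction are equally spaced and one can extract a $1/n!$ from the loop integral (second part of Proposition \ref{prp:loopest}, and Proposition \ref{lemma:j1} for the remainders); without this factorial gain the $n$-th term is only bounded by $(Ca)^{n}|\log\eta|$, which does not sum. Correspondingly, the restriction $a\leq w|\log\eta|$ does not originate from the order up to which the cancellation is used, but from summing the series: remainder and tail terms carry $C^{n}a^{n-1}$ against a single small factor $\eta\left|\log\eta\right|$ (or a $1/n!$), so one needs $e^{Ca}\eta\left|\log\eta\right|\to0$. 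Finally, two smaller omissions: the momentum-space regularization $\hat\mu_{\alpha,\theta}$ of the perturbation (which rules out inter-Fermi-point scattering and underlies the support argument above) must be introduced and its error controlled, and in the $n=1$ evaluation the non-singular constant $R_{1;\nu}(\underline{0})$ fixed by the continuity equation is essential to turn the bubble $\propto(-iq_{0}+v_{\omega}q)/(iq_{0}+v_{\omega}q)$ into the stated $v_{\omega}q/(-i/a+v_{\omega}q)$; your phrase about ``tying the kernels together'' gestures at this but does not carry it out.
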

\begin{corollary}\label{cor:1d} In particular, we can isolate three regimes.
\begin{enumerate}
\item Suppose that $a \to 0$ as $\eta\to0^{+}$. Then, 
\begin{equation}
\chi_{\nu}(x,\eta,\theta) = O(a) + O(\eta^{\gamma}) + O\Big( \frac{1}{\eta^{3} \beta} \Big)\;.
\end{equation}
\item Suppose that $a$ is constant in $\eta$. Then:
\begin{equation}
\begin{split}
&\chi^{\beta, L}_{\nu}(x;\eta,\theta) \\
&= - \sum_{\omega=1}^{N}\frac{v_{\omega}^{\nu}}{2\pi|v_{\omega}|}\int_\R \mu_{\infty}(\theta x-y)\Big[\delta(y)-\frac{1}{a|v_{\omega}|} e^{-y/av_{\omega}}\Theta(y/v_{\omega})\Big]\,dy+ O(\eta^{\gamma}) + O\Big( \frac{1}{\eta^{3} \beta} \Big)\;,
\end{split}
\end{equation}
where $\Theta$ is the Heaviside step function.
\item Finally, suppose that $a \to \infty$ as $\eta \to 0^{+}$. Then:
\begin{equation}
\chi^{\beta, L}_{\nu}(x;\eta,\theta)= - \mu_{\infty}(\theta x)\sum_{\omega=1}^{N}\frac{v_{\omega}^{\nu}}{2\pi|v_{\omega}|} + O(\eta^{\gamma}) + O\Big( \frac{1}{\eta^{3} \beta} \Big)\;.
\end{equation}
In particular,
\begin{equation}
\lim_{\eta\to0^{+}} \lim_{\beta \to \infty} \lim_{L\to \infty}\chi^{\beta, L}_{1}(x;\eta,\theta)=-\mu_{\infty}(0)\sum_{\omega=1}^{N}\frac{\mathrm{sgn}(v_{\omega})}{2\pi} = 0
\end{equation}
since the number of left chiral fermionic modes equals the number of right chiral modes, for short-ranged one-dimensional systems.
\end{enumerate}
\end{corollary}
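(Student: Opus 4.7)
The plan is to take the formula (\ref{eq:main1d}) of Theorem \ref{thm:main1d} as the starting point and analyze the explicit integral
\[
I_{\omega}(\theta x;a) := \int_{\R} \hat\mu_{\infty}(q)\, e^{iq\theta x}\, \frac{v_{\omega} q}{-i/a + v_{\omega} q}\, \frac{dq}{2\pi}
\]
in each of the three regimes. The algebraic identity
\[
\frac{v_{\omega} q}{-i/a + v_{\omega} q} \;=\; 1 \;-\; \frac{1/a}{1/a + i v_{\omega} q}
\]
is the organizing tool: the first term is the ``$a\to\infty$'' limit, the second is an $a$-dependent correction whose inverse Fourier transform is an exponential. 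Schwartz regularity of $\hat\mu_{\infty}$ (since $\mu_{\infty}$ is smooth and compactly supported) will justify all interchanges of limit and integral.

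For regime 1 ($a\to 0$), I would use
\[
\left|\frac{v_{\omega} q}{-i/a + v_{\omega} q}\right| \;=\; \frac{a|v_{\omega} q|}{\sqrt{1+a^{2}v_{\omega}^{2}q^{2}}} \;\leq\; a\,|v_{\omega} q|,
\]
so that $|I_{\omega}(\theta x;a)|\leq a\,|v_{\omega}|\int_{\R}|q||\hat\mu_{\infty}(q)|\,dq/(2\pi) = O(a)$, proving the stated bound. For regime 3 ($a\to\infty$), the same factor is bounded by $\min(1, 1/(a|v_{\omega} q|))$ in absolute value when comparing to $1$; dominated convergence (with dominating function $|\hat\mu_{\infty}(q)|$) shows the difference goes to $0$, giving $I_{\omega}(\theta x;a) = \mu_{\infty}(\theta x) + o(1)$, which when inserted in (\ref{eq:main1d}) yields the claimed expression modulo the $O(\eta^{\gamma}) + O(1/\eta^{2}\beta)$ error already present.

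For regime 2 ($a$ constant), I would compute the inverse Fourier transform of $\frac{1/a}{1/a + iv_{\omega}q}$ by residues. Writing $c=1/a>0$, the pole of $(c+iv_{\omega}q)^{-1}$ in the complex $q$-plane sits at $q = ic/v_{\omega}$, i.e.\ in the upper half-plane iff $v_{\omega}>0$. Closing the contour in the upper (resp.\ lower) half-plane for $y>0$ (resp.\ $y<0$) gives, uniformly in the sign of $v_{\omega}$,
\[
\int_{\R} \frac{e^{iqy}}{c+iv_{\omega}q}\,\frac{dq}{2\pi} \;=\; \frac{1}{|v_{\omega}|}\, e^{-yc/v_{\omega}}\, \Theta(y/v_{\omega}).
\]
Therefore $I_{\omega}(\theta x;a) = \mu_{\infty}(\theta x) - \int_{\R}\mu_{\infty}(\theta x - y)\, \frac{1}{a|v_{\omega}|}\, e^{-y/(a v_{\omega})}\,\Theta(y/v_{\omega})\,dy$ by the convolution theorem; rewriting $\mu_{\infty}(\theta x)$ as the $\delta(y)$-integral yields the formula of item 2.

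Finally, to obtain the limit statement, I would take the iterated limits in the order specified ($L\to\infty$, $\beta\to\infty$, $\eta\to 0^{+}$): the error $O(1/\eta^{2}\beta)$ vanishes in the $\beta$-limit and $O(\eta^{\gamma})$ vanishes in the $\eta$-limit. Using $a\leq w|\log\eta|$, we have $\theta x = a\eta x \to 0$ for any fixed $x$, so by continuity $\mu_{\infty}(\theta x)\to\mu_{\infty}(0)$. With $\nu=1$, the prefactor becomes $v_{\omega}/|v_{\omega}| = \mathrm{sgn}(v_{\omega})$; the claimed value $0$ then follows from the net-chirality-zero property recorded in Remark after Assumption~\ref{ass:A}, namely $\sum_{\omega=1}^{N}\mathrm{sgn}(v_{\omega})=0$ for short-ranged one-dimensional lattice models. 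There is no substantial obstacle here: the only delicate point is uniform control of the Fourier integral when $a$ grows, and that is immediate from the Schwartz decay of $\hat\mu_{\infty}$.
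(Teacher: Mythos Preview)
Your proposal is correct and follows essentially the same route as the paper: the paper (in the proof of Proposition~\ref{prop:linear}) also uses the algebraic splitting $\frac{v_\omega q}{-i/a+v_\omega q}=1+\frac{i/a}{-i/a+v_\omega q}$, computes the inverse Fourier transform of the second piece by closing contours (Cauchy's theorem) to obtain the one-sided exponential $\frac{1}{a|v_\omega|}e^{-y/(av_\omega)}\Theta(y/v_\omega)$, and reads off the three regimes from this convolution formula. Your dominated-convergence justification for regime~3 and the explicit bound $|I_\omega|\le a|v_\omega|\int|q||\hat\mu_\infty(q)|\,dq/(2\pi)$ for regime~1 are cleaner than the paper's one-line remarks, but the underlying argument is identical.
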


\begin{remark}
\begin{itemize}
\item[(i)] The above theorem justifies the validity of linear response for non-interacting one dimensional systems in the absence of a spectral gap, in the zero temperature limit and in the thermodynamic limit. Remarkably, the leading contribution to transport only depends on the Fermi velocities $v_{\omega}$; all the higher order contributions in the perturbation are taken into account by the $O(\eta^{\gamma})$ corrections.

\item[(ii)] The parameter $a = \theta / \eta$ defines the relative magnitude between the variations in time and in space of the perturbation. The regime $a\ll 1$ corresponds to the situation in which the system is effectively exposed to the perturbation for a shorter time, while $a\gg 1$ corresponds to a longer time. The choice $a=1$ defines a natural transition between the two regimes. It coincides with the Euler scaling, which, as mentioned in the introduction, is the starting point for hydrodynamic effective models. This framework can also be used to derive a linear response theory, see {\it e.g.} \cite{DVD} and references therein.

\item[(iii)] As the proof will show, due to the absence of a spectral gap all terms in the Duhamel expansion of the full response (\ref{eq:resp1d}) are apparently of the same order in $\theta$, and in particular of the same order of the linear response. If so, this would not allow to prove that the linear term gives the dominant contribution to (\ref{eq:resp1d}), since we are not making any assumption on the size of $\mu(\theta x)$. The validity of linear response arises thanks to a crucial cancellation taking place at higher orders in the Duhamel expansion. This cancellation allows to prove that all terms after the linear response, up to a large $\theta$-dependent order $n(\theta)$, are suppressed by an extra factor $\eta^{\gamma}$ with $\gamma>0$, and hence they vanish as $\eta\to 0$; this is ultimately due to a cancellation taking place in the scaling limit of the model, often called ``loop cancellation'' in quantum field theory, related to bosonization \cite{FGM2}. Concerning the orders $n$ larger that $n(\theta)$, we will prove that they give a small contribution, thanks to improved estimated on the Duhamel series, that allow to prove its summability.
\end{itemize}
\end{remark}

\section{Proof of Theorem \ref{thm:main1d}}\label{sec:proof1d}

\subsection{Auxiliary dynamics} Let $\eta_{\beta}$ be the smallest number in $(2\pi / \beta) \mathbb{N}$, such that $\eta_{\beta} \geq \eta$. We define:
\begin{equation}
\mathcal{H}_{\beta,\eta}(t) = \mathcal{H} + \theta e^{\eta_{\beta} t} \sum_{{\bf x} \in \Lambda_{L}} \mu_{\alpha}(\theta x) a^{*}_{{\bf x}} a_{{\bf x}}\;,
\end{equation}
where $\mu_{\alpha}(\theta x)$ is obtained from $\mu(\theta x)$ after cutting off large momenta, in a smooth way:
\begin{equation}\label{eq:mualpha}
\mu_{\alpha}(\theta x) = \frac{1}{L} \sum_{p \in B_{L}} e^{ipx} \hat\mu_{\alpha,\theta}(p)\;,\qquad \hat\mu_{\alpha,\theta}(p) = \hat \mu_{\theta}(p) \chi( \theta^{\alpha-1} |p|_{\mathbb{T}})\;,
\end{equation}
with $\alpha\in(0,1)$ and $\chi(\cdot) : \mathbb{R}^{+}\to [0;1]$ a smooth cutoff function such that:
\begin{equation}\label{eq:cutoff}
\chi(t)=
\begin{cases}
1 & t\leq1\\
0 & t\geq2.
\end{cases}
\end{equation}
With respect to the original perturbation, in the definition of the function $\mu_{\alpha}(\theta x)$ we are cutting off all momenta $p$ of norm greater than $2\theta^{1-\alpha}$. By the smoothness of the function $\mu(\cdot)$, this regularization will have a minor effect.

Let us denote by $\widetilde{\mathcal{U}}(t;s)$ the two-parameter unitary group generated by $\mathcal{H}_{\beta,\eta}(t)$,
\begin{equation}
i\partial_{t} \widetilde{\mathcal{U}}(t;s) = \mathcal{H}_{\beta,\eta}(t) \widetilde{\mathcal{U}}(t;s)\;,\qquad \widetilde{\mathcal{U}}(s;s) = \mathbbm{1}\;.
\end{equation}
The next result allows to control the error introduced by replacing the dynamics generated by $\mathcal{H}(\eta t)$ with the dynamics generated by $\mathcal{H}_{\beta,\eta}(t)$. 
\begin{proposition}[Approximation by the auxiliary dynamics]\label{prp:LR} Under the same assumptions of Theorem \ref{thm:main1d}, it follows that, for any $m\in \mathbb{N}$:
\begin{equation}\label{eq:LRprop}
\Big\| \widetilde{\mathcal{U}}(t;-\infty)^{*} \mathcal{O}_{X} \widetilde{\mathcal{U}}(t;-\infty) - \mathcal{U}(t;-\infty)^{*} \mathcal{O}_{X} \mathcal{U}(t;-\infty)   \Big\| \leq C_{m} \frac{\theta^{1 + \alpha (m -1)}}{\eta^{2}} + \frac{C\theta}{\eta^{3} \beta}\;.
\end{equation}
\end{proposition}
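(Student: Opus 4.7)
The strategy is to interpolate between the two dynamics via a Duhamel-type identity and control each source of error by combining Lieb-Robinson locality with a direct operator-norm bound. First I would set $A(t,s) := \mathcal{U}(t;s)^{*}\mathcal{O}_{X}\mathcal{U}(t;s)$ and consider the interpolant $g(s) := \widetilde{\mathcal{U}}(s;-\infty)^{*} A(t,s) \widetilde{\mathcal{U}}(s;-\infty)$, which satisfies $g(-\infty) = \mathcal{U}(t;-\infty)^{*}\mathcal{O}_{X}\mathcal{U}(t;-\infty)$ and $g(t) = \widetilde{\mathcal{U}}(t;-\infty)^{*}\mathcal{O}_{X}\widetilde{\mathcal{U}}(t;-\infty)$; differentiating in $s$ via the Heisenberg and Schr\"odinger equations gives
\[
\widetilde{\mathcal{U}}(t;-\infty)^{*} \mathcal{O}_{X} \widetilde{\mathcal{U}}(t;-\infty) - \mathcal{U}(t;-\infty)^{*} \mathcal{O}_{X} \mathcal{U}(t;-\infty) = i\int_{-\infty}^{t} \widetilde{\mathcal{U}}(s;-\infty)^{*} [\Delta\mathcal{H}(s), A(t,s)] \widetilde{\mathcal{U}}(s;-\infty)\, ds\;,
\]
with $\Delta\mathcal{H}(s) := \mathcal{H}_{\beta,\eta}(s)-\mathcal{H}(\eta s)$. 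Unitarity then reduces the claim to bounding $\int_{-\infty}^{t}\|[\Delta\mathcal{H}(s), A(t,s)]\|\,ds$.

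I would split $\Delta\mathcal{H}(s) = \theta e^{\eta s}(P_{\alpha}-P) + \theta(e^{\eta_{\beta} s}-e^{\eta s})P_{\alpha}$, where $P_{\alpha} = \sum_{\mathbf{y}}\mu_{\alpha}(\theta y) n_{\mathbf{y}}$ and $P$ is its analogue with $\mu$, so that the two terms isolate respectively the momentum cutoff $\mu\mapsto \mu_{\alpha}$ and the frequency quantization $\eta\mapsto\eta_{\beta}$. Two scalar inputs feed the argument: Schwartz decay of $\hat\mu_{\infty}$ yields $\|\mu_{\alpha}-\mu\|_{\infty}\leq C_{m}\theta^{\alpha(m-1)}$ for every $m\in\mathbb{N}$ (the excised momenta satisfy $|p/\theta|\gtrsim \theta^{-\alpha}$), while $|e^{\eta_{\beta}s}-e^{\eta s}|\leq (2\pi/\beta)|s|e^{\eta s}$ follows from the mean value theorem together with $\eta_{\beta}-\eta\leq 2\pi/\beta$.

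For the commutator bounds I would combine two estimates. First, the Lieb-Robinson bound for the non-autonomous flow $\mathcal{U}(t;s)$ (as in \cite{GLMP}), $\|[n_{\mathbf{y}}, A(t,s)]\|\leq C\|\mathcal{O}_{X}\| e^{-\kappa(d(\mathbf{y},X)-v|t-s|)_{+}}$, combined with summation over the $1d$ light cone gives $\|[\sum_{\mathbf{y}}f(\theta y) n_{\mathbf{y}}, A(t,s)]\|\leq C\|f\|_{\infty}(1+v|t-s|)\|\mathcal{O}_{X}\|$. Applied to the cutoff piece with $f = \mu_{\alpha}-\mu$, time integration against $e^{\eta s}(1+|s|)$ produces the first error term $C_{m}\theta^{1+\alpha(m-1)}/\eta^{2}$. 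Second, for the frequency-shift piece I would bypass locality and exploit that the $n_{\mathbf{y}}$ commute and $\mu_{\alpha}$ is essentially supported on $O(1/\theta)$ sites, giving $\|\theta P_{\alpha}\|\leq\theta\sum_{\mathbf{y}}|\mu_{\alpha}(\theta y)|\leq C\|\mu_{\infty}\|_{L^{1}}$; this yields a $|t-s|$-independent commutator bound, so that the $s$-integral telescopes to $\int_{-\infty}^{0}|e^{\eta_{\beta}s}-e^{\eta s}|\,ds = (\eta_{\beta}-\eta)/(\eta\eta_{\beta})\leq 2\pi/(\beta\eta^{2})$, delivering the second term in the stated bound.

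The main technical obstacle is establishing the Lieb-Robinson bound with velocity $v$, rate $\kappa$ and constant $C$ uniform in $\beta, L, \eta, \theta$ for the non-autonomous dynamics generated by $\mathcal{H}(\eta t)$. Since $H$ is finite-ranged and the time-dependent potential has sup-norm bounded by $\theta\|\mu_{\infty}\|_{\infty}$ (vanishing as $\theta\to 0$), the standard iteration scheme for time-dependent Hamiltonians produces the estimate with all parameters uniform, as carried out in \cite{GLMP} in the gapped setting; the argument is insensitive to the gap hypothesis and adapts directly to the present gapless context, after which the two commutator estimates assemble into (\ref{eq:LRprop}).
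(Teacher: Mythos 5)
Your interpolation identity, the splitting of $\Delta\mathcal{H}(s)$ into a momentum-cutoff piece and a frequency-quantization piece, and the treatment of the cutoff piece (Lieb--Robinson light cone of volume $1+v|t-s|$, $\|\mu-\mu_{\alpha}\|_{\infty}\leq C_{m}\theta^{\alpha(m-1)}$ from the rapid decay of $\hat\mu_{\infty}$, time integration against $\theta e^{\eta s}(1+|t-s|)$ giving $C_{m}\theta^{1+\alpha(m-1)}/\eta^{2}$) coincide with the paper's proof, which works with $\mathcal{U}_{\mathrm{I}}(t;s)=\mathcal{U}(t;s)\widetilde{\mathcal{U}}(t;s)^{*}$ and the terms $\mathcal{Q}_{1},\mathcal{Q}_{2}$; your remark that the Lieb--Robinson input needs no gap hypothesis is also correct and is exactly how the paper uses \cite{GLMP}.

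The gap is in the frequency-quantization term. You bound it by the \emph{global} operator norm, $\|\theta P_{\alpha}\|\leq\theta\sum_{\mathbf{y}}|\mu_{\alpha}(\theta y)|\leq C\|\mu_{\infty}\|_{L^{1}}$, which is $O(1)$, not $O(\theta)$: the sum over the $O(1/\theta)$ sites supporting the potential eats the prefactor $\theta$. Combined with $\int_{-\infty}^{0}|e^{\eta_{\beta}s}-e^{\eta s}|\,ds\leq 2\pi/(\beta\eta^{2})$ this yields $C/(\eta^{2}\beta)$, not the claimed $C\theta/(\eta^{2}\beta)$, so your argument does not prove the second term of \eqref{eq:LRprop} as stated; the assertion that it "delivers the stated bound" is off by a factor $\theta$. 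The paper keeps the factor $\theta$ precisely by \emph{not} discarding locality for this term: it estimates $\|[\widetilde{\mathcal{U}}(t;s)^{*}\mathcal{O}_{X}\widetilde{\mathcal{U}}(t;s),\mathcal{Q}_{1}(s)]\|$ via the Lieb--Robinson argument of \cite[Proposition 4.1]{GLMP}, where only the $O(1+|t-s|)$ sites in the light cone of $X$ contribute, each weighted $\theta\|\mu\|_{\infty}|e^{\eta s}-e^{\eta_{\beta}s}|$, rather than the full $\ell^{1}$ mass of the potential. Your weaker bound is not fatal for the qualitative conclusions (after dividing by $\theta$ it gives an error $O(1/(\theta\eta^{2}\beta))$, which still vanishes in the limit $\beta\to\infty$ taken before $\eta,\theta\to0$), but it would degrade the $O(1/(\eta^{2}\beta))$ error terms in Theorem \ref{thm:main1d} and does not establish Proposition \ref{prp:LR} in the stated form; to close the gap you should treat the $\mathcal{Q}_{1}$-type term with the same locality-based estimate you already use for the cutoff term, or invoke \cite[Proposition 4.1]{GLMP} as the paper does.
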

\begin{remark} Let us briefly comment on the reason for introducing the auxiliary dynamics. Let $\mathcal{A}$ be such that $[\mathcal{A}, \mathcal{N}] = 0$, and let $\tilde \tau_{t}(\mathcal{A}) = e^{\eta t} \tau_{t}(\mathcal{A})$, which we can view as a (non-unitary) regularization of the original dynamics, damped in the past. Then, if $\eta \in \frac{2\pi}{\beta} \mathbb{N}$, we observe that $\tilde \tau_{t}(\cdot)$ satisfies the KMS identity:
\begin{equation}\label{eq:KMSreg}
\langle \tilde \tau_{t}(\mathcal{A}) \mathcal{B}  \rangle_{\beta,L} = \langle \mathcal{B} \tilde \tau_{t + i\beta}(\mathcal{A}) \rangle_{\beta,L}\;,
\end{equation}
where we used the trivial but crucial identity $e^{\eta t} = e^{\eta (t + i\beta)}$. The KMS identity for the regularized dynamics (\ref{eq:KMSreg}) plays a key role in the mapping of the real time Duhamel series into an imaginary time expansion, which can be efficiently studied \cite{GLMP}. See Proposition \ref{prop:wick}. Due to the fact that the difference between any $\eta > 0$ and its best approximation $\eta_{\beta} \in \frac{2\pi}{\beta} \mathbb{N}$ vanishes as $\beta \to \infty$, this approach is suitable to study zero or low temperature systems (lower than some $\eta$-dependent value).

Concerning the momentum-space regularization of the perturbation, this will be used to rule out the scattering of different Fermi points, at orders in the expansion that are less that some $\theta$-dependent (high) order.
\end{remark}
\begin{proof}[Proof of Proposition \ref{prp:LR}] The proof of this proposition is a standard argument, based on Lieb-Robinson bounds. It is a slight generalization of an analogous result in \cite{GLMP}. We reproduce it for completeness. We start by writing:
\begin{equation}\label{eq:diffLR}
\begin{split}
&\Big\| \widetilde{\mathcal{U}}(t;-\infty)^{*} \mathcal{O}_{X} \widetilde{\mathcal{U}}(t;-\infty) - \mathcal{U}(t;-\infty)^{*} \mathcal{O}_{X} \mathcal{U}(t;-\infty)  \Big\| \\
&\qquad =  \Big\| \mathcal{O}_{X} - \mathcal{U}_{\text{I}}(t;-\infty)^{*} \mathcal{O}_{X} \mathcal{U}_{\text{I}}(t;-\infty) \Big\|\;,
\end{split}
\end{equation}
where $\mathcal{U}_{\text{I}}(t;s) = \mathcal{U}(t;s) \widetilde{\mathcal{U}}(t;s)^{*}$. Then, since:
\begin{equation}
i\partial_{s} \mathcal{U}_{\text{I}}(t;s) = -\mathcal{U}_{\text{I}}(t;s) \widetilde{\mathcal{U}}(t;s) \big( \mathcal{H}(\eta s) - \mathcal{H}_{\eta, \beta}(s) \big) \widetilde{\mathcal{U}}(t;s)^{*}
\end{equation}
we have:
\begin{equation}
\begin{split}
&\mathcal{O}_{X} - \mathcal{U}_{\text{I}}(t;-\infty)^{*} \mathcal{O}_{X} \mathcal{U}_{\text{I}}(t;-\infty) \\
&\quad = \int_{-\infty}^{t} ds\, \frac{d}{ds} \mathcal{U}_{\text{I}}(t;s)^{*} \mathcal{O}_{X} \mathcal{U}_{\text{I}}(t;s) \\
& \quad = i \int_{-\infty}^{t} ds\, \mathcal{U}_{\text{I}}(t;s)^{*} \Big[  \mathcal{O}_{X}, \widetilde{\mathcal{U}}(t;s) \big( \mathcal{H}(\eta s) - \mathcal{H}_{\eta, \beta}(s) \big) \widetilde{\mathcal{U}}(t;s)^{*}\Big]  \mathcal{U}_{\text{I}}(t;s)\;,
\end{split}
\end{equation}
which implies:
\begin{equation}\label{eq:LR3}
\begin{split}
&\Big\| \mathcal{O}_{X} - \mathcal{U}_{\text{I}}(t;-\infty)^{*} \mathcal{O}_{X} \mathcal{U}_{\text{I}}(t;-\infty) \Big\| \\
&\qquad \leq  \int_{-\infty}^{t} ds\,\Big\| \Big[ \widetilde{\mathcal{U}}(t;s)^{*} \mathcal{O}_{X} \widetilde{\mathcal{U}}(t;s),  \big( \mathcal{H}(\eta s) - \mathcal{H}_{\eta, \beta}(s) \big)\Big] \Big\| \\
&\qquad \leq \int_{-\infty}^{t} ds\,\Big( \Big\| \Big[ \widetilde{\mathcal{U}}(t;s)^{*} \mathcal{O}_{X} \widetilde{\mathcal{U}}(t;s), \mathcal{Q}_{1}(s)\Big] \Big\| + \Big\| \Big[ \widetilde{\mathcal{U}}(t;s)^{*} \mathcal{O}_{X} \widetilde{\mathcal{U}}(t;s), \mathcal{Q}_{2}(s)\Big] \Big\|\Big)\;,
 \end{split}
\end{equation}
where:
\begin{equation}
\begin{split}
\mathcal{Q}_{1}(s) &= \theta(e^{\eta s} - e^{\eta_{\beta} s}) \sum_{{\bf x} \in \Lambda_{L}} \mu(\theta x) a^{*}_{{\bf x}} a_{{\bf x}} \\
\mathcal{Q}_{2}(s) &= \theta e^{\eta_{\beta} s} \sum_{{\bf x} \in \Lambda_{L}} (\mu(\theta x) -\mu_{\alpha}(\theta x)) a^{*}_{{\bf x}} a_{{\bf x}}\;.
\end{split}
\end{equation}
Consider the second term. We have:
\begin{equation}\label{eq:Q2}
\begin{split}
&\Big\| \Big[ \widetilde{\mathcal{U}}(t;s)^{*} \mathcal{O}_{X} \widetilde{\mathcal{U}}(t;s), \mathcal{Q}_{2}(s)\Big] \Big\| \\&\qquad \leq \theta e^{\eta_{\beta} s} \sum_{{\bf x} \in \Lambda_{L}} \big| \mu(\theta x) - \mu_{\alpha}(\theta x) \big| \Big\| \Big[ \widetilde{\mathcal{U}}(t;s)^{*} \mathcal{O}_{X} \widetilde{\mathcal{U}}(t;s), n_{{\bf x}} \Big] \Big\| \\
&\qquad \leq \theta e^{\eta_{\beta} s} \| \mu - \mu_{\alpha} \|_{\infty} \sum_{{\bf x}\in \Lambda_{L}} \Big\| \Big[ \widetilde{\mathcal{U}}(t;s)^{*} \mathcal{O}_{X} \widetilde{\mathcal{U}}(t;s), n_{{\bf x}} \Big] \Big\|\;.
\end{split}
\end{equation}
By the Lieb-Robinson bound, we can further estimate:
\begin{equation}\label{eq:Q2b}
\Big\| \Big[ \widetilde{\mathcal{U}}(t;s)^{*} \mathcal{O}_{X} \widetilde{\mathcal{U}}(t;s), \mathcal{Q}_{2}(s)\Big] \Big\| \leq C_{X} \theta e^{\eta_{\beta} s} (1 + |t-s|) \| \mu - \mu_{\alpha} \|_{\infty}\;.
\end{equation}
The norm can be bounded as:
\begin{equation}
\| \mu - \mu_{\alpha} \|_{\infty} \leq \frac{1}{L} \sum_{p \in B_{L}} \big| \mu_{\theta}(p)\big| \big[1 - \chi( \theta^{\alpha-1} |p|_{\mathbb{T}})\big]\;.
\end{equation}
Next, by the smoothness of $\mu(x)$, we have $(1 + |q|_{\mathbb{T}}^{m}) |\hat \mu(q)| \leq C_{m}$, which easily gives:
\begin{equation}
\frac{1}{L} \sum_{p \in B_{L}} \big| \mu_{\theta}(p)\big| \big[1 - \chi( \theta^{\alpha-1} |p|_{\mathbb{T}})\big] \leq K_{m} \theta^{\alpha(m-1)}\;,\qquad\quad \forall m\in \mathbb{N}\;.
\end{equation}
Plugging this bound in (\ref{eq:Q2b}), we obtain:
\begin{equation}
\Big\| \Big[ \widetilde{\mathcal{U}}(t;s)^{*} \mathcal{O}_{X} \widetilde{\mathcal{U}}(t;s), \mathcal{Q}_{2}(s)\Big] \Big\| \leq C_{X} K_{m}  \theta e^{\eta_{\beta} s} (1 + |t-s|) \theta^{\alpha(m-1)}\;,
\end{equation}
which allows to estimate the corresponding contribution to (\ref{eq:LR3}) as:
\begin{equation}\label{eq:estQ1}
\int_{-\infty}^{t} ds\, \Big\| \Big[ \widetilde{\mathcal{U}}(t;s)^{*} \mathcal{O}_{X} \widetilde{\mathcal{U}}(t;s), \mathcal{Q}_{2}(s)\Big] \Big\| \leq C_{m} \frac{\theta^{1 + \alpha (m -1)}}{\eta^{2}}\;.
\end{equation}
Consider now the contribution due to $\mathcal{Q}_{1}(s)$ in the right-hand side of (\ref{eq:LR3}). This term can be estimated as in \cite[Proposition 4.1]{GLMP}, we omit the details. The result is:
\begin{equation}\label{eq:estQ2}
\int_{-\infty}^{t} ds\, \Big\| \Big[ \widetilde{\mathcal{U}}(t;s)^{*} \mathcal{O}_{X} \widetilde{\mathcal{U}}(t;s), \mathcal{Q}_{1}(s)\Big] \Big\| \leq  \frac{C\theta }{\eta^{3} \beta}\;.
\end{equation}
Combining (\ref{eq:diffLR}), (\ref{eq:LR3}), (\ref{eq:estQ1}), (\ref{eq:estQ2}), the final result (\ref{eq:LRprop}) follows.
\end{proof}
Therefore, by Proposition \ref{prp:LR} we have:
\begin{equation}\label{eq:approxdyn}
\Tr j_{\nu,x} \rho(t) = \Tr j_{\nu,x} \widetilde\rho(t) + \mathcal{E}^{\beta, L}_{\nu} (x,t;\eta, \theta)\;,
\end{equation}
where $\tilde \rho(t) = \widetilde{\mathcal{U}}(t;-\infty) \rho_{\beta, \mu, L} \widetilde{\mathcal{U}}(t;-\infty)^{*}$ and, by Proposition \ref{prp:LR}:
\begin{equation}\label{eq:estLR}
| \mathcal{E}^{\beta, L}_{\nu} (x,t;\eta, \theta) | \leq C_{m} \frac{\theta^{1 + \alpha (m -1)}}{\eta^{2}} + \frac{C\theta}{\eta^{3} \beta}\;.
\end{equation}
We require this error term is vanishing once divided by $\theta = a\eta$, recall (\ref{eq:resp1d}). To this end, it is sufficient to consider:
\begin{equation}\label{eq:errcond}
a=o\left(\eta^{\frac{2}{\alpha(m-1)}-1}\right).
\end{equation}
In order to leave room for $a \to \infty$ as $\eta \to 0^{+}$, we will choose $\alpha\in(0,1)$ and we will pick $m\in\N$ so that $\alpha(m-1)>2$. Later, we will have to introduce stronger constraints on $a$.

\subsection{Duhamel series of the auxiliary dynamics}

Consider the main term in the right-hand side of (\ref{eq:approxdyn}). By Proposition \ref{prop:wick}:
\begin{equation}\label{eq:duha1}
\begin{split}
& \Tr j_{\nu,x} \widetilde\rho(t) -  \Tr j_{\nu,x} \rho_{\beta, \mu, L} \\&\quad = \sum_{n\geq 1}\frac{(-1)^{n} e^{n \eta_{\beta} t}}{n!} \int_{[0,\beta)^{n}} d\underline{s}\, e^{-i\eta_{\beta}(s_{1} + \ldots +s_{n})}\langle {\bf T} \gamma_{s_{1}}(\widetilde{\mathcal{P}}); \cdots; \gamma_{s_{n}}(\widetilde{\mathcal{P}}); j_{\nu,x}\rangle_{\beta,L}\;.
 \end{split}
\end{equation}
It is convenient to rewrite the $n$-th order contribution to the expansion in Fourier space. Recalling the definition (\ref{eq:jmuphi}) and the form of the perturbation (\ref{eq:Pdef}),
\begin{equation}\label{eq:fou}
\begin{split}
&\int_{[0,\beta)^{n}} d\underline{s}\, e^{-i\eta_{\beta}(s_{1} + \ldots +s_{n})}\langle {\bf T} \gamma_{s_{1}}(\widetilde{\mathcal{P}}); \cdots; \gamma_{s_{n}}(\widetilde{\mathcal{P}}); j_{\nu,x}\rangle_{\beta,L}\\
&\quad = \int_{[0,\beta)^{n}} d\underline{s}\, e^{-i\eta_{\beta}(s_{1} + \ldots +s_{n})}\langle {\bf T} \gamma_{s_{1}}(j_{0}(\mu_{\alpha,\theta}));  \cdots; \gamma_{s_{n}}(j_{0}(\mu_{\alpha,\theta})); j_{\nu,x}\rangle_{\beta,L}\;.
\end{split}
\end{equation}
Next, using that:
\begin{equation}\label{eq:currfourier}
j_{0}(\mu_{\alpha,\theta}) = \frac{\theta}{L} \sum_{p \in B_{L}} \hat \mu_{\alpha,\theta}(-p) \hat\jmath_{\nu,p}\;,\qquad \hat\jmath_{\nu,p} = \sum_{{\bf x} \in \Lambda_{L}} e^{-ipx} \hat\jmath_{\nu,{\bf x}}\;,
\end{equation}
and writing, for $\underline{p} = (p_{0}, p_{1})$ with $p_{1} = p$ and $p_{0} = \eta_{\beta}$,
\begin{equation}\label{eq:densitytimefourier}
\hat n_{\underline{p}} := \int_{0}^{\beta} ds\, e^{-i\eta_{\beta} s} \gamma_{s}(\hat n_{p})\;,
\end{equation}
we can rewrite the right-hand side of (\ref{eq:fou}) as:
\begin{equation}\label{eq:fou2}
\begin{split}
&\int_{[0,\beta)^{n}} d\underline{s}\, e^{-i\eta_{\beta}(s_{1} + \ldots +s_{n})}\langle {\bf T} \gamma_{s_{1}}(j_{0}(\mu_{\alpha,\theta}));  \cdots; \gamma_{s_{n}}(j_{0}(\mu_{\alpha,\theta})); j_{\nu,x}\rangle_{\beta,L} \\
&\qquad = \frac{\theta^{n}}{L^{n}} \sum_{\{p_{i}\} \in B_{L}^{n}} \Big[\prod_{i=1}^{n} \hat\mu_{\alpha,\theta}(-p_{i})\Big] e^{ip_{n+1}x} \frac{1}{\beta L}\langle {\bf T}\, \hat n_{\underline{p}_{1}}\;; \cdots \;; \hat n_{\underline{p}_{n}}\;; \hat\jmath_{\nu,\underline{p}_{n+1}} \rangle_{\beta,L}\;,
\end{split}
\end{equation}
where $\underline{p}_{n+1} = -\underline{p}_{1} - \ldots - \underline{p}_{n}$. In order to derive (\ref{eq:fou2}), we used the space-time translation invariance of the Gibbs state. Next, by Wick's rule:
\begin{equation}\label{eq:corrwick}
\begin{split}
&\frac{1}{\beta L}\langle {\bf T}\, \hat n_{\underline{p}_{1}}\;; \cdots \;; \hat n_{\underline{p}_{n}}\;; \hat\jmath_{\nu,\underline{p}_{n+1}}  \rangle_{\beta,L} \\
&\quad = -\frac{1}{\beta L} \sum_{\pi\in S_{n}}\sum_{\underline{k} \in \mathbb{M}_{\beta} \times B_{L}} \Tr \Big[ \hat J_{\nu}(k, p_{n+1})\prod_{i = 1}^{n+1} g\Big(\underline{k} + \sum_{j <i} \underline{p}_{\pi(j)}\Big) \Big]
\end{split}
\end{equation}
where $S_{n}$ is the set of permutations $\pi$ of $\{1, \ldots, n\}$; $\mathbb{M}_{\beta} = (2\pi / \beta) (\mathbb{Z} + 1/2)$ is the set of the fermionic Matsubara frequencies, and $g(\underline{k})$ is given by (\ref{eq:gfou}).

In the following, it will be convenient to decompose the fermionic propagator in a singular plus a regular part. Let $\chi(\cdot)$ be a smooth cutoff function as in (\ref{eq:cutoff}). We write:
\begin{equation}\label{eq:gsplit}
\begin{split}
g(\underline{k}) &= g(\underline{k})\chi(|\hat H(k) - \mu|/\Delta) +  g(\underline{k})(1-\chi(|\hat H(k) - \mu|/\Delta)) \\
&= g_{\text{a}}(\underline{k}) + g_{\text{b}}(\underline{k})\;;
\end{split}
\end{equation}
the propagator $g_{\text{b}}$ satisfies the estimate:
\begin{equation}
\| \text{d}_{k_{0}}^{n_{0}} \text{d}_{k_{1}}^{n_{1}} g_{\text{b}}(\underline{k}) \| \leq \frac{C_{n_{0}, n_{1}}}{1+|k_{0}|^{n_{0}+1}}\qquad \text{uniformly in $k$.}
\end{equation}
Consider now the first term in (\ref{eq:gsplit}). It is:
\begin{equation}
g_{\text{a}}(\underline{k}) = \sum_{\omega=1}^{N} \frac{\chi(|e_{\omega}(k) - \mu|/\Delta)}{ik_{0}+e_{\omega}(k)-\mu} P_{\omega}(k)\;,
\end{equation}
with $P_{\omega}(k)$ a rank-one projector,
\begin{equation}
P_{\omega}(k):= |\xi_{\omega}(k) \rangle \langle \xi_{\omega}(k)|\qquad\qquad\hat H(k)\xi_{\omega}(k)=e_{\omega}(k)\xi_{\omega}(k)\;.
\end{equation}
We can further rewrite $g_{\text{a}}(\underline{k})$, as follows. By assumption, the Fermi point $k_{F}^{\omega}$ is not a critical point for $e_{\omega}$, that is $v_{\omega} =e'_{\omega}(k_{F}^{\omega})\neq 0$. Thus, setting $\underline{k} := (k_{0}, k_{1})$ and $\underline{k}_{F}^{\omega} := (0, k_{F}^{\omega})$,
\begin{equation}\label{eq:omegadelta}
\sum_{\omega=1}^{N} \frac{\chi(|e_{\omega}(k) - \mu|/\Delta)}{ik_{0}+e_{\omega}(k)-\mu} P_{\omega}(k) =\sum_{\omega=1}^{N} \frac{\chi^{\omega}_{\delta}(\underline k)}{D_{\omega}(\underline k)} P_{\omega}(k) + \sum_{\omega = 1}^{N} \Big( \frac{\chi(|e_{\omega}(k) - \mu|/\Delta)}{ik_{0}+e_{\omega}(k)-\mu}  -  \frac{\chi^{\omega}_{\delta}(\underline k)}{D_{\omega}(\underline k)} \Big) P_{\omega}(k)
\end{equation}
where: $\chi^{\omega}_{\delta}(\underline{k}):=\chi(\left\|\underline{k} - \underline{k}_{F}^{\omega}\right\|_{\omega}/\delta)$, $\|\underline q\|_{\omega}^{2}:=q_{0}^{2}+v_{\omega}^{2}|q|_{\mathbb{T}}^{2}$ and $D_{\omega}(\underline{k}_{F}^{\omega} + \underline q):=iq_{0}+v_{\omega}q$. The parameter $\delta$ is chosen so that $\delta \ll \inf_{\omega\neq\omega'}|k_{F}^{\omega}-k_{F}^{\omega'}|_{\mathbb{T}}$ and
\begin{equation}\label{eq:omegaomega'}
|e_{\omega}(k)-\mu|\leq \Delta\qquad \forall k\in\mathcal B^{\omega}_{2\delta}(k_{F}^{\omega}):=\{k\in [0,2\pi] : | k- k_{F}^{\omega} |_{\mathbb{T}} < 2\delta / |v_{\omega}|\}
\end{equation}
for all $\omega$. Therefore, we can rewrite:
\begin{equation}
\begin{split}
g(\underline{k}) &= g_{\text{s}}(\underline{k}) + g_{\text{r}}(\underline{k})\qquad \text{with}\\
 g_{\text{s}}(\underline{k}) &:= \sum_{\omega=1}^{N} \frac{\chi^{\omega}_{\delta}(\underline k)}{D_{\omega}(\underline k)} P_{\omega}(k)\\
 g_{\text{r}}(\underline{k}) &:= g_{\text{b}}(\underline{k}) + \sum_{\omega = 1}^{N} \Big( \frac{\chi(|e_{\omega}(k) - \mu|/\Delta)}{ik_{0}+e_{\omega}(k)-\mu}  -  \frac{\chi^{\omega}_{\delta}(\underline k)}{D_{\omega}(\underline k)} \Big) P_{\omega}(k)\;.
\end{split}
\end{equation}
It is not difficult to see that:
\begin{equation}\label{eq:estgr}
\| g_{\text{r}}(\underline{k})\| \leq \frac{C}{1+|k_{0}|}\;,\qquad \big\| \text{d}_{k_{\alpha}} g_{\mathrm{r}}(\underline{k}) \big\| \leq \sum_{\omega = 1}^{N} \frac{C}{1+|k_{0}|} \frac{1}{\|\underline{k} - \underline{k}_{F}^{\omega}\|}\;.
\end{equation}
Using the above splitting we can decompose the left-hand side of (\ref{eq:corrwick}) as:
\begin{equation}\label{eq:split1d}
\frac{1}{\beta L}\langle {\bf T}\, \hat n_{\underline{p}_{1}}; \cdots ; \hat n_{\underline{p}_{n}}; \hat\jmath_{\nu,\underline{p}_{n+1}}  \rangle_{\beta,L} = \widetilde S^{\beta,L}_{n;\nu}(\underline{p}_{1}, \ldots, \underline{p}_{n}) + \widetilde R^{\beta,L}_{n;\nu}(\underline{p}_{1}, \ldots, \underline{p}_{n})
\end{equation}
where $\widetilde S^{\beta, L}_{n;\nu}(\underline{p}_{1}, \ldots, \underline{p}_{n})$ collects the contribution due only to the singular propagators $g_{\mathrm{s}}$,
\begin{equation}\label{eq:tildes}
\widetilde S^{\beta, L}_{n;\nu}(\underline{p}_{1}, \ldots, \underline{p}_{n}) := -\frac{1}{\beta L} \sum_{\underline{k} \in \mathbb{M}_{\beta} \times B_{L}} \sum_{\pi\in S_{n}}\Tr \Big[ \hat J_{\nu}(k, p_{n+1})\prod_{i = 1}^{n+1} g_{\mathrm{s}}\Big(\underline{k} + \sum_{j < i} \underline{p}_{\pi(j)}\Big) \Big]\;,
\end{equation}
while $\widetilde R^{\beta, L}_{n;\nu}(\underline{p}_{1}, \ldots, \underline{p}_{n})$ contains at least one bounded propagator $g_{\mathrm{r}}$. 
%
%
%
%
%
The next lemma allows to rewrite $\widetilde S^{\beta,L}_{n;\nu}$ in a more explicit way, up to subleading terms. In the following, we shall use the notation:
\begin{equation}
\int_{\beta, L} \frac{d\underline{k}}{(2\pi)^{2}}\, [\cdots] := \frac{1}{\beta L}  \sum_{\underline{k} \in \mathbb{M}_{\beta} \times B_{L}} [\cdots]\;;
\end{equation}
in fact, as $\beta,L\to \infty$, the sum converges to an integral over $\underline{k} \in \mathbb{R} \times \mathbb{T}$.
\begin{lemma}[Structure of the singular part]\label{lemma:singular}
Let $v_{\omega}^{0} = 1$ and $v_{\omega}^{1} = v_{\omega}$. Then for any $p_{1},\dots,p_{n}$ in the support of $\hat\mu_{\alpha,\theta}$,
\begin{equation}\label{eq:splitting}
\widetilde S^{\beta,L}_{n;\nu}(\underline{p}_{1}, \ldots, \underline{p}_{n}) = S^{\beta,L}_{n;\nu}(\underline{p}_{1}, \ldots, \underline{p}_{n}) + T^{\beta,L}_{n;\nu}(\underline{p}_{1}, \ldots, \underline{p}_{n})
\end{equation}
where, for $g_{\omega}(\underline{k}) = \chi_{\delta}^{\omega}(\underline{k}) / D_{\omega}(\underline{k})$,
\begin{equation}\label{eq:singular}
\begin{split}
S^{\beta,L}_{n;\nu}(\underline{p}_{1}, \ldots, \underline{p}_{n}) &= - \sum_{\omega = 1}^{N} v_{\omega}^{\nu} \int_{\beta,L} \frac{d\underline{k}}{(2\pi)^{2}} \sum_{\pi\in S_{n}} \prod_{i=1}^{n+1} g_{\omega}\Big(\ul k+\sum_{j< i} \ul p_{\pi(j)}\Big)\\
T^{\beta,L}_{n;\nu}(\underline{p}_{1}, \ldots, \underline{p}_{n}) &= - \sum_{\omega = 1}^{N} \int_{\beta,L} \frac{d\underline{k}}{(2\pi)^{2}} \sum_{\pi\in S_{n}}f_{n,\omega}^{\nu}(k;p_{\pi(1)},\dots,p_{\pi(n)}) \prod_{i=1}^{n+1} g_{\omega}\Big(\ul k+\sum_{j< i} \ul p_{\pi(j)}\Big)\;,
\end{split}
\end{equation}
where:
\begin{equation}\label{eq:rembound}
|f_{n,\omega}^{\nu}(k;p_{1},\dots,p_{n})|\leq C_{\delta}\Big( \delta_{\nu,1}|k-k_{F}^{\omega}|_{\mathbb{T}}+ n\sum_{i=1}^{n} |p_{i}|_{\mathbb{T}}\Big)\;.
\end{equation}
\end{lemma}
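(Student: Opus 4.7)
The approach is to substitute the rank-one decomposition $g_s(\underline{k}) = \sum_{\omega} g_\omega(\underline{k})\, P_\omega(k)$ into the definition \eqref{eq:tildes} of $\widetilde S^{\beta,L}_{n;\nu}$, reduce the resulting trace using the rank-one nature of $P_\omega$, and Taylor-expand around the Fermi point to separate the leading part $S^{\beta,L}_{n;\nu}$ from the remainder $T^{\beta,L}_{n;\nu}$. The first and most model-dependent step is to show that only a single chirality contributes at each fixed order $n$: expanding $g_s$ produces a sum over labels $\omega_1,\ldots,\omega_{n+1}$ in which the factor $g_{\omega_i}(\underline{k} + \underline{q}_i)$, with $\underline{q}_i := \sum_{j<i} \underline{p}_{\pi(j)}$, is supported where $|k+q_i - k_F^{\omega_i}|_{\mathbb{T}}\le 2\delta/|v_{\omega_i}|$. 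Because $\hat\mu_{\alpha,\theta}$ has momentum support of size at most $2\theta^{1-\alpha}$, each $|q_i|_{\mathbb{T}}\le 2n\theta^{1-\alpha}$; provided $n\ll\delta\theta^{\alpha-1}$, all shifted momenta $k+q_i$ lie near a single Fermi point, forcing $\omega_1=\cdots=\omega_{n+1}=:\omega$.

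With a single chirality in place, the product of rank-one projectors telescopes to
\[
\prod_{i=1}^{n+1}P_\omega(k+q_i)=|\xi_\omega(k)\rangle\langle\xi_\omega(k-p_{n+1})|\cdot\prod_{i=1}^{n}\langle\xi_\omega(k+q_i)|\xi_\omega(k+q_{i+1})\rangle,
\]
since $q_{n+1}=p_1+\cdots+p_n=-p_{n+1}$ modulo $2\pi$. Taking the trace against $\hat J_\nu(k,p_{n+1})$ thus produces the scalar factor
\[
\langle\xi_\omega(k-p_{n+1})|\hat J_\nu(k,p_{n+1})|\xi_\omega(k)\rangle \cdot\prod_{i=1}^{n}\langle\xi_\omega(k+q_i)|\xi_\omega(k+q_{i+1})\rangle.
\]

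Next, I Taylor-expand both factors around the Fermi point. For $\nu=0$, $\hat J_0=\mathbbm{1}$ gives $\langle\xi_\omega(k-p_{n+1})|\xi_\omega(k)\rangle=1+O(|p_{n+1}|_{\mathbb{T}})$, with leading value $v_\omega^0=1$. For $\nu=1$, using $\hat H(k)|\xi_\omega(k)\rangle=e_\omega(k)|\xi_\omega(k)\rangle$ to let $\hat H(k)$ and $\hat H(k-p_{n+1})$ act on the Bloch states, one obtains
\[
\langle\xi_\omega(k-p)|\hat J_1(k,p)|\xi_\omega(k)\rangle=\frac{i[e_\omega(k)-e_\omega(k-p)]}{1-e^{-ip}}\,\langle\xi_\omega(k-p)|\xi_\omega(k)\rangle,
\]
which expands as $v_\omega+O(|k-k_F^\omega|_{\mathbb{T}})+O(|p_{n+1}|_{\mathbb{T}})$. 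The remaining product of overlaps equals $1+O(\sum_i|p_i|_{\mathbb{T}})$ to leading order, by writing each factor as $1+b_i$ with $|b_i|\le C|p_{\pi(i)}|_{\mathbb{T}}$ and expanding; the extra factor $n$ in \eqref{eq:rembound} is a harmless overestimate that absorbs the higher-order cross terms $b_{i_1}\cdots b_{i_k}$. Isolating the constant leading contribution $v_\omega^\nu$ defines $S^{\beta,L}_{n;\nu}$, and collecting the lower-order corrections yields $T^{\beta,L}_{n;\nu}$ with $f_{n,\omega}^\nu$ obeying \eqref{eq:rembound}, after using $|p_{n+1}|_{\mathbb{T}}\le\sum_i|p_i|_{\mathbb{T}}$.

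\textbf{Main obstacle.} The most delicate point is the chirality restriction in the first step: it is not an exact identity but only holds for orders $n\ll\delta\theta^{\alpha-1}$. For larger $n$ the off-diagonal contributions with $\omega_i\neq\omega_j$ reappear and would require separate control. This $\theta$-dependent threshold is precisely what will force the truncation of the Duhamel series at a high but finite order in the subsequent steps of the proof of Theorem \ref{thm:main1d}, the tail being then handled through the improved summability estimates mentioned in the remarks following that theorem.
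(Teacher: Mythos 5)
The core of your computation---inserting $g_{\text{s}}=\sum_{\omega}g_{\omega}P_{\omega}$ into \eqref{eq:tildes}, collapsing the product of rank-one projectors into scalar overlaps, and Taylor expanding the resulting trace around $p=0$ and $k=k_{F}^{\omega}$ (with the Feynman--Hellmann-type identity for $\nu=1$)---is essentially the paper's argument. However, your handling of the single-chirality reduction is flawed, and as written it proves only part of the statement. You demand that \emph{all} shifted momenta $k+q_{i}$, $\ul q_{i}:=\sum_{j<i}\ul p_{\pi(j)}$, lie near one Fermi point, which forces the cumulative-shift condition $n\ll\delta\theta^{\alpha-1}$, and you then assert that for larger $n$ mixed-chirality terms reappear. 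They do not: the constraint is only between \emph{adjacent} factors. The product $g_{\omega}(\ul k+\ul q_{i})\,g_{\omega'}(\ul k+\ul q_{i+1})$ involves spatial momenta differing by the single external momentum $p_{\pi(i)}$, with $|p_{\pi(i)}|_{\mathbb{T}}\leq 2\theta^{1-\alpha}$, while the supports of $\chi_{\delta}^{\omega}$ and $\chi_{\delta}^{\omega'}$ are separated by a distance of order $\inf_{\omega\neq\omega'}|k_{F}^{\omega}-k_{F}^{\omega'}|_{\mathbb{T}}-O(\delta)$, which is bounded below uniformly in $\theta$ and $n$ by the choice of $\delta$ after \eqref{eq:omegadelta}. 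Hence for $\theta$ small the adjacent product vanishes identically whenever $\omega\neq\omega'$, and propagating this along the chain forces a single chirality for \emph{every} $n$, with no $n$-dependent threshold; configurations in which the cumulative shift $q_{i}$ is large are harmless, since they simply make the same-chirality product vanish rather than producing cross-chirality terms.

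This matters because the lemma is stated and used for all $n$: the representation \eqref{eq:fullresp} and the large-order estimate \eqref{eq:estnlarge} of Proposition \ref{prp:loopest} rely on the decomposition \eqref{eq:splitting} also for $n\geq(1/4)\theta^{\alpha-1}$. The $\theta$-dependent truncation $n(\theta)$ you invoke in your ``main obstacle'' paragraph enters the proof of Theorem \ref{thm:main1d} for a different reason: it is the support bound \eqref{eq:lowerk} underlying the loop-cancellation estimate \eqref{eq:bubblesest} that fails at large $n$, not the chirality reduction of this lemma. So your proposal misattributes the origin of the truncation and leaves the lemma unproven exactly in the regime where \eqref{eq:estnlarge} needs it. With the adjacent-factor argument above the gap closes, and the remainder of your proof (the telescoping of projectors, the expansion of the overlaps with the factor $n$ absorbing cross terms, together with the trivial uniform bound on $f^{\nu}_{n,\omega}$ when $n\sum_{i}|p_{i}|_{\mathbb{T}}$ is of order one) goes through and matches the paper's Taylor-expansion step.
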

\begin{proof} We start with the following remark. Since all external momenta $\{p_{j}\}$ are in the support of $\hat\mu_{\alpha,\theta}$, we have that $|p_{j}|_{\mathbb{T}} \leq \theta^{1-\alpha}$. Furthermore, $g_{\omega}$ and $g_{\omega'}$ have disjoint support for $\omega \neq \omega'$, recall the discussion after (\ref{eq:omegaomega'}). Thus, for $\theta$ small enough, and for all $i=1,\dots, n$:
\begin{equation}
g_{\omega}(\ul k+\textstyle\sum_{j<i}\ul p_{j})g_{\omega'}(\ul k+\sum_{j<i+1}\ul p_{j})=0\quad\mathrm{if\ \omega'\neq\omega}.
\end{equation}
Hence, $\tilde S^{\beta, L}_{n;\nu}$ reduces to
\begin{equation}
\widetilde S^{\beta,L}_{n;\nu}(\underline{p}_{1}, \ldots, \underline{p}_{n})=-\sum_{\omega=1}^{N}\sum_{\pi\in S_{n}} \int_{\beta,L} \frac{d\underline{k}}{(2\pi)^{2}}\, \Big[ \prod_{i=1}^{n+1} g_{\omega} \Big(\ul k+\sum_{j< i} \ul p_{\pi(j)}\Big)\Big] t_{\omega}^{\nu}(k;p_{\pi(1)},\dots,p_{\pi(n)})
\end{equation}
with
\begin{equation}
t_{\omega}^{\nu}(k;p_{\pi(1)},\dots,p_{\pi(n)}):= \Tr\Big[\hat J_{\nu}(k, p_{n+1})\prod_{i=1}^{n+1} P_{\omega}\Big(k+\sum_{j< i} p_{\pi(j)}\Big)\Big]\;.
\end{equation}
Noticing that $t_{\omega}^{0}(k;0,\dots,0)\equiv1$ and $t_{\omega}^{1}(k;0,\dots,0)=e'_{\omega}(k)$, we Taylor expand $t_{\omega}^{\nu}(k;\cdot)$ around $(0,\dots,0)$, and $e'_{\omega}$ around $k_{F}^{\omega}$, obtaining
\begin{equation}
t_{\omega}^{\nu}(k;p_{1},\dots,p_{n})= v_{\omega}^{\nu} + f^{\nu}_{n,\omega}(k;p_{1},\dots,p_{n})\;;
\end{equation}
the bound (\ref{eq:rembound}) easily follows from the smoothness of $\hat J_{\nu}(k,p)$ and of $P_{\omega}(k)$.
\end{proof}
\begin{remark} This decomposition allows to isolate, at order $n$, the most singular contribution from the $n$-th order term in the Duhamel expansion. Indeed, for $\theta\propto\eta$, a simple rescaling shows that each summand appearing in $\theta^{n-1} S^{\beta,L}_{n;\nu}$ is $O(1)$, whereas the summands in $\theta^{n-1} T^{\beta,L}_{n;\nu}$ and $\theta^{n-1} \widetilde R^{\beta,L}_{n;\nu}$ are $o(1)$. We will see that the smallness of $\theta^{n-1} S^{\beta,L}_{n;\nu}$ will be guaranteed by a crucial cancellation.
\end{remark}
Let $R^{\beta,L}_{n;\nu} := \widetilde R^{\beta,L}_{n;\nu} + T^{\beta,L}_{n;\nu}$. We obtained the following representation for the Duhamel expansion of the response functions, recall (\ref{eq:duha1})-(\ref{eq:fou2}),
\begin{equation}\label{eq:fullresp}
\begin{split}
\chi^{\beta,L}_{\nu}(x;\eta,\theta) &=
-\sum_{n=1}^\infty \frac{(-\theta)^{n-1}}{n!} \frac{1}{L^{n}} \sum_{\{p_{i}\} \in B_{L}^{n}} \Big[\prod_{j=1}^{n} \hat\mu_{\alpha,\theta}(-p_{j})\Big] e^{ip_{n+1}x}\\&\quad \cdot [S^{\beta,L}_{n;\nu}(\underline{p}_{1},\dots, \underline{p}_{n})+R^{\beta,L}_{n;\nu}(\underline{p}_{1},\dots, \underline{p}_{n})] + E^{\beta,L}_{\nu}(x;\eta,\theta)\;,
\end{split}
\end{equation}
where: $\underline{p}_{i} := (\eta_{\beta}, p_{i})$ and $\ul p_{n+1} := -\sum_{i=1}^{n} \ul p_{i}$; the term $S^{\beta,L}_{n;\nu}$ is given by the first of (\ref{eq:singular}); the term $R^{\beta,L}_{n;\nu}(\underline{p}_{1},\dots, \underline{p}_{n})$ will be proven to be subleading as $\theta \to 0$; and, recall (\ref{eq:estLR}):
\begin{equation}\label{eq:errnu}
| E^{\beta,L}_{\nu}(x;\eta,\theta)|\leq  \frac{C\theta^{\alpha(m-1)}}{\eta^{2}} + \frac{C}{\eta^{3} \beta}\;.
\end{equation}
Our next task will be to evaluate explicitly the linear response, and to bound the higher order terms, starting from the identity (\ref{eq:fullresp}).
\subsection{Evaluation of the linear response}
In this subsection we will evaluate the main term in the expression for the response function, Eq. (\ref{eq:main1d}), as $\beta,L\to \infty$. The analysis is based on lattice conservation laws, regularity of correlations, and explicit computation of the scaling limit contribution. It has been already used to determine the linear response of $1d$ and quasi-$1d$ systems, \cite{BM1, BM2, MPdrude, AMP, MPmulti}, in a setting that also allow to include many-body interactions. For completeness, we shall reproduce all the steps here, for general $1d$ non-interacting systems.

Let $S_{n;\nu} = \lim_{\beta,L\to \infty}S^{\beta,L}_{n;\nu}$ and $R_{n;\nu} = \lim_{\beta,L\to \infty}R^{\beta,L}_{n;\nu}$. In this limit, the $n=1$ term in (\ref{eq:fullresp}) reads, setting $\ul p=\theta\ul q$, with $\ul q=(a^{-1},q)$,
\begin{equation}\label{eq:linresp}
\chi_{\nu}^{\mathrm{lin}}(x;\eta,\theta) = - \int_{\mathbb{T}_{\theta^{-1}}} \frac{dq}{2\pi}\, \hat\mu_\alpha(-q)e^{-i\theta q x} \big[S_{1;\nu}(\theta\ul q) + R_{1;\nu}(\theta\ul q)\big]\;,
\end{equation}
with $\hat \mu_{\alpha}(q) = \hat \mu(q) \chi( \theta^{\alpha} |q|)$, and $\mathbb{T}_{\theta^{-1}}$ is the torus $\mathbb{T}$ rescaled by a factor $1/\theta$. Let us start by discussing the $R_{1;\nu}$ term.
\begin{lemma}[Continuity of the remainder]\label{lemma:rem1}
For $\alpha\in(0,1)$, there exists $C_{\alpha} > 0$ such that:
\begin{equation}\label{eq:Rest}
|R_{1;\nu}(\underline{0})| \leq C,\qquad |R_{1;\nu}(\ul p)-R_{1;\nu}(\ul 0)|\leq C_{\alpha} \|\ul p\|^{\alpha}\;.
\end{equation}
\end{lemma}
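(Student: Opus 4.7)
The plan is to estimate the two pieces of $R_{1;\nu}=\widetilde R_{1;\nu}+T_{1;\nu}$ separately, using the bound (\ref{eq:estgr}) on the regular propagator $g_r$ and the bound (\ref{eq:rembound}) on the Taylor remainder $f^{\nu}_{1,\omega}$. For the uniform bound $|R_{1;\nu}(\underline 0)|\leq C$, the integrand of $T_{1;\nu}(\underline 0)$ equals $f^{\nu}_{1,\omega}(k;0)g_\omega(\underline k)^2$, which by (\ref{eq:rembound}) is dominated by $C\|\underline k-\underline k_F^\omega\|_\omega^{-1}$ on the support of $\chi_\delta^\omega$; this is locally integrable in the two-dimensional momentum-frequency space. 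Each term contributing to $\widetilde R_{1;\nu}(\underline 0)$ carries at least one factor $g_r$, whose bound in (\ref{eq:estgr}), combined with the $\|\underline k-\underline k_F^\omega\|^{-1}$ singularity of the remaining full propagator near the Fermi points and its $(1+|k_0|)^{-1}$ decay elsewhere, yields an integrable integrand.

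For the Hölder estimate, I fix small $\underline p$ and split the momentum domain into the near-Fermi region $A:=\bigcup_\omega\{\underline k:\|\underline k-\underline k_F^\omega\|_\omega<2\|\underline p\|\}$ and its complement $A^c$. On $A$, I bound the integrands of $R_{1;\nu}(\underline p)$ and $R_{1;\nu}(\underline 0)$ pointwise and integrate: by (\ref{eq:rembound}), $|f^{\nu}_{1,\omega}(k;\underline p)|\leq C\|\underline p\|$ throughout $A$, while the elementary scaling identity $\int_{\|y\|<2\|\underline p\|}\|y\|^{-1}\|y+\underline p\|^{-1}d^{2}y=O(1)$ makes the $T$-contribution from $A$ of order $\|\underline p\|$. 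The $\widetilde R$-contribution from $A$ is controlled similarly using the $(1+|k_0|)^{-1}$ decay of $g_r$ together with the $\|\cdot\|^{-1}$ singularity of the remaining full propagator, and is also $O(\|\underline p\|)$.

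On $A^c$ the integrands at $\underline p$ and at $\underline 0$ can be compared via the mean value theorem, writing the difference as a telescoping sum one factor at a time. By Taylor's theorem $|f^{\nu}_{1,\omega}(k;\underline p)-f^{\nu}_{1,\omega}(k;0)|\leq C\|\underline p\|$; the derivative bound $|g_\omega(\underline k+\underline p)-g_\omega(\underline k)|\leq C\|\underline p\|\|\underline k-\underline k_F^\omega\|^{-2}$ follows from the explicit form of $g_\omega$ together with $\|\underline k+s\underline p-\underline k_F^\omega\|\geq \tfrac12\|\underline k-\underline k_F^\omega\|$ on $A^c$, and the analogous estimate for $g_r$ is supplied by (\ref{eq:estgr}). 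Integrating these derivative estimates over $A^c$ produces the borderline two-dimensional integral $\int_{2\|\underline p\|<\|y\|<\delta}\|y\|^{-2}d^{2}y=O(|\log\|\underline p\||)$, so that the contribution from $A^c$ is $O(\|\underline p\|\,|\log\|\underline p\||)$. Adding the two regions yields $|R_{1;\nu}(\underline p)-R_{1;\nu}(\underline 0)|\leq C\|\underline p\|(1+|\log\|\underline p\||)$, which is majorized by $C_\alpha\|\underline p\|^\alpha$ for every $\alpha\in(0,1)$.

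The main technical obstacle is the bookkeeping of competing singularities: the Fermi-point singularity of $g$ is marginal in the two-dimensional integration $d k_0\, dk_1$, so the naive pointwise derivative estimates generate a logarithm. The decomposition at scale $\|\underline p\|$ is the standard device that trades one power of $\|\underline p\|$ for this logarithm, which is then absorbed into any Hölder exponent strictly less than one. It is important that the Taylor remainder $f^\nu_{1,\omega}(k;0)$ carries an extra factor $|k-k_F^\omega|$ when $\nu=1$ (and vanishes faster when $\nu=0$); this is exactly what renders the $T_{1;\nu}(\underline 0)$ integral locally integrable in spite of the squared singularity of $g_\omega^2$, and is the structural input from (\ref{eq:rembound}) that the whole argument relies on.
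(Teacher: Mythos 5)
Your argument is correct and rests on the same skeleton as the paper's proof: the splitting $R_{1;\nu}=\widetilde R_{1;\nu}+T_{1;\nu}$, the bound (\ref{eq:estgr}) for $g_{r}$, and the structural input (\ref{eq:rembound}) that $f^{\nu}_{1,\omega}(k;0)$ vanishes (for $\nu=0$) or carries a factor $|k-k_{F}^{\omega}|_{\mathbb{T}}$ (for $\nu=1$), which is exactly how the paper tames the $g_{\omega}^{2}$ singularity at $\underline p=\underline 0$. Where you genuinely diverge is in the mechanism for the modulus of continuity: the paper never splits the integration domain, but instead interpolates the propagator differences to fractional powers, using $\|g_{r}(\underline k+\underline p)-g_{r}(\underline k)\|\leq C\sum_{\omega}(1+|k_{0}|^{\alpha})^{-1}\|\underline p\|^{\alpha}\|\underline k-\underline k_{F}^{\omega}\|^{-\alpha}$ and an analogous two-sided fractional bound for $g_{\omega}(\underline k+\underline p)-g_{\omega}(\underline k)$, and then integrates these globally to land directly on $C_{\alpha}\|\underline p\|^{\alpha}$. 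You instead cut the momentum region at scale $\|\underline p\|$, use pointwise bounds near the Fermi points and mean-value (first-derivative) estimates away from them, and obtain the slightly stronger log-Lipschitz modulus $\|\underline p\|(1+|\log\|\underline p\||)$, which of course implies the stated Hölder bound for every $\alpha\in(0,1)$; the paper's interpolation route avoids the logarithm bookkeeping at the price of $\alpha$-dependent constants from the start. Both are standard and both work here. One small point you should make explicit: in $\widetilde R_{1;\nu}(\underline p)$ the vertex $\hat J_{\nu}(k,-p)$ also depends on $p$, so your telescoping on the far region must include the term $[\hat J_{\nu}(k,-p)-\hat J_{\nu}(k,0)]$; by smoothness of $\hat J_{\nu}$ this contributes only $O(\|\underline p\|)$ times an integrable factor (the paper disposes of it with the same one-line appeal to smoothness), so it is a matter of completeness rather than a gap.
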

\begin{proof} We have:
\begin{equation}
\widetilde R_{1;\nu}(\ul p)= \int_{\R\times \mathbb{T}} \frac{d\ul k}{(2\pi)^2} \Tr\Big(\hat J_{\nu}(k, -p)(g_{\mathrm{r}}(\ul k) g_{\mathrm{s}}(\ul k+\ul p)+g_{\mathrm{s}}(\ul k)g_{\mathrm{r}}(\ul k+\ul p)+g_{\mathrm{r}}(\ul k)g_{\mathrm{r}}(\ul k+\ul p))\Big)\;.
\end{equation}
By the estimate (\ref{eq:estgr}), $\widetilde R_{1;\nu}(\ul 0)$ is finite. Furthermore, by the smoothness of $\hat J_{\nu}(k, p)$ and using the bound, with $0<\alpha<1$,
\begin{equation}
\big\| g_{\mathrm{r}}(\underline{k} + \underline{p}) - g_{\mathrm{r}}(\underline{k}) \big\| \leq C\sum_{\omega} \frac{1}{1+|k_{0}|^{\alpha}}\frac{\|\underline{p}\|^{\alpha}}{\|\underline{k} - \underline{k}_{F}^{\omega}\|^{\alpha}}\,
\end{equation}
which follows from the second of (\ref{eq:estgr}), we easily get:
\begin{equation}
\Big| \widetilde R_{1;\nu}(\ul p) - \widetilde R_{1;\nu}(\ul 0) \Big| \leq C_{\alpha} \|\underline{p}\|^{\alpha}\;.
\end{equation}
Consider now the term $T_{1;\nu}$. From the estimate (\ref{eq:rembound}), and recalling that $g_{\omega}(\underline{k}) = \chi_{\delta}^{\omega}(\underline{k}) / D_{\omega}(\underline{k})$, we have:
\begin{equation}
\begin{split}
\Big| T_{1;\nu}(\underline{p}) \Big| &\leq C\sum_{\omega = 1}^{N} \int d\underline{k}\, \big| f_{1,\omega}^{\nu}(k;p)\big| | g_{\omega}(\underline{k} + \underline{p}) | | g_{\omega}(\underline{k})| \\
&\leq K \sum_{\omega = 1}^{N} \int d\underline{k}\, (|k - k_{F}^{\omega}|_{\mathbb{T}} + |p|_{\mathbb{T}}) | g_{\omega}(\underline{k} + \underline{p})| | g_{\omega}(\underline{k})| \\
&\leq \widetilde{K}(1 + \|\underline{p}\| \big|\log \|\underline{p}\|\big|)\;.
\end{split}
\end{equation}
Similarly,
\begin{equation}
\begin{split}
\Big| T_{1;\nu}(\underline{p}) - T_{1;\nu}(\underline{0}) \Big| &\leq K \sum_{\omega = 1}^{N} \int d\underline{k}\, |k - k_{F}^{\omega}|_{\mathbb{T}} | \big( g_{\omega}(\underline{k} + \underline{p}) - g_{\omega}(\underline{k})\big) | | g_{\omega}(\underline{k})| \\
&\quad + K \sum_{\omega = 1}^{N} \int d\underline{k}\, |p|_{\mathbb{T}} | g_{\omega}(\underline{k} + \underline{p}) | |g_{\omega}(\underline{k})|\;; \\
\end{split}
\end{equation}
the second term is bounded proportionally to $\|\underline{p}\| \left| \log \|\underline{p}\|\right|$. Consider the first term. Using that, for $0<\alpha<1$:
\begin{equation}
| g_{\omega}(\underline{k} + \underline{p}) - g_{\omega}(\underline{k}) | \leq C\Big(\frac{\chi_{\omega}(\underline{k})}{\| \underline{k} - \underline{k}_{F}^{\omega}\|^{1-\alpha}} + \frac{\chi_{\omega}(\underline{k} + \underline{p})}{\|\underline{k} - \underline{k}_{F}^{\omega} + \underline{p}\|^{1-\alpha}}\Big) \frac{\|\underline{p}\|^{\alpha}}{\|\underline{k} - \underline{k}_{F}^{\omega} + \underline{p}\|^{\alpha} \|\underline{k} - \underline{k}_{F}^{\omega} \|^{\alpha}} 
\end{equation}
we obtain:
\begin{equation}
\begin{split}
&\int d\underline{k}\, |k - k_{F}^{\omega}| | \big( g_{\omega}(\underline{k} + \underline{p}) - g_{\omega}(\underline{k})\big) | |g_{\omega}(\underline{k})|  \\
&\qquad \leq C\int d\underline{k}\,\Big(\frac{\chi_{\omega}(\underline{k})}{\| \underline{k} - \underline{k}_{F}^{\omega}\|^{1-\alpha}} + \frac{\chi_{\omega}(\underline{k} + \underline{p})}{\|\underline{k} - \underline{k}_{F}^{\omega} + \underline{p}\|^{1-\alpha}}\Big) \frac{\|\underline{p}\|^{\alpha}}{\|\underline{k} - \underline{k}_{F}^{\omega} + \underline{p}\|^{\alpha} \|\underline{k} - \underline{k}_{F}^{\omega}\|^{\alpha}} \\
&\qquad \leq \widetilde{C}_{\alpha} \|p\|^{\alpha}.
\end{split}
\end{equation}
Thus, we found:
\begin{equation}
\Big| T_{1;\nu}(\underline{p}) - T_{1;\nu}(\underline{0}) \Big| \leq C_{\alpha} \|\underline{p}\|^{\alpha}\;.
\end{equation}
This concludes the proof of (\ref{eq:Rest}).
\end{proof}
Next, we shall consider the singular term $S_{1;\nu}$ in (\ref{eq:linresp}), whose explicit form is
\begin{equation}\label{eq:S2bubble}
\begin{split}
S_{1;\nu}(\theta\ul q) &= -\sum_{\omega=1}^{N} v_{\omega}^{\nu} \int_{\R\times \mathbb{T}} g_{\omega}(\ul k)g_{\omega}(\ul k+\theta\ul q) \frac{d\ul k}{(2\pi)^2} \\
&\equiv \sum_{\omega=1}^{N} v_{\omega}^{\nu} \mathfrak B_{\delta}^{\omega}(\theta\ul q)\;,
\end{split}
\end{equation}
where $ \mathfrak B_{\delta}^{\omega}$ is the relativistic bubble diagram:
\begin{equation}\label{eq:bubble}
\mathfrak B_{\delta}^{\omega}(\theta\ul q):= -\int_{\R^{2}} \frac{\chi_{\delta}^{\omega}(\ul k)\chi_{\delta}^{\omega}(\ul k+\theta\ul q)}{D_{\omega}(\ul k)D_{\omega}(\ul k+\theta\ul q)} \frac{d\ul k}{(2\pi)^2}\;.
\end{equation}
The next proposition allows to compute it. The result is well-known, and we reproduce it here for completeness.
\begin{proposition}[The relativistic bubble diagram]\label{lemma:bubble} Let $\alpha \in (0,1)$. For any $\ul q=(q_{0},q)\neq\ul 0$ such that $\|\ul q\|\leq \theta^{-\alpha}$, we have:
\begin{equation}\label{eq:bubblediag}
\mathfrak B_{\delta}^{\omega}(\theta\ul q) = \frac{1}{4\pi |v_{\omega}|}\frac{-iq_0+v_{\omega}q}{iq_0+v_{\omega}q} + O(\theta^{1-\alpha})\;.
\end{equation}
\end{proposition}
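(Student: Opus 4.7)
The plan is to reduce $\mathfrak B_\delta^\omega$ to a standard isotropic Euclidean bubble via a change of variables, and then extract its leading behavior by a partial-fraction trick that collapses the two-propagator integral into a single Cauchy-type integral against a difference of shifted cutoffs. First I will shift $\ul k \mapsto \ul\ell = \ul k - \ul k_F^\omega$ to center the integration at the Fermi point, and then rescale $\ell_1 \mapsto \ell_1/v_\omega$ to absorb the Fermi velocity (producing a Jacobian $1/|v_\omega|$). With $\ul P := \theta(q_0, v_\omega q)$ and $P := iP_0 + P_1$, this gives $\mathfrak B_\delta^\omega(\theta\ul q) = \mathcal B(\ul P;\delta)/|v_\omega|$ with
\begin{equation*}
\mathcal B(\ul P;\delta) := -\int_{\R^{2}} \frac{\chi(\|\ul\ell\|/\delta)\,\chi(\|\ul\ell + \ul P\|/\delta)}{(i\ell_0+\ell_1)(i\ell_0+\ell_1+P)}\, \frac{d\ul\ell}{(2\pi)^{2}},
\end{equation*}
and the hypothesis $\|\ul q\| \leq \theta^{-\alpha}$ becomes $\|\ul P\| \leq C\theta^{1-\alpha}$, in particular $\|\ul P\| \ll \delta$.

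Next I will apply $\frac{1}{A(A+P)} = \frac{1}{P}\bigl(\frac{1}{A}-\frac{1}{A+P}\bigr)$ with $A = i\ell_0 + \ell_1$, and shift $\ul\ell \mapsto \ul\ell - \ul P$ in the second piece. This produces the symmetric representation
\begin{equation*}
\mathcal B(\ul P;\delta) = -\frac{1}{P}\int_{\R^{2}} \frac{\chi(\|\ul\ell\|/\delta)}{i\ell_0+\ell_1}\bigl[\chi(\|\ul\ell + \ul P\|/\delta) - \chi(\|\ul\ell - \ul P\|/\delta)\bigr]\, \frac{d\ul\ell}{(2\pi)^{2}}.
\end{equation*}
Antisymmetry in $\ul P$ kills the even terms of the Taylor expansion, so the bracket equals $2\,\ul P \cdot \nabla\chi(\|\ul\ell\|/\delta) + R(\ul\ell, \ul P)$, with $|R| \leq C\|\ul P\|^{3}/\delta^{3}$ supported on a thickening of the annulus $\{\|\ul\ell\| \in [\delta,2\delta]\}$ where $\chi'$ is nontrivial.

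To compute the leading piece I pass to polar coordinates $\ell_0 = r\sin\phi$, $\ell_1 = r\cos\phi$, under which the Cauchy kernel becomes $re^{i\phi}$ and the angular and radial integrals decouple. One finds $\int_0^{2\pi} e^{-i\phi}(P_0\sin\phi + P_1\cos\phi)\,d\phi = \pi(-iP_0 + P_1)$ and $\tfrac{1}{\delta}\int_0^\infty \chi(r/\delta)\chi'(r/\delta)\,dr = -\tfrac12$. Assembling these with the prefactor $-2/(P(2\pi)^{2})$ yields $\tfrac{1}{4\pi}(-iP_0+P_1)/(iP_0+P_1)$, which after restoring $\ul P = \theta(q_0, v_\omega q)$ and the $1/|v_\omega|$ Jacobian matches the right-hand side of (\ref{eq:bubblediag}).

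The main technical point is the quantitative remainder bound. Since $R$ is supported on the thickened annulus $\|\ul\ell\| \asymp \delta$, which is separated from the Cauchy singularity by a distance $\sim\delta$ (using $\|\ul P\| \ll \delta$), the integrand is bounded by $C\|\ul P\|^{3}/\delta^{4}$ on a region of area $O(\delta^{2})$, giving an $\ul\ell$-integral of order $\|\ul P\|^{3}/\delta^{2}$; dividing by $|P| = \|\ul P\|$ produces $O(\|\ul P\|^{2}/\delta^{2}) = O(\theta^{2(1-\alpha)})$, which is $O(\theta^{1-\alpha})$ for $\alpha \in (0,1)$. The care required is to employ the Lagrange integral form of Taylor's remainder so that the support condition $\|\ul\ell\| \asymp \delta$ is preserved uniformly in the interpolation parameter $\ul\xi \in B_{\|\ul P\|}$; this is precisely what the separation $\|\ul P\| \ll \delta$ delivers.
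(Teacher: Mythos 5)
Your proposal is correct and follows essentially the same route as the paper's proof: the partial-fraction (Ward-type) identity, the shift producing the antisymmetric difference of cutoffs, a Taylor expansion whose remainder is controlled by its support on the annulus $\|\ul\ell\|\asymp\delta$ away from the Cauchy singularity, and an explicit polar-coordinate evaluation of the leading term. The only differences are cosmetic: you pass to isotropic variables (absorbing $v_{\omega}$ with a Jacobian $1/|v_{\omega}|$) at the start rather than rescaling $\ul k\to\theta\ul k$, and you exploit oddness in $\ul P$ to get a third-order remainder, yielding $O(\theta^{2(1-\alpha)})$, slightly sharper than the stated $O(\theta^{1-\alpha})$.
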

\begin{proof} First of all, by performing the change of coordinates $\ul k\to \theta\ul k$ we obtain
\begin{equation}\label{eq:bubblescaling}
\mathfrak B_{\delta}^{\omega}(\theta\ul q) = \mathfrak B_{\delta/\theta}^{\omega}(\ul q)\;.
\end{equation}
Then, we rewrite:
\begin{equation}
\begin{split}
\mathfrak B_{\delta/\theta}^{\omega}(\ul q) &= -\int_{\R^{2}} \frac{\chi_{\delta/\theta}^{\omega}(\ul k)\chi_{\delta/\theta}^{\omega}(\ul k+\ul q)}{D_{\omega}(\ul k)D_{\omega}(\ul k+\ul q)} \frac{d\ul k}{(2\pi)^2} \\
&= - \frac{1}{D_{\omega}(\underline{q})} \int_{\R^{2}} \chi_{\delta/\theta}^{\omega}(\ul k)\chi_{\delta/\theta}^{\omega}(\ul k+\ul q)\left( \frac{1}{D_{\omega}(\ul k)} - \frac{1}{D_{\omega}(\underline{k} + \underline{q})} \right) \frac{d\ul k}{(2\pi)^2} \\
&= - \frac{1}{D_{\omega}(\underline{q})} \int_{\R^{2}} \frac{\chi_{\delta/\theta}^{\omega}(\ul k)\chi_{\delta/\theta}^{\omega}(\ul k+\ul q) - \chi_{\delta/\theta}^{\omega}(\ul k - \underline{q})\chi_{\delta/\theta}^{\omega}(\ul k)}{D_{\omega}(\underline{k})}\frac{d\ul k}{(2\pi)^2}\;.
\end{split}
\end{equation}
Observing that:
\begin{equation}
\chi_{\delta/\theta}^{\omega}(\ul k+\ul q) - \chi_{\delta/\theta}^{\omega}(\ul k - \underline{q}) = 2\underline{q} \cdot \nabla_{\underline{k}} \chi_{\delta/\theta}^{\omega}(\ul k) + r_{\omega}(\underline{k}, \underline{q})
\end{equation}
with $| r_{\omega}(\underline{k}, \underline{q}) | \leq C_{\delta} \theta^{2} \|\underline{q}\|^{2}$, we see that, for $\|\underline{q}\| \leq \theta^{-\alpha}$, using that $r_{\omega}(\underline{k}, \underline{q})$ is supported for $\|\underline{k}\| \sim (\delta / \theta)$:
\begin{equation}
\left| \frac{1}{D_{\omega}(\underline{q})} \int_{\R^{2}} \frac{\chi_{\delta/\theta}^{\omega}(\ul k)r_{\omega}(\underline{k}, \underline{q})}{D_{\omega}(\underline{k})} \frac{d\ul k}{(2\pi)^2} \right| \leq C_{\delta}\theta^{-\alpha + 1}
\end{equation}
which vanishes for $\theta \to 0$. Consider now the main term,
\begin{equation}
\begin{split}
&- \frac{2}{D_{\omega}(\underline{q})} \int_{\R^{2}} \frac{\chi_{\delta/\theta}^{\omega}(\ul k) \underline{q}\cdot \nabla_{\underline{k}} \chi_{\delta/\theta}^{\omega}(\ul k)}{D_{\omega}(\underline{k})} \frac{d\ul k}{(2\pi)^2} \\
&\qquad = - \frac{2q_{0}}{D_{\omega}(\underline{q})} \int_{\R^{2}} \frac{\chi_{\delta/\theta}^{\omega}(\ul k) \partial_{0} \chi_{\delta/\theta}^{\omega}(\ul k)}{D_{\omega}(\underline{k})} \frac{d\ul k}{(2\pi)^2}  - \frac{2q_{1}}{D_{\omega}(\underline{q})} \int_{\R^{2}} \frac{\chi_{\delta/\theta}^{\omega}(\ul k) \partial_{1} \chi_{\delta/\theta}^{\omega}(\ul k)}{D_{\omega}(\underline{k})} \frac{d\ul k}{(2\pi)^2}\;.
\end{split}
\end{equation}
After a rescaling and a change of variable in the $k_{1}$ variable, we get;
\begin{equation}
\begin{split}
&- \frac{2}{D_{\omega}(\underline{q})} \int_{\R^{2}} \frac{\chi_{\delta/\theta}^{\omega}(\ul k) \underline{q}\cdot \nabla_{\underline{k}} \chi_{\delta/\theta}^{\omega}(\ul k)}{D_{\omega}(\underline{k})} \frac{d\ul k}{(2\pi)^2} \\
&\quad =  - \frac{2q_{0}}{D_{\omega}(\underline{q})} \frac{1}{|v_{\omega}|}\int_{\R^{2}} \frac{\chi(\ul k) \partial_{0} \chi(\ul k)}{ik_{0} + k_{1}} \frac{d\ul k}{(2\pi)^2}  - \frac{2q_{1}}{D_{\omega}(\underline{q})} \frac{v_{\omega}}{|v_{\omega}|} \int_{\R^{2}} \frac{\chi(\ul k) \partial_{1} \chi(\ul k)}{ik_{0} + k_{1}} \frac{d\ul k}{(2\pi)^2}\;,
\end{split}
\end{equation}
where $\chi(\underline{k}) \equiv \chi(\|\underline{k}\|)$, see Eq. (\ref{eq:cutoff}). Then, one observes that this expression can be further rewritten as:
\begin{equation}
\begin{split}
&- \frac{2}{D_{\omega}(\underline{q})} \int_{\R^{2}} \frac{\chi_{\delta/\theta}^{\omega}(\ul k) \underline{q}\cdot \nabla_{\underline{k}} \chi_{\delta/\theta}^{\omega}(\ul k)}{D_{\omega}(\underline{k})} \frac{d\ul k}{(2\pi)^2} \\
&\qquad = - \frac{2(q_{0} + i v_{\omega}q_{1})}{D_{\omega}(\underline{q}) |v_{\omega}|} \int_{\R^{2}} \frac{\chi(\ul k) \partial_{0} \chi(\ul k)}{ik_{0} + k_{1}} \frac{d\ul k}{(2\pi)^2} \\
&\qquad = - \frac{2(q_{0} + i v_{\omega}q_{1})}{D_{\omega}(\underline{q}) |v_{\omega}|} \int_{\R^{2}} \frac{k_{0}}{\|\underline{k}\|}\frac{\chi(\|\ul k\|) \chi'(\|\ul k\|)}{ik_{0} + k_{1}} \frac{d\ul k}{(2\pi)^2} \\
&\qquad = - \frac{q_{0} + i v_{\omega}q_{1}}{D_{\omega}(\underline{q}) |v_{\omega}|} \int_{\R^{2}} \frac{k_{0}}{\|\underline{k}\|}\frac{(\chi^{2}(\|\ul k\|))'}{ik_{0} + k_{1}} \frac{d\ul k}{(2\pi)^2}\;. 
\end{split}
\end{equation}
The last integral can be computed switching to polar coordinates, and one finds:
\begin{equation}
- \frac{2}{D_{\omega}(\underline{q})} \int_{\R^{2}} \frac{\chi_{\delta/\theta}^{\omega}(\ul k) \underline{q}\cdot \nabla_{\underline{k}} \chi_{\delta/\theta}^{\omega}(\ul k)}{D_{\omega}(\underline{k})} \frac{d\ul k}{(2\pi)^2} = \frac{q_{0} + i v_{\omega}q_{1}}{D_{\omega}(\underline{q}) |v_{\omega}|} \frac{-i}{4\pi}
\end{equation}
which reproduces the main term in (\ref{eq:bubblediag}).
\end{proof}
The evaluation of the relativistic bubble diagram, combined with lattice conservation laws, allow to compute (\ref{eq:linresp}). 
\begin{proposition}[Evaluation of the linear response]\label{prop:linear} We have:
\begin{equation}\label{eq:linear}
\chi_{\nu}^{\mathrm{lin}}(x;\eta,\theta) =-\sum_{\omega=1}^{N}\chi_{\nu,\omega}\int_\R \hat\mu_{\infty}(q)e^{iq\theta x}\frac{v_\omega q}{-i/a+v_\omega q}\,\frac{dq}{2\pi} + O(\theta^{1-\alpha})\;,
\end{equation}
with $\chi_{\nu,\omega}=v_{\omega}^{\nu}/2\pi |v_\omega|$. Let $\theta = a \eta$. We can distinguish three regimes.
\begin{enumerate}
\item[(i)] Suppose that $a\to 0$ as $\eta\to0^{+}$. Then $\chi_{\nu}^{\mathrm{lin}}(x;\eta,\theta) = O(a)$.
\item[(ii)] Suppose that $a$ is constant in $\eta$. Then:
\begin{equation}\label{eq:realformula}
\chi_{\nu}^{\mathrm{lin}}(x;\eta,\theta)=-\sum_{\omega=1}^{N}\chi_{\nu,\omega}\int_\R \mu_{\infty}(\theta x-y)\Big[\delta(y)-\frac{1}{a|v_{\omega}|} e^{-y/av_{\omega}}\Theta(y/v_{\omega})\Big]\,dy+ O(\eta^{1-\alpha})\;.
\end{equation}
\item[(iii)] Suppose that $a\to \infty$ as $\eta \to 0^{+}$, so that $a\eta \to 0$. Then:
\begin{equation}\label{eq:ainfty}
\chi_{\nu}^{\mathrm{lin}}(x;\eta,\theta)= - \mu_{\infty}(\theta x)\sum_{\omega=1}^{N}\chi_{\nu,\omega} + O(\theta^{1-\alpha})\;.
\end{equation}
\end{enumerate}
\end{proposition}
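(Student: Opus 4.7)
The plan is to substitute the explicit expression for $S_{1;\nu}$ from Proposition \ref{lemma:bubble} and the Hölder continuity of $R_{1;\nu}$ from Lemma \ref{lemma:rem1} into (\ref{eq:linresp}) and then perform the algebra that isolates the claimed rational factor. After the change of variable $q\mapsto -q$, Proposition \ref{lemma:bubble} evaluated at $\underline q=(1/a,-q)$ gives
\begin{equation*}
S_{1;\nu}(\theta(1/a,-q)) \;=\; \sum_{\omega}\frac{v_\omega^\nu}{4\pi|v_\omega|}\,\frac{i/a+v_\omega q}{-i/a+v_\omega q}+ O(\theta^{1-\alpha})\;,
\end{equation*}
and the elementary identity $\frac{i/a+v_\omega q}{-i/a+v_\omega q}=\frac{2v_\omega q}{-i/a+v_\omega q}-1$ recasts this as $\sum_\omega\chi_{\nu,\omega}\frac{v_\omega q}{-i/a+v_\omega q}-\tfrac12\sum_\omega\chi_{\nu,\omega}$, up to the same $O(\theta^{1-\alpha})$ error. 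The problem is thereby reduced to showing that the spurious constant $-\tfrac12\sum_\omega\chi_{\nu,\omega}$ is exactly canceled by $R_{1;\nu}(\underline 0)$, after which the replacement $\hat\mu_{\alpha,\theta}\to\hat\mu_\infty$ and the extension $\mathbb T_{\theta^{-1}}\to\R$ (both controllable by the smoothness and rapid decay of $\hat\mu_\infty$, since $\mu_\infty$ is smooth and compactly supported) produce the integral in (\ref{eq:linear}).

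To determine $R_{1;\nu}(\underline 0)$ I will invoke an exact lattice identity instead of a direct computation. For any bosonic Matsubara frequency $p_0\in\frac{2\pi}{\beta}\Z\setminus\{0\}$ one has
\begin{equation*}
\hat n_{(p_0,0)}\;=\;\int_0^\beta e^{-ip_0 s}\gamma_s(\mathcal N)\,ds\;=\;\mathcal N\,\frac{1-e^{-ip_0\beta}}{ip_0}\;=\;0\;,
\end{equation*}
because $[\mathcal H-\mu\mathcal N,\mathcal N]=0$. Hence the finite-volume two-point function $\chi_{nj_\nu}^{\beta,L}(\eta_\beta,0)$ vanishes exactly, and the same holds for its $\beta,L\to\infty$ limit: $(S_{1;\nu}+R_{1;\nu})(\eta,0)=0$. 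Plugging $\underline q=(1/a,0)$ into Proposition \ref{lemma:bubble} gives $S_{1;\nu}(\eta,0)=-\tfrac12\sum_\omega\chi_{\nu,\omega}+O(\theta^{1-\alpha})$, whence $R_{1;\nu}(\eta,0)=\tfrac12\sum_\omega\chi_{\nu,\omega}+O(\theta^{1-\alpha})$, and the Hölder continuity of Lemma \ref{lemma:rem1} then yields $R_{1;\nu}(\underline 0)=\tfrac12\sum_\omega\chi_{\nu,\omega}+O(\eta^\alpha)$. The spurious constant from the first step therefore disappears up to the advertised error, proving (\ref{eq:linear}).

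The three regimes of the corollary follow by elementary asymptotics of the scaling function $F_a(q):=\frac{v_\omega q}{-i/a+v_\omega q}$. For $a\to 0$ the bound $|F_a(q)|\leq a|v_\omega q|$, combined with $\int_\R|q||\hat\mu_\infty(q)|\,dq<\infty$, yields $\chi_\nu^{\mathrm{lin}}=O(a)$. For $a$ constant I compute the inverse Fourier transform by splitting $F_a(q)=1-\frac{1}{av_\omega}\frac{1}{iq+1/(av_\omega)}$ and recognizing the second term as the Fourier transform of $\frac{1}{a|v_\omega|}e^{-y/(av_\omega)}\Theta(y/v_\omega)$ (the two signs of $v_\omega$ are handled symmetrically by reflection $y\mapsto -y$), producing (\ref{eq:realformula}). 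For $a\to\infty$ one has $F_a(q)\to 1$ pointwise with $|F_a(q)|\leq 1$, so dominated convergence gives the convolution against a Dirac mass, i.e.\ (\ref{eq:ainfty}). The hardest point throughout is the interplay between the direction-dependent singularity of $S_{1;\nu}$ at the origin and the regular but only Hölder-continuous $R_{1;\nu}$; it is precisely the operator identity $\hat n_{(p_0,0)}=0$, together with Lemma \ref{lemma:rem1}, that rigidly fixes the value of $R_{1;\nu}(\underline 0)$ and bypasses any direct evaluation of the regular part.
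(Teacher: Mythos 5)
Your proposal is correct and follows essentially the same route as the paper's proof: reduce, via the H\"older continuity of Lemma \ref{lemma:rem1} and the bubble evaluation of Proposition \ref{lemma:bubble}, to the relativistic bubble plus the constant $R_{1;\nu}(\underline 0)$, fix that constant by charge conservation rather than direct computation, and then obtain the three regimes by the same Fourier/contour asymptotics of $q\mapsto v_\omega q/(-i/a+v_\omega q)$. The only (harmless) variations are in how the constant is pinned down: you use the operator identity $\hat n_{(p_0,0)}=\mathcal N\int_0^\beta e^{-ip_0 s}ds=0$ at nonzero bosonic Matsubara frequency (together with $[\mathcal N,\gamma_s(\hat\jmath_{\nu,0})]=0$, so the time ordering is immaterial), which is equivalent to the paper's continuity-equation/Ward-identity argument at zero spatial momentum, and you evaluate the vanishing relation at $\underline p=(\eta,0)$ using H\"older continuity, which adds an extra $O(\eta^{\alpha})$ error and tacitly requires $1/a\leq\theta^{-\alpha}$ for Proposition \ref{lemma:bubble}, whereas the paper takes the exact $p_0\to 0$ limit of $S_{1;\nu}((p_0,0))+R_{1;\nu}((p_0,0))=0$; these discrepancies are absorbed in the $O(\eta^{\gamma})$ error of Theorem \ref{thm:main1d}.
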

\begin{proof}
By (\ref{eq:S2bubble}), we rewrite the linear response (\ref{eq:linresp}) as:
\begin{equation}
\begin{split}
\chi_{\nu}^{\mathrm{lin}}(x,\eta,\theta) &= - \int_{\mathbb{T}_{\theta^{-1}}} \frac{dq}{2\pi}\, \hat\mu_\alpha(-q)e^{-i\theta q x} \big[S_{1;\nu}(\theta\ul q) + R_{1;\nu}(\theta\ul q)\big] \\
&= -\int_{\R} \frac{dq}{2\pi}\, \hat\mu_{\infty}(-q)e^{-i\theta q x}\Big[  \sum_{\omega=1}^{N} v_{\omega}^{\nu} \mathfrak B_{\delta}^{\omega}(\theta\ul q) + R_{1;\nu}(\ul 0)\Big] + E_{\nu;1}(x;\eta,\theta)\;,
\end{split}
\end{equation}
where, by (\ref{eq:Rest}), and by the regularity properties of the function $\mu_{\infty}(x)$:
\begin{equation}
|E_{\nu;1}(x;\eta,\theta) | \leq K \theta\;. 
\end{equation}
Next, by (\ref{eq:bubblediag}), (\ref{eq:bubblescaling}), $\mathfrak B_{\delta}^{\omega}(\theta\ul q) = \mathfrak B_{\infty}^{\omega}(\ul q) + O(\theta^{1-\alpha})$, with:
\begin{equation}
\mathfrak B_{\infty}^{\omega}(\ul q)=\frac{1}{4\pi |v_{\omega}|}\frac{-iq_{0}+v_{\omega}q}{iq_{0}+ v_{\omega}q}\;;
\end{equation}
thus, recalling that $\underline{q} =  (a^{-1},q)$ and performing the change of variables $q\to-q$,
\begin{equation}\label{eq:linresp2}
\begin{split}
&\chi_{\nu}^{\mathrm{lin}}(x;\eta,\theta) \\&\quad = -\int_{\R} \frac{dq}{2\pi}\, \hat\mu_{\infty}(q)e^{i\theta q x}\Big[  \sum_{\omega=1}^{N} v_{\omega}^{\nu} \frac{1}{4\pi |v_{\omega}|}\frac{i/a+ v_{\omega}q}{-i/a+ v_{\omega}q} + R_{1;\nu}(\ul 0)\Big] + \sum_{i=1,2} E_{\nu;i}(x;\eta,\theta) 
\end{split}
\end{equation}
where:
\begin{equation}
| E_{\nu;2}(x;\eta,\theta) | \leq C\theta^{1-\alpha}\;.
\end{equation}
Let us now determine $R_{1;\nu}(\ul 0)$. We shall compute it by exploiting lattice conservation laws. Thanks to the continuity equation (\ref{eq:cons1d}), the following equality holds:
\begin{equation}
i\partial_{x_{0}}\langle\mathbf T n_{\ul x};j_{\nu,\ul y}\rangle_{\beta,L} + \mathrm d_{x} \langle\mathbf T j_{\ul x};j_{\nu,\ul y}\rangle_{\beta,L} = i\delta(x_{0}-y_{0}) \langle [n_{\ul x},j_{\nu,\ul y}]\rangle_{\beta,L}\;,
\end{equation}
where $\mathcal O_{\ul x}:=\gamma_{x_{0}}(\mathcal O_{x_{1}})$. The contact term on the right-hand side arises due to the definition of time-ordering. Taking the Fourier transform of left-hand side and right-hand side, we obtain:
\begin{equation}
p_{0} \frac{1}{\beta L}\langle\mathbf T \hat n_{\ul p};\hat\jmath_{\nu,-\ul p}\rangle_{\beta,L} + (1-e^{-ip}) \frac{1}{\beta L}\langle\mathbf T \hat\jmath_{\ul p};\hat\jmath_{\nu,-\ul p}\rangle_{\beta,L} = i\sum_{x\in\Z} e^{-ipx}\langle[n_{x},j_{\nu,0}]\rangle_{\beta,L}\;.
\end{equation}
Let $\underline{p} = (p_{0}, 0)$. Using that $[\sum_{x\in\Z} n_{x},j_{\nu,0}]=0$, we obtain $\langle\mathbf T \hat n_{\ul p};\hat\jmath_{\nu,-\ul p}\rangle_{\beta,L} = 0$. Hence, recalling that $(1/\beta L)\langle {\bf T} \hat n_{\ul p};\hat \jmath_{\nu,-\ul p}\rangle_{\beta,L} = S^{\beta,L}_{1;\nu}(\ul p)+R^{\beta,L}_{1;\nu}(\ul p)$, we obtain, in the $\beta, L \to \infty$ limit, and by the continuity of $R_{1;\nu}(\ul p)$ at $\underline{p} = \underline{0}$:
\begin{equation}
S_{1;\nu}((p_{0},0)) + R_{1;\nu}((p_{0},0))=0\qquad\Longrightarrow\qquad R_{1;\nu}(\ul 0)=-\lim_{p_{0}\to 0} S_{1;\nu}((p_{0},0))=\sum_{\omega=1}^{N} \frac{v_{\omega}^{\nu}}{4\pi|v_{\omega}|}\;.
\end{equation}
Plugging this identity in (\ref{eq:linresp2}), we obtain:
\begin{equation}\label{eq:lincomp}
\chi_{\nu}^{\mathrm{lin}}(x;\eta,\theta) =-\sum_{\omega=1}^{N}\frac{v_{\omega}^{\nu}}{2\pi |v_\omega|} \int_\R \frac{dq}{2\pi}\, \hat\mu_{\infty}(q)e^{iq\theta x}\frac{v_\omega q}{-i/a+v_\omega q} + \sum_{i=1,2} E_{\nu;i}(x;\eta,\theta)
\end{equation}
which proves (\ref{eq:linear}). Let us further analyze the integral. From (\ref{eq:lincomp}), it is clear that if $a \to 0$, the integral is vanishing. More generally, we can rewrite (\ref{eq:lincomp}) as:
\begin{equation}\label{eq:linx}
\chi_{\nu}^{\mathrm{lin}}(x;\eta,\theta) =-\sum_{\omega=1}^{N}\chi_{\nu,\omega}\Big[\mu_{\infty}(\theta x)+\frac{i}{a} \int_\R \frac{dq}{2\pi}\, \frac{\hat\mu_{\infty}(q)e^{iq\theta x}}{-i/a+v_\omega q}\Big] + \sum_{i=1,2} E_{\nu;i}(x;\eta,\theta)\;.
\end{equation}
To compute the integral, we employ the good decay properties of $\hat \mu_{\infty}(q)$ to rewrite it as:
\begin{equation}\label{eq:limQ}
\begin{split}
\int_\R \frac{\hat\mu_{\infty}(q)e^{iq\theta x}}{-i/a+v_\omega q}\,\frac{dq}{2\pi} &= \lim_{Q\to \infty} \int_{|q|\leq Q} \frac{\hat\mu_{\infty}(q)e^{iq\theta x}}{-i/a+v_\omega q}\,\frac{dq}{2\pi} \\
&= \lim_{Q\to \infty}  \int dy\, \mu_{\infty}(y) \int_{|q|\leq Q} \frac{e^{-iqy}e^{iq\theta x}}{-i/a+v_\omega q}\,\frac{dq}{2\pi} \\
&= \lim_{Q\to \infty}  \int dz\, \mu_{\infty}(z + \theta x) \int_{|q|\leq Q} \frac{e^{-iqz}}{-i/a+v_\omega q}\,\frac{dq}{2\pi}\;.
\end{split}
\end{equation}
Let $z>0$. By Cauchy theorem for holomorphic functions:
\begin{equation}
\begin{split}
\int_{|q|\leq Q} \frac{e^{-iqz}}{-i/a+v_\omega q}\,\frac{dq}{2\pi} &= \frac{1}{2\pi v_{\omega}} \int_{|q|\leq Q} \frac{e^{-iqz}}{-i/(av_{\omega}) + q}\, dq \\
&= -\frac{1}{2\pi v_{\omega}} (2\pi i) e^{z / (av_{\omega})} \mathbbm{1}(v_{\omega} < 0) + g^{+}_{Q}(z) 
\end{split}
\end{equation}
where $g^{+}_{Q}(z)$ is bounded uniformly in $z$ and $\lim_{Q\to \infty}g^{+}_{Q}(z) = 0$. Similarly, if $z<0$:
\begin{equation}
\int_{|q|\leq Q} \frac{e^{-iqz}}{-i/a+v_\omega q}\,\frac{dq}{2\pi} = \frac{1}{2\pi v_{\omega}} (2\pi i) e^{z / (av_{\omega})} \mathbbm{1}(v_{\omega} > 0) + g^{-}_{Q}(z) 
\end{equation}
with $g^{-}_{Q}(z)$ is bounded uniformly in $z$ and $\lim_{Q\to \infty}g^{-}_{Q}(z) = 0$. All in all, for $z\neq 0$:
\begin{equation}
\lim_{Q\to \infty} \int_{|q|\leq Q} \frac{e^{-iqz}}{-i/a+v_\omega q}\,\frac{dq}{2\pi} = \frac{i}{|v_{\omega}|} e^{z / (av_{\omega})} \mathbbm{1}((z/v_{\omega})<0).
\end{equation}
Plugging this in (\ref{eq:limQ}), we obtain, performing a change of variable $z\to -z$:
\begin{equation}
\int_\R \frac{\hat\mu_{\infty}(q)e^{iq\theta x}}{-i/a+v_\omega q}\,\frac{dq}{2\pi} =  \frac{i}{|v_{\omega}|} \int dz\, \mu_{\infty}(\theta x - z)  e^{-z / (av_{\omega})} \Theta(z/v_{\omega})
\end{equation}
with $\Theta(\cdot)$ the Heaviside step function, equal to $1$ for positive argument and zero otherwise. Inserting this formula in (\ref{eq:linx}), we get:
\begin{equation}
\begin{split}
\chi_{\nu}^{\mathrm{lin}}(x;\eta,\theta) &= -\sum_{\omega=1}^{N}\chi_{\nu,\omega}\Big[\mu_{\infty}(\theta x) - \frac{1}{a|v_{\omega}|} \int dz\, \mu_{\infty}(\theta x - z)  e^{-z / (av_{\omega})} \Theta(z/v_{\omega})\Big]\\&\quad + \sum_{i=1,2} E_{\nu;i}(x;\eta,\theta)
\end{split}
\end{equation}
which proves (\ref{eq:realformula}). Finally, the last claim (\ref{eq:ainfty}) follows after taking the $a\to \infty$ limit, and using that $\mu$ is integrable. This concludes the proof of Proposition \ref{prop:linear}.
\end{proof}

\subsection{Estimates for the higher order corrections}
Let:
\begin{equation}\label{eq:genbubble}
\mathfrak B_{n+1}^{\omega}(\ul p_{1},\dots,\ul p_{n}) := -\int_{\R^{2}} \prod_{i=1}^{n+1} g_{\omega}\Big(\ul k+\sum_{j< i} \ul p_{j}\Big) \frac{d\ul k}{(2\pi)^2}\;,
\end{equation}
with the understading that $\underline{p}_{n+1} = -\sum_{j=1}^{n} \underline{p}_{j}$. Thus, from (\ref{eq:singular}), in the $\beta, L\to \infty$ limit:
\begin{equation}\label{eq:bubbles}
S_{n;\nu}(\ul p_{1}, \ldots, \underline{p}_{n}) = \sum_{\omega=1}^{N} v^{\nu}_{\omega}\sum_{\pi\in S_{n}} \mathfrak B_{n+1}^{\omega}(\ul p_{\pi(1)},\dots,\ul p_{\pi(n)})\;.
\end{equation}
Our main goal will be to show that the are non-trivial cancellations in the sum over permutations.  These cancellations are ultimately implied by the identity:
\begin{equation}\label{eq:ward}
g_{\omega}(\underline{k}) g_{\omega}(\underline{k} + \underline{p}) = \frac{1}{D_{\omega}(\ul p)} \Big(g_{\omega}(\underline{k}) - g_{\omega}(\underline{k} + \underline{p})\Big) + F^{\omega}_{\delta}(\ul k;\ul p)
\end{equation}
where:
\begin{equation}\label{eq:defF}
F^{\omega}_{\delta}(\ul k;\ul p):= \frac{\chi_{\delta}^{\omega}(\ul k)[\chi_{\delta}^{\omega}(\ul k+\ul p)-1]}{D_{\omega}(\ul k)D_{\omega}(\ul k+\ul p)}+\frac{\chi_{\delta}^{\omega}(\ul k+\ul p)-\chi_{\delta}^{\omega}(\ul k)}{D_{\omega}(\ul p)D_{\omega}(\ul k+\ul p)}\;.
\end{equation}
This identity is a tree-level Ward identity, for the vertex function. It can be viewed as generated by the invariance of the relativistic $1+1$ dimensional model under chiral gauge transformations. The term $F^{\omega}_{\delta}(\ul k;\ul p)$ in the right-hand side introduces an anomaly in the usual Ward identity, due to the presence of the momentum cut-off. The starting point is the following proposition. A similar cancellation, for chiral fermions in the absence of UV cutoff, has been discussed in \cite{FGM2}.
\begin{proposition}[Loop cancellation.]\label{prp:loopcanc}
For any $\omega=1,\dots, N$, the following holds:
\begin{equation}\label{eq:sumloops}
\begin{split}
&\sum_{\pi\in S_{n}} \mathfrak B_{n+1}^{\omega}(\ul p_{\pi(1)},\dots,\ul p_{\pi(n)}) \\
&\qquad = -\frac{1}{n+1}\sum_{\pi\in S_{n+1}}\int F_{\delta}^{\omega}(\ul k;\ul p_{\pi(1)})\prod_{i=3}^{n+1} g_{\omega}\Big(\ul k+\sum_{j<i} \ul p_{\pi(l)}\Big)\frac{d\ul k}{(2\pi)^2}\;.
\end{split}
\end{equation}
\end{proposition}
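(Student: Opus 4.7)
The plan is to prove the identity by two structural moves: rewrite the left-hand side using the cyclic symmetry of the bubble diagram, then apply the Ward identity (\ref{eq:ward}) to each summand, isolating the $F$-term and showing that the remaining telescoping contribution cancels once summed over permutations.

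First, by translation invariance of the integration variable $\ul k$, together with the constraint $\sum_{j=1}^{n+1}\ul p_{j}=0$ built into the definition of $\mathfrak B_{n+1}^{\omega}$, the bubble $\mathfrak B_{n+1}^{\omega}(\ul p_{\pi(1)},\ldots,\ul p_{\pi(n)})$ is invariant under cyclic rotations of the full $(n+1)$-tuple $(\ul p_{\pi(1)},\ldots,\ul p_{\pi(n+1)})$. Since each cyclic orbit in $S_{n+1}$ has cardinality $n+1$, we obtain
\[
\sum_{\pi\in S_{n}} \mathfrak B_{n+1}^{\omega}(\ul p_{\pi(1)},\ldots,\ul p_{\pi(n)}) = \frac{1}{n+1}\sum_{\pi\in S_{n+1}} \mathfrak B_{n+1}^{\omega}(\ul p_{\pi(1)},\ldots,\ul p_{\pi(n)}).
\]
I then apply the Ward identity (\ref{eq:ward}) to the product $g_{\omega}(\ul k)\,g_{\omega}(\ul k+\ul p_{\pi(1)})$ of the first two propagators in each summand. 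This decomposes every bubble into an $F$-piece which, once summed over $\pi\in S_{n+1}$ and divided by $n+1$, reproduces exactly the right-hand side of (\ref{eq:sumloops}), and a telescoping piece of the form $-D_{\omega}(\ul p_{\pi(1)})^{-1}\int[g_{\omega}(\ul k)-g_{\omega}(\ul k+\ul p_{\pi(1)})]\prod_{i=3}^{n+1}g_{\omega}(\ul k+\sum_{j<i}\ul p_{\pi(j)})\,d\ul k/(2\pi)^{2}$.

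The heart of the argument is the cancellation of the telescoping sum. After the shift $\ul k\to\ul k-\ul p_{\pi(1)}$ in the term containing $g_{\omega}(\ul k+\ul p_{\pi(1)})$, the telescoping piece rewrites as $D_{\omega}(\ul p_{\pi(1)})^{-1}\bigl[\mathfrak B_{n}^{\omega}(A_{\pi})-\mathfrak B_{n}^{\omega}(B_{\pi})\bigr]$, where $A_{\pi}$ is the reduced $n$-bubble with $\ul p_{\pi(1)}$ fused to its cyclic successor $\ul p_{\pi(2)}$ (implicit closing $\ul p_{\pi(n+1)}$) and $B_{\pi}$ the one with $\ul p_{\pi(1)}$ fused to its cyclic predecessor $\ul p_{\pi(n+1)}$ (implicit closing $\ul p_{\pi(1)}+\ul p_{\pi(n+1)}$). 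I then regroup the sum according to the cyclic class of the resulting $n$-bubble: for each unordered pair $\{a,b\}\subset\{1,\ldots,n+1\}$ and each ordered tuple $(c_{1},\ldots,c_{n-1})$ of the complement, the cyclic invariance of the $n$-bubble implies that exactly four permutations produce the common value $\mathfrak B_{n}^{\omega}(\ul p_{a}+\ul p_{b},\ul p_{c_{1}},\ldots,\ul p_{c_{n-2}})$: two $A$-mergers $(\pi(1),\pi(2))\in\{(a,b),(b,a)\}$ with coefficients $+1/D_{\omega}(\ul p_{a}),+1/D_{\omega}(\ul p_{b})$, and two $B$-mergers $(\pi(1),\pi(n+1))\in\{(a,b),(b,a)\}$ with coefficients $-1/D_{\omega}(\ul p_{a}),-1/D_{\omega}(\ul p_{b})$. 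These four coefficients sum to zero. A counting check, $4\cdot\binom{n+1}{2}(n-1)!=2(n+1)!=2\cdot|S_{n+1}|$, confirms that each of the two telescoping terms $(A,B)$ attached to each $\pi\in S_{n+1}$ is accounted for exactly once.

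The main technical obstacle is the bookkeeping of this final cancellation: the cyclic invariance of $\mathfrak B_{n}^{\omega}$ must be used to identify the $A$-merger of one permutation with the $B$-merger of a different one, and one has to verify that the only cyclic rotation of the reduced bubble compatible with both decompositions is the one placing the merged element $\ul p_{a}+\ul p_{b}$ at the first slot (so no overcounting occurs). Once this matching is in place, the cancellation is purely algebraic, and combined with the identification of the $F$-piece it yields (\ref{eq:sumloops}).
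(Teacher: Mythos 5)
Your proposal is correct and follows essentially the same route as the paper: cyclic invariance of the bubble to pass from $S_{n}$ to $\frac{1}{n+1}\sum_{S_{n+1}}$, the Ward identity (\ref{eq:ward}) applied to the first pair of propagators, identification of the $F$-piece with the right-hand side, and cancellation of the telescoping $\mathfrak B_{n}^{\omega}$ differences over the permutation sum. The only (harmless) difference is bookkeeping in the last step: you cancel in quadruples indexed by an unordered pair $\{a,b\}$ and an ordered complement, whereas the paper cancels at fixed $\pi(1)=i$ via a bijection shifting the tail of the permutation; both rely on the same cyclic invariance of the reduced bubble.
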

\begin{proof} To begin, we define:
\begin{equation}
\mathfrak B_{n+1}^{\omega}(\ul p_{1},\dots,\ul p_{n}, \ul p_{n+1}) := \mathfrak B_{n+1}^{\omega}(\ul p_{1},\dots,\ul p_{n})\;.
\end{equation}
%
Observe that $\mathfrak B_{n+1}^{\omega}(\ul p_{1},\dots,\ul p_{n}, \ul p_{n+1})$ is invariant under the abelian subgroup of maximal cycles $\sigma$ in $\{1,2,\ldots, n+1\}$, 
\begin{equation}\label{eq:cycles}
\mathfrak B_{n+1}^{\omega}(\ul p_{1},\dots,\ul p_{n+1})=\mathfrak B_{n+1}^{\omega}(\ul p_{\sigma(1)},\dots,\ul p_{\sigma(n)},\ul p_{\sigma(n+1)})\;.
\end{equation}
In fact, let us consider the integrand in (\ref{eq:genbubble}):
\begin{equation}
g_{\omega}(\underline{k}) g_{\omega}(\underline{k} + \underline{p}_{1}) g_{\omega}(\underline{k} + \underline{p}_{1} + \underline{p}_{2}) \cdots g_{\omega}(\underline{k} + \underline{p}_{1} + \ldots + \underline{p}_{n})\;.  
\end{equation}
Using that $\underline{p}_{n+1} = - \underline{p}_{1} - \ldots - \underline{p}_{n}$ we can rewrite it as:
\begin{equation}\label{eq:ggg}
g_{\omega}(\underline{k} + \underline{p}_{1} + \ldots + \underline{p}_{n+1}) g_{\omega}(\underline{k} + \underline{p}_{1}) g_{\omega}(\underline{k} + \underline{p}_{1} + \underline{p}_{2}) \cdots g_{\omega}(\underline{k} + \underline{p}_{1} + \ldots + \underline{p}_{n})\;.  
\end{equation}
Now, perform the change of variables $\underline{k} + \underline{p}_{1} \to \underline{k}$ in the integral in the definition (\ref{eq:genbubble}). We get:
\begin{equation}
g_{\omega}(\underline{k} + \underline{p}_{2} + \ldots + \underline{p}_{n+1}) g_{\omega}(\underline{k}) g_{\omega}(\underline{k} + \underline{p}_{2}) \cdots g_{\omega}(\underline{k} + \underline{p}_{2} + \ldots + \underline{p}_{n})\;,
\end{equation}
which we can also rewrite as, after a rearrangement of the factors:
\begin{equation}
g_{\omega}(\underline{k} + \underline{p}_{1} + \ldots + \underline{p}_{n+1})   g_{\omega}(\underline{k} + \underline{p}_{2}) \cdots g_{\omega}(\underline{k} + \underline{p}_{2} + \ldots + \underline{p}_{n}) g_{\omega}(\underline{k} + \underline{p}_{2} + \ldots + \underline{p}_{n+1})
\end{equation}
which is precisely what one obtains from (\ref{eq:ggg}) replacing the ordered sequence of external momenta $\{\underline{p}_{1}, \underline{p}_{2}, \ldots, \underline{p}_{n}, \underline{p}_{n+1}\}$ with $\{\underline{p}_{2}, \underline{p}_{3}, \ldots, \underline{p}_{n+1}, \underline{p}_{1}\}$. This proves (\ref{eq:cycles}).

Next, we rewrite the left-hand side of (\ref{eq:sumloops}) as
\begin{equation}
\sum_{\pi\in S_{n}} \mathfrak B_{n+1}^{\omega}(\ul p_{\pi(1)},\dots,\ul p_{\pi(n)},\ul p_{n+1})= \sum_{\substack{\pi\in S_{n}\\ \pi(n+1)=n+1}} \mathfrak B_{n+1}^{\omega}(\ul p_{\pi(1)},\dots,\ul p_{\pi(n+1)})\;;
\end{equation}
using the aforementioned cyclicity property, we can also rewrite:
\begin{equation}\label{eq:sumpert}
\sum_{\substack{\pi\in S_{n}\\ \pi(n+1)=n+1}} \mathfrak B_{n+1}^{\omega}(\ul p_{\pi(1)},\dots,\ul p_{\pi(n+1)})=\frac{1}{n+1}\sum_{i=1}^{n+1}\sum_{\substack{\pi\in S_{n+1}\\ \pi(i)=n+1}} \mathfrak B_{n+1}^{\omega}(\ul p_{\pi(1)},\dots,\ul p_{\pi(n+1)})\;,
\end{equation}
where we used the invariance under maximal cyclic permutations.
%
%
Thus,
\begin{equation}
\sum_{\pi\in S_{n}} \mathfrak B_{n+1}^{\omega}(\ul p_{\pi(1)},\dots,\ul p_{\pi(n)},\ul p_{n+1})= \frac{1}{n+1}\sum_{\pi\in S_{n+1}} \mathfrak B_{n+1}^{\omega}(\ul p_{\pi(1)},\dots,\ul p_{\pi(n+1)})\;.
\end{equation}
We further rewrite:
\begin{equation}\label{eq:615}
\sum_{\pi\in S_{n}} \mathfrak B_{n+1}^{\omega}(\ul p_{\pi(1)},\dots,\ul p_{\pi(n)},\ul p_{n+1})= \frac{1}{n+1} \sum_{i=1}^{n+1} \sum_{\substack{\pi\in S_{n+1}\\ \pi(1)=i}} \mathfrak B_{n+1}^{\omega}(\ul p_{i},\dots,\ul p_{\pi(n+1)})\;;
\end{equation}
applying the Ward identity (\ref{eq:ward}) to $g_{\omega}(\underline{k} + \underline{p}_{i}) g_{\omega}(\underline{k})$ we get:
\begin{equation}\label{eq:perm}
\begin{split}
\mathfrak B_{n+1}^{\omega}(\ul p_{i},\dots,\ul p_{\pi(n+1)}) &= \frac{\mathfrak B_{n}^{\omega}(\ul p_{\pi(2)}+\ul p_{i},\dots,\ul p_{\pi(n+1)})-\mathfrak B_{n}^{\omega}(\ul p_{\pi(2)},\dots,\ul p_{\pi(n+1)}+\ul p_{i})}{D_{\omega}(\ul p_{i})}\\
&\quad-\int \frac{d\ul k}{(2\pi)^2}\, F^{\omega}_{\delta}(\ul k;\ul p_i)\prod_{m=3}^{n+1} g_{\omega}\Big(\ul k+\sum_{j<m} \ul p_{\pi(j)}\Big)\;.
\end{split}
\end{equation}
Let us plug the first term in the right-hand side of (\ref{eq:perm}) in the rightmost sum in (\ref{eq:615}). We get:
\begin{equation}
\begin{split}
&\sum_{\substack{\pi\in S_{n+1}\\ \pi(1)=i}} \mathfrak B_{n+1}^{\omega}(\ul p_{i},\dots,\ul p_{\pi(n+1)}) \\
&= \frac{1}{D_{\omega}(\ul p_{i})} \sum_{\substack{\pi\in S_{n+1}\\ \pi(1)=i}} \Big(\mathfrak B_{n}^{\omega}(\ul p_{\pi(2)}+\ul p_{i},\dots,\ul p_{\pi(n+1)})-\mathfrak B_{n}^{\omega}(\ul p_{\pi(2)},\dots,\ul p_{\pi(n+1)}+\ul p_{i})\Big) \\
& = \frac{1}{D_{\omega}(\ul p_{i})} \sum_{\substack{\pi\in S_{n+1}\\ \pi(1)=i}} \Big(\mathfrak B_{n}^{\omega}(\ul p_{\pi(2)}+\ul p_{i}, p_{\pi(3)},\dots,\ul p_{\pi(n+1)})-\mathfrak B_{n}^{\omega}(\ul p_{\pi(n+1)}+\ul p_{i}, \ul p_{\pi(2)}, \ldots, \underline{p}_{\pi(n)})\Big) \\
&= 0\;,
\end{split}
\end{equation}
where the last identity follows from the fact that given a permutation $\pi$ such that $\pi(1) = i$, we can define the new permutation $\tilde \pi$ such that $\tilde \pi(1)=i$ and $\tilde \pi(2) = \pi(n+1)$, $\tilde \pi(3) = \pi(2)$, \ldots, $\tilde\pi(n+1) = \pi(n)$, a relation that defines a one-to-one map between permutations. Hence, (\ref{eq:sumloops}) follows.
\end{proof}
The cancellation highlighted in the previous proposition implies an improved estimate for the $n$-th order contribution to the full response.
%
%
%
%
\begin{proposition}[Loop estimates: relativistic contributions]\label{prp:loopest} Let $p_{1}, \ldots, p_{n}$ such that $\hat\mu_{\alpha,\theta}(-p_{i}) \neq 0$. Suppose that $2\leq n< (1/4) \theta^{\alpha-1}$. Then, for $\beta, L$ large enough:
\begin{equation}\label{eq:bubblesest}
\frac{\theta^{n-1}}{n!}|S^{\beta,L}_{n;\nu}(\ul p_{1}, \ldots, \underline{p}_{n})|\leq C^{n} \theta^{n-1}.
\end{equation}
Suppose instead that $n\geq (1/4)\theta^{\alpha-1}$, then:
\begin{equation}\label{eq:estnlarge}
\frac{\theta^{n-1}}{n!}|S^{\beta,L}_{n;\nu}(\ul p_{1}, \ldots, \underline{p}_{n})|\leq C^{n}\frac{ a^{n-1}}{n!} \left|\log\eta\right|.
\end{equation}
\end{proposition}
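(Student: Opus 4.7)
The proof naturally splits into the two regimes, distinguished by whether the cumulative shifts $\sum_{j<i}\ul p_{\pi(j)}$ remain small compared to $\delta$ on the support of the anomaly kernel $F^\omega_\delta$. In both cases the starting point is the loop cancellation identity of Proposition~\ref{prp:loopcanc}, which replaces the symmetrized bubble $\sum_\pi\mathfrak B^\omega_{n+1}$ by an integral with one insertion of $F^\omega_\delta$ and $n-1$ propagators $g_\omega$.

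For the first regime $2\leq n<(1/4)\theta^{\alpha-1}$, I would first verify the uniform estimate $\int|F^\omega_\delta(\ul k;\ul p)|\,d^2\ul k\leq C$ by checking both terms in \eqref{eq:defF}: the first is $O(1/\delta^2)$ on its annular support of measure $O(\delta^2)$; in the second, the potentially small denominator $1/D_\omega(\ul p)$ is compensated by $|\chi^\omega_\delta(\ul k+\ul p)-\chi^\omega_\delta(\ul k)|\leq C|\ul p|/\delta$ together with the bound $|\ul p|/|D_\omega(\ul p)|\leq C$. The crucial point of this regime is that the cumulative shifts satisfy $\|\sum_{j<i}\ul p_{\pi(j)}\|_\omega\leq n\theta^{1-\alpha}<\delta/2$, so on the support of $F^\omega_\delta$ (where $\|\ul k-\ul k_F^\omega\|_\omega\sim\delta$) the triangle inequality yields $|D_\omega(\ul k+\sum_{j<i}\ul p_{\pi(j)})|\geq\delta/4$, making each remaining propagator bounded by $C/\delta$. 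Hence each of the $n!$ anomaly integrals produced by Proposition~\ref{prp:loopcanc} is at most $C^n/\delta^{n-1}$, so $|\sum_\pi\mathfrak B^\omega_{n+1}|\leq C^n n!$, and multiplying by $\theta^{n-1}/n!$ yields \eqref{eq:bubblesest}.

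For the second regime $n\geq(1/4)\theta^{\alpha-1}$ the uniform control of the propagators on the support of $F^\omega_\delta$ breaks down, so instead I would bound $\mathfrak B^\omega_{n+1}$ directly via partial fractions in $k_0$. Writing $\prod_i D_\omega(\ul k+Q_i)^{-1}=\sum_i D_\omega(\ul k+Q_i)^{-1}\prod_{j\neq i}D_\omega(Q_j-Q_i)^{-1}$ with $Q_i=\sum_{j<i}\ul p_{\pi(j)}$, the key observation is that $Q_j-Q_i$ is independent of the loop momentum and has imaginary part $(j-i)\eta_\beta$, so $\prod_{j\neq i}|D_\omega(Q_j-Q_i)|\geq\eta^{n} i!(n-i)!$. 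The $k_0$-integral of each surviving $g_\omega$ with cutoff is a bounded arctangent in $k_1$, and the subsequent $k_1$-integral over the intersection of the $\chi^\omega_\delta$ supports yields at most a factor $|\log\eta|$ once one exploits the angular vanishing $\int d^2\ul k\,g_\omega(\ul k)=0$ (from $\int d\phi\,e^{-i\phi}=0$), leaving only the boundary-layer contribution from the mismatched cutoffs. Combining with $\sum_i 1/(i!(n-i)!)=2^n/n!$ gives $|\mathfrak B^\omega_{n+1}|\leq C^n|\log\eta|/(n!\eta^{n-1})$; since this bound is identical for every permutation (all share the same frequency structure), summing $n!$ terms and inserting the prefactor $\theta^{n-1}/n!$ produces \eqref{eq:estnlarge}. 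The main obstacle is precisely this logarithmic improvement: extracting $|\log\eta|$ rather than a full power of $1/\eta$ from the single-propagator integral forces a careful use of the angular cancellation over symmetric domains, and I expect the bulk of the calculation to reside in controlling the asymmetric remainder from the mismatched cutoffs.
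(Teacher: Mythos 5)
Your treatment of the first regime $2\leq n<(1/4)\theta^{\alpha-1}$ is correct and essentially identical to the paper's: loop cancellation (Proposition \ref{prp:loopcanc}), the support property of $F^{\omega}_{\delta}$ (the analogue of \eqref{eq:Fbd}), the observation that the cumulative shifts are $\ll\delta$ so every remaining propagator is $O(1/\delta)$ on that support, and the count of permutations. (Minor point: the shifts also carry the temporal part $n\eta_{\beta}$, not only $n\theta^{1-\alpha}$, but this is harmless for $n<(1/4)\theta^{\alpha-1}$.)

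The second regime is where you depart from the paper, and there is a genuine gap. Your partial-fraction step and the lower bound $\prod_{j\neq i}|D_{\omega}(Q_{j}-Q_{i})|\geq \eta^{n}\, i!\,(n-i)!$ are fine, but what they deliver, term by term, is $|\mathfrak B^{\omega}_{n+1}|\leq C^{n}\delta/(n!\,\eta^{n})$: the surviving propagator, integrated in absolute value over the ball of radius $\sim 2\delta$, gives $O(\delta)$, not $\eta\left|\log\eta\right|$. This is one full power of $1/\eta$ worse (up to the log) than the claimed $C^{n}\left|\log\eta\right|/(n!\,\eta^{n-1})$, and the step you invoke to bridge the difference --- the angular vanishing $\int g_{\omega}\,d\ul k=0$ together with a ``boundary-layer'' analysis of the mismatched cutoffs --- is neither carried out nor available in the regime where you need it: the cutoffs $\chi^{\omega}_{\delta}(\ul k+Q_{j})$ are centred at points spread over distances of order $n\theta^{1-\alpha}$ (spatially) and $n\eta$ (in frequency), and for $n\geq(1/4)\theta^{\alpha-1}$ this spread is at least of order one, i.e.\ comparable to or larger than $\delta$. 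Hence the product of cutoffs is not approximately radial about the singularity of the surviving propagator, and the cancellation yields no gain beyond the trivial $O(\delta)$ bound; also your assertion that the single-propagator $k_{0}$/$k_{1}$ integral produces a $\left|\log\eta\right|$ is unsubstantiated (a bounded arctangent integrated over $|k_{1}|\lesssim\delta$ gives $O(\delta)$). The paper obtains \eqref{eq:estnlarge} by a different mechanism: it does not use the loop cancellation here, but bounds the integrand in absolute value and partitions the Matsubara frequencies into width-$\eta$ strips $\mathcal A_{j}$ around $k_{0}\approx-j\eta$; on $\mathcal A_{j}$ only the $j$-th propagator is singular and its integral over the thin strip contributes $\eta\left|\log\eta\right|$, while the remaining ones are $\leq 2/(\eta|i-j|)$, producing the same factorial gain $\binom{n}{j}/n!$ you extract from partial fractions but with $\eta\left|\log\eta\right|$ in place of your $\delta$. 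I would add that your weaker bound $C^{n}a^{n-1}\eta^{-1}/n!$ would in fact still be adequate for the tail sum over $n\geq(1/4)\theta^{\alpha-1}$ in the proof of Theorem \ref{thm:main1d} (the $1/n!$ with $n$ of order $\theta^{\alpha-1}$ beats any fixed power of $1/\eta$), but as written your argument does not prove the estimate \eqref{eq:estnlarge} stated in the proposition.
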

\begin{proof} Let us prove (\ref{eq:bubblesest}). Let $2\leq n< (1/4) \theta^{\alpha-1}$. It will be enough to prove the bound (\ref{eq:bubblesest}) for the $\beta, L \to \infty$ limit. We start from the following rewriting, consequence of (\ref{prp:loopcanc}) and of (\ref{eq:bubbles}):
\begin{equation}\label{eq:621}
\begin{split}
S_{n;\nu}(\ul p_{1}, \ldots, \underline{p}_{n}) &= \sum_{\omega=1}^{N} v^{\nu}_{\omega}\sum_{\pi\in S_{n}} \mathfrak B_{n+1}^{\omega}(\ul p_{\pi(1)},\dots,\ul p_{\pi(n)},\ul p_{n+1}) \\
&= -\frac{1}{n+1}\sum_{\omega=1}^{N} v^{\nu}_{\omega}\sum_{\pi\in S_{n+1}}\int \frac{d\ul k}{(2\pi)^2}\, F_{\delta}^{\omega}(\ul k;\ul p_{\pi(1)})\prod_{i=3}^{n+1} g_{\omega}\Big(\ul k+\sum_{j<i} \ul p_{\pi(l)}\Big)\;.
\end{split}
\end{equation}
The proof of (\ref{eq:bubblesest}) is based on the support properties of the function $F_{\delta}^{\omega}$. For all $\|\ul p\|_\omega< (1/4) \delta$, we write:
\begin{equation}
\chi_{\delta}^\omega(\ul k+\ul p)=\chi_{\delta}^\omega(\ul k)+\frac{1}{\delta}\int_0^1 \chi'(\|\ul k - \ul k_{F}^{\omega} + t\ul p\|_\omega/\delta)\frac{\langle\ul k - \ul k_{F}^{\omega} + t\ul p, \ul p\rangle_\omega}{\|\ul k - \ul k_{F}^{\omega} +t\ul p\|_\omega}\,dt
\end{equation}
where $\langle \ul a,\ul b\rangle_\omega:=a_0b_0+v_\omega^2 ab$ is the scalar product associated to the norm $\left\|\cdot\right\|_\omega$. Thus, from the definition (\ref{eq:defF}), we have:
\begin{equation}
\begin{split}
&|F^\omega_{\delta}(\ul k;\ul p)| \\&\leq \frac{\chi^{\omega}_{\delta}(\ul k)[1-\chi^{\omega}_{\delta}(k)]}{\|\ul k - \ul k_{F}^{\omega}\|_{\omega}\|\ul k - \ul k_{F}^{\omega}+\ul p\|_{\omega}}\\&\quad + \int_0^1 \frac{|\chi'(\|\ul k - \ul k_{F}^{\omega}+t\ul p\|_\omega/\delta)|}{\delta\|\ul k - \ul k_{F}^{\omega} +\ul p\|_\omega}\left[\frac{\|\ul p\|_{\omega}}{\|\ul k - \ul k_{F}^{\omega}\|_{\omega}}\chi^{\omega}_{\delta}(\ul k)+1\right]\,dt.
\end{split}
\end{equation}
Sinces $\mathrm{supp}(\chi')\subseteq[1,2]$, we can estimate, for some $C_{\lambda} > 0$:
\begin{equation}\label{eq:Fbd}
|F^\omega_{\delta}(\ul k;\ul p)|\leq
\begin{cases}
 C_{\lambda}/\delta^2 & \|\ul k - \ul k_{F}^{\omega}\|_\omega\in[(3/4) \delta, (9/4)\delta]\\
 0 & \mathrm{otherwise.}
 \end{cases}
\end{equation}
Let us use this bound to estimate (\ref{eq:621}). Observe that, since $\underline{p}_{i}$ is such that $\hat\mu_{\alpha,\theta}(p_{i}) \neq 0$, we have $| p_{i} | \leq \theta^{1 - \alpha}$; hence we can assume that $\| \underline{p}_{i} \| \leq \theta^{1 - \alpha} + \theta / a \leq (1/4) \delta$ for $\theta$ small enough, and we can use the bound (\ref{eq:Fbd}). In particular, this estimate allows to show that, if the order $n$ is not too large, all the propagators are bounded uniformly in their arguments. In fact, from (\ref{eq:Fbd}), and for $n< (1/4) \theta^{\alpha-1}$:
\begin{equation}\label{eq:lowerk}
\begin{split}
\|{\ul k - \ul k_{F}^{\omega}+\textstyle\sum_{j\leq i}\ul p_{\pi(j)}}\|_\omega &\geq\|{\ul k} - \ul k_{F}^{\omega}\|_\omega - \|\textstyle\sum_{j\leq i} \ul p_{\pi(j)}\|_\omega\\&\geq \|\ul k - \ul k_{F}^{\omega}\|_{\omega} - n(\theta^{1-\alpha} + a\theta)\geq \delta/2\;.
\end{split}
\end{equation}
Thus, integration over $\ul k$ gives:
\begin{equation}
\int \frac{d\ul k}{(2\pi)^2}\, \big|F^\omega_{\delta}(\ul k;\ul p_{\pi(1)})\big|\prod_{i=3}^{n+1}\left|\frac{\chi^\omega_{\delta}(\ul k+\sum_{j<i}\ul p_{\pi(j)})}{D_{\omega}(\ul k+\sum_{j<i}\ul p_{\pi(j)})}\right| \leq (K/\delta)^{n-1};
\end{equation}
this bound, combined with (\ref{eq:621}), implies:
\begin{equation}\label{eq:Cnn}
|S_{n;\nu}(\ul p_{1}, \ldots, \underline{p}_{n})| \leq C^{n} n!\;,
\end{equation}
which proves Eq. (\ref{eq:bubblesest}). Since $S^{\beta,L}_{n;\nu}$ at nonzero external momenta converges to a limit for $\beta, L\to \infty$, an analogous estimate (\ref{eq:Cnn}) holds for $\beta, L$ large enough and $n\leq (1/4) \theta^{\alpha-1}$.

Next, let us prove (\ref{eq:estnlarge}). Suppose that $n\geq (1/4)\theta^{\alpha-1}$. For these large values of $n$, we cannot use the previous argument, because the condition (\ref{eq:lowerk}) might fail. Here we shall use that, thanks to the fact that the temporal part of the external momenta is $p_{i,0} = \eta$ for all $i=1,\ldots,n$, there is no accumulation of singularities in the loop integral. Starting from (\ref{eq:singular}), we estimate:
\begin{equation}\label{eq:Slarge}
\big|S^{\beta,L}_{n;\nu}(\underline{p}_{1}, \ldots, \underline{p}_{n})\big| \leq C\sum_{\omega = 1}^{N}  \int_{\beta,L} \frac{d\underline{k}}{(2\pi)^{2}} \sum_{\pi\in S_{n}} \prod_{i=1}^{n+1} \Big| g_{\omega}\Big(\ul k+\sum_{j< i} \ul p_{\pi(j)}\Big)\Big|
\end{equation}
and:
\begin{equation}\label{eq:5119}
\int_{\beta,L} \frac{d\underline{k}}{(2\pi)^{2}}\, \prod_{i=1}^{n+1} \Big| g_{\omega}\Big(\ul k+\sum_{j< i} \ul p_{j}\Big)\Big| \leq  \int_{\beta, L} \frac{d \underline{k}}{(2\pi)^{2}}\, \prod_{i=1}^{n+1} \frac{\chi_{\delta}^\omega(\ul k + \sum_{j< i}\ul p_j)}{|D_\omega(\ul k + \sum_{j< i}\ul p_j)|}\;.
\end{equation}
By the support properties of the cutoff function, the domain of summation over $\underline{k}$ is contained in $B^\omega_{2\delta} = \{\underline{k} \mid \|\underline{k} - \underline{k}_{F}^{\omega}\|_{\omega} \leq 2\delta\}$; in particular, $|k_{0}| \leq 2\delta$. We further split the Matsubara frequencies as:
\begin{equation}
\{ k_{0} \in \mathbb{M}_{\beta} \mid |k_{0}| \leq 2\delta \} = \mathcal{A} \cup \mathcal{A}^{c}\;,
\end{equation}
with, for $\bar n=\min\{n,\lfloor{2\delta/\eta}\rfloor\}$:
\begin{equation}
\begin{split}
\mathcal{A} &:= \bigcup_{j=1}^{\bar n} \mathcal{A}_{j} \\
\mathcal A_j &:= \left\{k_{0} \in \mathbb{M}_{\beta}  \mid k_{0}\in\eta\left [-j-\frac{1}{2},-j+\frac{1}{2}\right ]\right\}.
\end{split}
\end{equation}
Observe that, for $k_{0} \in \mathcal{A}_{j}$ and $i\neq j$:
\begin{equation}
|D_\omega(\ul k+\textstyle\sum_{l\leq i}\ul p_l)|\geq|k_0+\eta i|\geq \eta[|i-j|-1/2]\geq\eta|i-j|/2\;.
\end{equation}
Therefore, for $k_{0} \in \mathcal{A}_{j}$:
\begin{equation}
\begin{split}
\prod_{i=0}^{n}\frac{1}{|D_\omega(\ul k+\textstyle\sum_{l\leq i} \ul p_l)|} &\leq \frac{(2/\eta)^{n}}{\|\ul k - \ul k_{F}^{\omega} +\textstyle\sum_{l\leq j} \ul p_l\|_{\omega}}\displaystyle\prod_{\substack {i=0\\ i\neq j}}^{n} \frac{1}{|i-j|} \\
&=\frac{1}{n!}\binom{n}{j}\frac{(2/\eta)^{n}}{\|\ul k - \ul k_{F}^{\omega} +\textstyle\sum_{l\leq j} \ul p_l\|_{\omega}}\;.
\end{split}
\end{equation}
Hence:
\begin{equation}\label{eq:bdA}
\begin{split}
&\int_{\beta,L} \mathbbm{1}(k_{0} \in \mathcal{A}_{j}) \prod_{i=0}^{n} \frac{\chi_{\delta}^\omega(\ul k + \sum_{j\leq i}\ul p_j)}{|D_\omega(\ul k + \sum_{j\leq i}\ul p_j)|}\, \frac{d \underline{k}}{(2\pi)^{2}} \\
&\qquad \leq \frac{(2 / \eta)^{n}}{n!}\sum_{j=0}^{\bar n} \binom{n}{j} \int_{\beta,L} \mathbbm{1}(k_{0} \in \mathcal{A}_{j}) \frac{\chi^\omega_{\delta}(\ul k - \ul k_{F}^{\omega} +\textstyle\sum_{l\leq j} \ul p_l)}{\|\ul k - \ul k_{F}^{\omega} +\textstyle\sum_{l\leq j} \ul p_l\|_{\omega}}\,d\ul k \\
&\qquad \leq \frac{C^{n} \eta^{1-n}}{n!} \left| \log \eta\right| \sum_{j=1}^{\bar n} \binom{n}{j} \\
&\qquad \leq \frac{(2C)^{n} \eta^{1-n}}{n!} \left| \log \eta\right|\;,
\end{split}
\end{equation}
since $\bar n \leq n$. Consider now the contribution of the region $\mathcal A^c$. Here, we shall use that:
\begin{equation}
|D_{\omega}(\ul k+\textstyle\sum_{l\leq i}\ul q_l)|\geq |k_{0}+\eta i|\geq 
\begin{cases}
 (i+1)\eta/2 & \text{if $k_0\geq 1/2$}\\
(n-i+1)\eta/2 & \text{if $k_0\leq-n-1/2$,}
\end{cases}
\end{equation}
where the second case is needed only if $\bar n=n$, i.e. $n<\lfloor2\delta/\eta\rfloor$. In both cases, we can extract the inverse of a factorial, as before. We have:
\begin{equation}\label{eq:bdAc}
\begin{split}
&\int_{\beta,L} \mathbbm{1}(k_{0}\in \mathcal{A}^{c}) \prod_{i=0}^{n}\frac{\chi^\omega_{\delta}(\ul k+\sum_{j\leq i}\ul p_{j})}{|D_\omega(\ul k+\sum_{j\leq i}\ul p_{j})|}\,d\ul k \\
&\qquad \leq \frac{(C/\eta)^{n-1}}{(n-1)!} \int_{\beta,L} \mathbbm{1}(k_{0}\in \mathcal{A}^{c}) \frac{\chi_{\delta}^\omega(\ul k)\chi^\omega_{\delta}(\ul k+\ul p_{1})}{|D_\omega(\ul k)D_{\omega}(\ul k+\ul p_{1})|}\,d\ul k \\&\qquad \leq \frac{(K / \eta)^{n-1}}{(n-1)!}\left|\log\eta\right|.
\end{split}
\end{equation}
Putting together the above estimates we obtained:
\begin{equation}
\int_{\beta,L} \frac{d\underline{k}}{(2\pi)^{2}}\, \prod_{i=1}^{n+1} \Big| g_{\omega}\Big(\ul k+\sum_{j< i} \ul p_{j}\Big)\Big| \leq \frac{C^{n}}{n!} \frac{\left| \log \eta\right|}{\eta^{n-1}}\;;
\end{equation}
this implies, from (\ref{eq:Slarge}):
\begin{equation}
|S^{\beta,L}_{n;\nu}(\ul p_{1}, \ldots, \underline{p}_{n})| \leq C^{n} \frac{\left| \log \eta\right|}{\eta^{n-1}}\;,
\end{equation}
which proves (\ref{eq:estnlarge}). This concludes the proof of Proposition \ref{prp:loopest}.
\end{proof}
We shall now estimate the contribution of the terms $R^{\beta,L}_{n;\nu}(\underline{p}_{1}, \ldots, \underline{p}_{n})$ in (\ref{eq:fullresp}). To this end, let us recall the splitting:
\begin{equation}\label{eq:RRT}
R^{\beta,L}_{n;\nu}(\underline{p}_{1}, \ldots, \underline{p}_{n}) = \widetilde R^{\beta,L}_{n;\nu}(\underline{p}_{1}, \ldots, \underline{p}_{n}) + T^{\beta,L}_{n;\nu}(\underline{p}_{1}, \ldots, \underline{p}_{n})\;,
\end{equation}
with:
\begin{equation}\label{eq:rem}
\widetilde R^{\beta,L}_{n;\nu}(\underline{p}_{1}, \ldots, \underline{p}_{n}) = - \sum_{\pi\in S_{n}} \sum_{\substack{f\in\{r,s\}^{n+1}\\ f\not\equiv s}} \int_{\beta,L} \Tr\Big[\hat J_{\nu}(k, p_{n+1})\prod_{i=1}^{n+1} g_{f(i)}\Big(\ul k+\sum_{j< i} \ul p_{\pi(j)}\Big)\Big]\frac{d\ul k}{(2\pi)^2}\;,
\end{equation}
and:
\begin{equation}\label{eq:remT}
T^{\beta,L}_{n;\nu}(\underline{p}_{1}, \ldots, \underline{p}_{n}) = - \sum_{\omega=1}^{N}\sum_{\pi\in S_{n}}\int_{\beta,L} f_{n,\omega}^{\nu}(k;p_{\pi(1)},\dots,p_{\pi(n)}) \prod_{i=1}^{n+1} g_{\omega}\Big(\ul k+\sum_{j< i} \ul p_{\pi(j)}\Big) \frac{d\ul k}{(2\pi)^2}\;,
\end{equation}
where:
\begin{equation}\label{eq:fest}
|f_{n,\omega}^{\nu}(k;p_{1},\dots,p_{n})|\leq C_{\delta}\left(|k-k_{F}^{\omega}|_{\mathbb{T}}\delta_{\nu,1}+ n\sum_{i=1}^{n} |p_{i}|_{\mathbb{T}}\right)\;.
\end{equation}
%
%
%
%
\begin{proposition}[Loop estimates: remainder terms]\label{lemma:j1}
There exists $C>0$ such that for all $n\geq 2$:
\begin{equation}\label{eq:estrem}
\frac{\theta^{n-1}}{n!}|R^{\beta,L}_{n;\nu}(\underline{p}_{1}, \ldots, \underline{p}_{n})|\leq C^{n} a^{n-1} \left| \log \eta\right| \sum_{j=1}^{n-1} \frac{\eta^{j} + \delta_{j,1}  \sum_{i=1}^{n} |p_{i}|_{\mathbb{T}}}{(n-j)!}
\end{equation}
for all $\underline{p}_{i} = (\eta_{\beta}, p_{i})$, $i=1,\ldots, n$, such that $\hat\mu_{\alpha,\theta}(-p_{i}) \neq 0$.
\end{proposition}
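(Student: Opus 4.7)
The plan is to decompose $R^{\beta,L}_{\nu;n}$ according to (\ref{eq:RRT}) and to estimate $\widetilde R^{\beta,L}_{\nu;n}$ and $T^{\beta,L}_{\nu;n}$ separately, in each case adapting the Matsubara frequency splitting already used in the proof of Proposition \ref{prp:loopest}. The guiding principle is that both pieces carry an additional smallness factor relative to the purely singular bubble $S^{\beta,L}_{\nu;n}$: for $\widetilde R$ the gain is one power of $\eta$ per regular propagator, while for $T$ it is the Taylor vanishing in $|k-k_F^{\omega}|_{\mathbb{T}}$ or in $|p_i|_{\mathbb{T}}$ coming from (\ref{eq:fest}).

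For $\widetilde R^{\beta,L}_{\nu;n}$ in (\ref{eq:rem}), I would stratify the sum over $f\in\{r,s\}^{n+1}\setminus\{s\}^{n+1}$ by the number $j\in\{1,\ldots,n+1\}$ of slots with $f(i)=r$. By the first bound in (\ref{eq:estgr}), every regular propagator is pointwise dominated by $C/(1+|k_0|)$ and in particular does not generate an $\eta^{-1}$ singularity. The remaining $n+1-j$ singular propagators are then handled by the same slab decomposition $\mathbb{M}_{\beta}=\mathcal{A}\cup\mathcal{A}^c$ as in (\ref{eq:bdA})--(\ref{eq:bdAc}); the telescoping product $\prod_{i\neq i_0}1/|i-i_0|$ produces the factorial $1/(n-j)!$ together with the prefactor $\eta^{-(n-j)}|\log\eta|$ coming from the $k_1$ integration, while the integrations against the bounded $g_r$ factors contribute only harmless multiplicative constants. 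Gathering the $\binom{n+1}{j}\leq 2^{n+1}$ choices of slots, the $n!$ permutations, and multiplying by $\theta^{n-1}/n!=(a\eta)^{n-1}/n!$ yields a contribution of order $C^{n}a^{n-1}|\log\eta|\,\eta^{j}/(n-j)!$, which is exactly the $\eta^j$ part of (\ref{eq:estrem}). The edge cases $j\in\{n,n+1\}$, for which the denominator degenerates, produce genuinely smaller terms that can be absorbed into the $j=n-1$ contribution.

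For $T^{\beta,L}_{\nu;n}$ in (\ref{eq:remT}), the smallness is provided by $f^{\nu}_{n,\omega}$, which by (\ref{eq:fest}) I would split into the two contributions $\delta_{\nu,1}|k-k_F^{\omega}|_{\mathbb{T}}$ and $n\sum_i|p_i|_{\mathbb{T}}$. The first is absorbed into a neighboring singular propagator through the pointwise inequality $|k-k_F^{\omega}|_{\mathbb{T}}\,\chi^{\omega}_{\delta}(\underline{k})/|D_{\omega}(\underline{k})|\leq C/|v_{\omega}|$, which effectively reduces the loop to a product of $n$ singular propagators; applying to this reduced loop exactly the Matsubara analysis of (\ref{eq:bdA})--(\ref{eq:bdAc}) and multiplying by $\theta^{n-1}/n!$ gives a contribution of order $C^{n}a^{n-1}\eta|\log\eta|/(n-1)!$, matching the $j=1$, $\eta^1$ term of the statement. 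The scalar factor $n\sum_i|p_i|_{\mathbb{T}}$, instead, can be pulled outside the integral, after which the remaining $n+1$ singular propagators are controlled directly by (\ref{eq:estnlarge}); the extra multiplicative $n$ is absorbed into $C^{n}$, and one obtains the $\delta_{j,1}\sum_i|p_i|_{\mathbb{T}}/(n-1)!$ piece of (\ref{eq:estrem}).

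The main obstacle will be the combinatorial bookkeeping: one has to verify that the slab decomposition of the Matsubara sum, adapted to a product with only $n+1-j$ singular denominators, still yields the clean factorial $(n-j)!^{-1}$, and that the binomial sums generated by the choice of regular/singular slots and by the telescoping of shifted frequencies can be bounded uniformly in $j$ by $C^{n}$. The underlying $k_1$ integrations and the single $|\log\eta|$ factor are direct adaptations of (\ref{eq:bdA})--(\ref{eq:bdAc}), and, as in Proposition \ref{prp:loopest}, rely crucially on the fact that every external momentum has the same imaginary-time component $\eta_{\beta}$, which prevents any accumulation of singularities in the loop.
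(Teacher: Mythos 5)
Your plan follows the paper's proof essentially step for step: the same splitting $R^{\beta,L}_{\nu;n}=\widetilde R^{\beta,L}_{\nu;n}+T^{\beta,L}_{\nu;n}$, the stratification of $\widetilde R$ by the number $j$ of regular propagators (bounded pointwise, so only the $n+1-j$ singular factors enter the Matsubara slab decomposition, whose telescoping over the shifted frequencies gives the $1/(n-j)!$), the separate treatment of $j=n,n+1$, and for $T$ the absorption of $|k-k_F^{\omega}|_{\mathbb{T}}$ into an adjacent singular propagator together with pulling out $n\sum_i|p_i|_{\mathbb{T}}$. Only two cosmetic slips: your intermediate prefactor ``$\eta^{-(n-j)}|\log\eta|$'' omits the factor $\eta$ coming from the width of the $k_0$-slab (the final $\eta^{j}$ contribution you quote is nevertheless the correct one), and the $\sum_i|p_i|_{\mathbb{T}}$ piece should invoke the loop bound established inside the proof of (\ref{eq:estnlarge}), which holds for all $n\ge 2$, rather than the proposition itself, whose statement assumes $n\ge(1/4)\theta^{\alpha-1}$.
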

\begin{proof} With reference to (\ref{eq:RRT}), let us start by proving an estimate for $\widetilde R^{\beta,L}_{n;\nu}(\underline{p}_{1}, \ldots, \underline{p}_{n})$. To this end, it is convenient to introduce:
\begin{equation}\label{eq:remj}
\widetilde R^{\beta,L}_{\nu;n,j}(\underline{p}_{1}, \ldots, \underline{p}_{n}):= - \sum_{\pi\in S_{n}} \sum_{\substack{f\in\{r,s\}^{n+1}\\ |f^{-1}(r)|=j}} \int_{\beta,L}\Tr\Big[\hat J_{\nu}(k, p_{n+1})\prod_{i=1}^{n+1} g_{f(i)}\Big(\ul k+\sum_{l< i} \ul p_{\pi(l)}\Big)\Big]\frac{d\ul k}{(2\pi)^2};
\end{equation}
that is, $\widetilde R^{\beta,L}_{\nu;n,j}$ collects the contribution coming from the loops with exactly $j$ bounded propagators $g_{\mathrm{r}}$. Then, we can estimate:
\begin{equation}\label{eq:remJ}
|\widetilde R^{\beta,L}_{\nu;n,j}(\underline{p}_{1}, \ldots, \underline{p}_{n})| \leq C^{j} \sum_{\pi\in S_{n}}\sum_{\substack{I\subseteq\{1,\dots,n+1\}\\ |I|=j}}\sum_{\Omega\in\{1,\dots,N\}^{I^{c}}} \int_{\beta,L} \prod_{i\in I^{c}}\Big| g_{\Omega(i)}\Big(\ul k+\sum_{l< i}\ul p_{\pi(l)}\Big)\Big|\, d\ul k,
\end{equation}
where the last sum runs over chiralities $\Omega(i)$ with $i\in I^{\text{c}}$, and the constant $C^{j}$ takes into account the estimate for the bounded propagators. For the moment, we suppose that $j\leq n-1$, so that in the integral we have at least two relativistic propagators.  Let us label the elements of $I^{c}$ as $I^{c} = \{m_{0}, m_{1}, \ldots, m_{n-j}\}$ in increasing order. Performing the change of variables $\underline{k} + \sum_{i\leq m_{0}} \underline{p}_{i} \to \underline{k}$, we can estimate:
\begin{equation}\label{eq:giest}
\int_{\beta,L} \prod_{i\in I^{c}}\Big| g_{\Omega(i)}\Big(\ul k+\sum_{l< i}\ul p_{\pi(l)}\Big)\Big|\, d\ul k \leq \int_{\beta,L} \prod_{i=0}^{n-j}\frac{\chi^{\Omega(m_i)}_{\delta}(\ul k + \textstyle \sum_{l >m_{0}}^{m_{i}} \ul p_l)}{\big|D_{\Omega(m_i)}(\ul k+\textstyle\sum_{l > m_{0}}^{m_{i}} \ul p_l)\big|}\,d\ul k\;.
\end{equation} 
Next, we proceed as after (\ref{eq:5119}). We write:
\begin{equation}
\begin{split}
&\int_{\beta,L} \prod_{i=0}^{n-j}\frac{\chi^{\Omega(m_i)}_{\delta}(\ul k+\textstyle \sum_{l >m_{0}}^{m_{i}} \ul p_l)}{\big|D_{\Omega(m_i)}(\ul k+\textstyle \sum_{l >m_{0}}^{m_{i}} \ul p_l)\big|}\,d\ul k \\
&\qquad = \int_{\beta,L} \mathbbm{1}(k_{0} \in \mathcal{A}^{c}) \prod_{i=0}^{n-j}\frac{\chi^{\Omega(m_i)}_{\delta}(\ul k+\textstyle \sum_{l >m_{0}}^{m_{i}} \ul p_l)}{\big|D_{\Omega(m_i)}(\ul k+\textstyle \sum_{l >m_{0}}^{m_{i}} \ul p_l)\big|}\,d\ul k\\
&\qquad\quad + \sum_{ \ell = 1}^{\bar n} \int_{\beta,L} \mathbbm{1}(k_{0} \in \mathcal{A}_{\ell}) \prod_{i=0}^{n-j}\frac{\chi^{\Omega(m_i)}_{\delta}(\ul k+\textstyle \sum_{l >m_{0}}^{m_{i}} \ul p_l)}{\big|D_{\Omega(m_i)}(\ul k+\textstyle \sum_{l >m_{0}}^{m_{i}} \ul p_l)\big|}\,d\ul k\;.
\end{split}
\end{equation}
Consider the argument of the sum with $\ell =m_{s} - m_{0}$ for $m_{s}\in I^{c}$. Proceeding as in (\ref{eq:bdA}):
\begin{equation}
\begin{split}
&\int_{\beta,L} \mathbbm{1}(k_{0} \in \mathcal{A}_{\ell}) \prod_{i=0}^{n-j}\frac{\chi^{\Omega(m_i)}_{\delta}(\ul k + \textstyle\sum_{l > m_0}^{m_i} \ul p_l)}{\big|D_{\Omega(m_i)}(\ul k+\textstyle\sum_{l > m_0}^{m_i} \ul p_l)\big|}\,d\ul k \\
&\qquad \leq (C / \eta)^{n-j} (\eta \left| \log \eta\right|)\prod_{\substack{i=0 \\ |m_{i} - m_{s}|>0}}^{n-j} \frac{1}{|m_{i} - m_{s}|}\;.
\end{split}
\end{equation}
Using that $|m_{i} - m_{s}| \geq |i-s|$, we have:
\begin{equation}\label{eq:aa1}
\begin{split}
&\sum_{s = 0}^{n-j} \int_{\beta,L} \mathbbm{1}(k_{0} \in \mathcal{A}_{m_{s} - m_{0}}) \prod_{i=0}^{n-j}\frac{\chi^{\Omega(m_i)}_{\delta}(\ul k + \textstyle\sum_{l > m_0}^{m_i} \ul p_l)}{\big|D_{\Omega(m_i)}(\ul k+\textstyle\sum_{l > m_0}^{m_i} \ul p_l)\big|}\,d\ul k \\
&\qquad  \leq (K / \eta)^{n-j} (\eta \left| \log \eta\right|) \frac{1}{(n-j)!} \sum_{s = 0}^{n-j} \binom{n-j}{s} \\
&\qquad \leq (C / \eta)^{n-j} (\eta \left| \log \eta\right|) \frac{1}{(n-j)!}\;.
\end{split}
\end{equation}
Instead, the contribution of all $\ell$ such that $\ell + m_{0} \notin I^{c}$ is:
\begin{equation}\label{eq:aa2}
\sum_{\ell: \ell + m_{0} \notin I^{c}} \int_{\beta,L} \mathbbm{1}(k_{0} \in \mathcal{A}_{\ell}) \prod_{i=0}^{n-j}\frac{\chi^{\Omega(m_i)}_{\delta}(\ul k+\textstyle\sum_{l > m_0}^{m_i} \ul p_l)}{\big|D_{\Omega(m_i)}(\ul k+\textstyle\sum_{l > m_0}^{m_i} \ul p_l)\big|}\,d\ul k \leq (C / \eta)^{n-j}  \frac{\eta}{(n-j)!}\;.
\end{equation} 
Consider now the contribution due to the region $\mathcal{A}^{c}$. Also here, none of the propagators in the product is singular. Therefore, proceeding as in (\ref{eq:bdAc}):
\begin{equation}\label{eq:aa3}
\int_{\beta,L} \mathbbm{1}(k_{0} \in \mathcal{A}^{c}) \prod_{i=0}^{n-j}\frac{\chi^{\Omega(m_i)}_{\delta}(\ul k+\textstyle\sum_{l > m_0}^{m_i} \ul p_l)}{\big|D_{\Omega(m_i)}(\ul k+\textstyle\sum_{l > m_0}^{m_i} \ul p_l)\big|}\,d\ul k \leq (C / \eta)^{n-j+1}  \frac{1}{(n-j-1)!} \left|\log \eta\right |\;.
\end{equation}
All in all, from (\ref{eq:giest}), (\ref{eq:aa1}), (\ref{eq:aa2}), (\ref{eq:aa3}) we obtained (with a different $C$):
\begin{equation}
\int_{\beta,L} \prod_{i=0}^{n-j}\frac{\chi^{\Omega(m_i)}_{\delta}(\ul k+\textstyle\sum_{l > m_0}^{m_i} \ul p_l)}{\big|D_{\Omega(m_i)}(\ul k+\textstyle\sum_{l > m_0}^{m_i} \ul p_l)\big|}\,d\ul k \leq (C / \eta)^{n-j} (\eta \left|\log \eta\right|) \frac{1}{(n-j)!}\;.
\end{equation}
Plugging this bound in (\ref{eq:remJ}), we get:
\begin{equation}\label{eq:Rdim}
\begin{split}
|\widetilde R^{\beta,L}_{\nu;n,j}(\underline{p}_{1}, \ldots, \underline{p}_{n})| \leq C^{n} n! \eta^{j-n} (\eta \left|\log \eta\right|) \frac{1}{(n-j)!}\;.
\end{split}
\end{equation}
Suppose that that $j=n$ or $j=n+1$. If $j=n+1$, we use two propagators $g_{\mathrm{r}}$ to sum over $\underline{k}$. The final estimate is:
\begin{equation}
|\widetilde R^{\beta,L}_{\nu;n,n+1}(\underline{p}_{1}, \ldots, \underline{p}_{n})| \leq C^{n} n!\;.
\end{equation}
If $j=n$, we sum over $\underline{k}$ with the relativistic propagator and with one propagator $g_{\mathrm{r}}$. Also in this case, the final estimate is:
\begin{equation}
|\widetilde R^{\beta,L}_{\nu;n,n}(\underline{p}_{1}, \ldots, \underline{p}_{n})| \leq C^{n} n!\;.
\end{equation}
Therefore, we obtained:
\begin{equation}\label{eq:tildeRest}
\frac{\theta^{n-1}}{n!} \sum_{j} |\widetilde R^{\beta,L}_{\nu;n,j}(\underline{p}_{1}, \ldots, \underline{p}_{n})| \leq C^{n} \theta^{n-1} + C^{n} a^{n-1} \eta \left| \log \eta\right| \sum_{j=1}^{n-1} \frac{\eta^{j-1}}{(n-j)!}\;.
\end{equation}
Finally, consider the error term (\ref{eq:remT}). The estimate for this term is performed as in the $j=0$ version of the argument above, keeping into account a factor $\left(|k-k_{F}^{\omega}|_{\mathbb{T}} + n\sum_{i=1}^{n} |p_{i}|_{\mathbb{T}}\right)$ arising from the estimate (\ref{eq:fest}). Since the combination $|k-k_{F}^{\omega}| g_{\omega}(\underline{k})$ is bounded, we have:
\begin{equation}\label{eq:Tdim}
\frac{\theta^{n-1}}{n!} |T^{\beta,L}_{n;\nu}(\underline{p}_{1}, \ldots, \underline{p}_{n})| \leq \frac{C^{n} a^{n-1}}{n!} \left| \log \eta\right| (\eta + \sum_{i} |p_{i}|_{\mathbb{T}})\;,
\end{equation}
we omit the details. Putting together (\ref{eq:tildeRest}) and (\ref{eq:Tdim}), Eq. (\ref{eq:estrem}) follows.
\end{proof}
We are now ready to prove Theorem \ref{thm:main1d}.
\begin{proof}[Proof of Theorem \ref{thm:main1d}] Recall the expression (\ref{eq:fullresp}) for the full response of the system. We are interested in proving an estimate for the sum of all contributions with $n\geq 2$, which is subleading with respect to the linear response as $\eta \to 0^{+}$. Let $n(\theta) = (1/4) \theta^{\alpha-1}$. We have, from Proposition \ref{prp:loopest} and Proposition \ref{lemma:j1}, for $\beta, L$ large enough:
\begin{equation}\label{eq:n>1}
\begin{split}
&\sum_{n=2}^\infty \frac{\theta^{n-1}}{n!} \frac{1}{L^{n}} \sum_{\{p_{i}\} \in B_{L}^{n}} \Big[\prod_{j=1}^{n} |\hat\mu_{\alpha,\theta}(-p_{j})|\Big] \Big(|S^{\beta,L}_{n;\nu}(\underline{p}_{1},\dots, \underline{p}_{n})|+|R^{\beta,L}_{n;\nu}(\underline{p}_{1},\dots, \underline{p}_{n})|\Big) \\
&\quad \leq \sum_{n = 2}^{n(\theta)} C^{n} \| \hat \mu_{\alpha,\theta} \|_{1}^{n} \theta^{n-1}  + \sum_{n>n(\theta)} C^{n} \| \hat \mu_{\alpha,\theta} \|_{1}^{n} \frac{ a^{n-1}}{n!} \left|\log\eta\right|\\
&\qquad  + \sum_{n\geq 2} C^{n} a^{n-1}(\eta + \theta) \left| \log \eta\right| \sum_{j=1}^{n-1} \frac{\eta^{j-1}}{(n-j)!}\;.
\end{split}
\end{equation}
The first term in the right-hand side comes from (\ref{eq:bubblesest}); the second term from (\ref{eq:estnlarge}); the last term from (\ref{eq:estrem}). In particular, the factor $\theta$ in the third term comes from $\| \hat \mu_{\alpha,\theta}(p_{i}) |p_{i}|_{\mathbb{T}} \|_{1} \leq C\theta$. We estimate the last sum as:
\begin{equation}\label{eq:sums}
\begin{split}
& (\eta + \theta) \left| \log \eta\right|\sum_{n\geq 2} C^{n} a^{n-1}  \sum_{j=1}^{n-1} \frac{\eta^{j-1}}{(n-j)!} \\
&\quad \leq \widetilde{C} (\eta + \theta) \left| \log \eta\right|\sum_{n\geq 0} (Ka)^{n}  \sum_{j=0}^{n} \frac{\eta^{j}}{(n-j)!} \\
&\quad \leq \widetilde{C}(\eta + \theta) \left| \log \eta\right| \sum_{j\geq 0} (K a\eta)^{j}  \sum_{n\geq j} \frac{(K a)^{n-j}}{(n-j)!} \\
&\quad= \widetilde{C}(\eta + \theta) \left| \log \eta\right| \frac{e^{Ka}}{1 - Ka\eta}\;.
\end{split}
\end{equation}
We choose $a \leq a(\eta)$, so that right-hand side of (\ref{eq:sums}) vanishes as $\eta \to 0$. This holds true if $a(\eta) = w\left| \log \eta\right|$ with $w>0$ small enough. Therefore, putting together (\ref{eq:n>1}), (\ref{eq:sums}), (\ref{eq:fullresp}) we get, for $\gamma>0$, and $\beta, L$ large enough:
\begin{equation}
\Big|\chi^{\beta,L}_{\nu}(x;\eta,\theta) - \chi^{\text{lin}}_{\nu}(x;\eta,\theta)\Big| \leq C\eta^{\gamma}
\end{equation}
with $\chi^{\text{lin}}_{\nu}(x;\eta,\theta)$ given by (\ref{eq:linear}). This concludes the proof of Theorem \ref{thm:main1d}.
\end{proof}
\section{Large scale response of edge modes of $2d$ systems}\label{sec:2d}
In this section we will adapt the previous analysis to study the response of edge currents and edge densities at the boundary of $2d$ topological insulators. 
\subsection{Lattice fermions on the cylinder}

\paragraph{Hamiltonian and Gibbs state.} Give $L\in 2\mathbb N +1$, we consider fermions on the two-dimensional lattice
\begin{equation}
\Gamma_{L} = \left\{ x\in \mathbb{Z}^{2}\, \Big| -\left\lfloor\frac{L}{2}\right\rfloor\leq x_{1} \leq  \left\lfloor\frac{L}{2}\right\rfloor,\ 0\leq x_{2}\leq L-1\right\}\;,
\end{equation}
endowed with periodic boundary conditions in the $x_{1}$ coordinate and Dirichlet boundary condition in the $x_{2}$ coordinate:
\begin{equation}
f(x_{1}, x_{2}) = f(x_{1} + L, x_{2})\qquad \text{and}\qquad f(x_{1}, 0) = f(x_{1}, L-1) = 0\qquad \text{for all $x\in \Gamma_{L}$}\;.
\end{equation}
We shall introduce the following distance on $\Gamma_{L}$:
\begin{equation}
| x - y |_{L}^{2} = \min_{n\in \mathbb{Z}} | x_{1} - y_{1} + n L |^{2} + |x_{2} - y_{2}|^{2}\;.
\end{equation}
As for the one-dimensional case, we will allow for the presence of internal degrees of freedom; the resulting decorated lattice is $\Lambda_{L} = \Gamma_{L} \times S_{M}$, and we denote by ${\bf x} = (x, \sigma)$ with $x\in \Gamma_{L}$ and $\sigma \in S_{M}$ the points on $\Lambda_{L}$.

We shall suppose that the Hamiltonian $H$ on $\ell^{2}(\Lambda_{L})$ is the periodization of a Hamiltonian $H^{\infty}$ on $\ell^{2}((\mathbb{Z} \times [0, L-1] \cap \mathbb{Z}) \times S_{M})$:
\begin{equation}
H((x,\rho); (y,\rho')) = \sum_{a \in \mathbb{N}} H^{\infty}( (x + aL, \rho); (y,\rho') )\;.
\end{equation}
We assume that the Hamiltonian $H^{\infty}$ is finite-ranged and translation-invariant.
%
%
 Without loss of generality we can assume that the range is $\sqrt{2}$, up to increasing the number of internal degrees of freedom. As for the one-dimensional case, translation-invariance implies that the Hamiltonian $H$ can be fibered in momentum space. Let $k\in B_{L}$, with $B_{L}$ as in (\ref{eq:BLdef}). We define the Bloch Hamiltonian as:
\begin{equation}
\hat H_{\rho\rho'}(k; x_{2}, y_{2}) = \sum_{x} e^{-ikx} H_{\rho\rho'}((x, x_{2}); (0, y_{2}))\;.
\end{equation}
The operator $\hat H(k)$ is self-adjoint on $\ell^{2}\left(([0, L-1] \cap \mathbb{Z}\right) \times S_{M})$. By (\ref{eq:periodiz}), it is given by the restriction to $B_{L}$ of the Bloch Hamiltonian $\hat H^{\infty}(k)$ associated with $H^{\infty}$, which is defined for $k\in \mathbb{T}$. We shall make the following assumptions on the spectrum of the Bloch Hamiltonian and on the chemical potential $\mu$.
\begin{assumption}[Low-energy spectrum]\label{ass:B} There exists $\Delta > 0$ such that the following is true.
\begin{itemize}
\item[(i)] There exists $N \in \mathbb{N}$, disjoint sets $I_{\omega} \subset \mathbb T$ labelled by $\omega = 1, \ldots, N$, and strictly monotone, smooth functions $e_{\omega}: I_{\omega} \to \mathbb{R}$, such that
\begin{equation}
\sigma(H^{\infty}) \cap (\mu - \Delta,\, \mu + \Delta) = \bigcup_{\omega = 1}^{N} \mathrm{Ran}(e_{\omega})\;.
\label{eq:spectrum2}
\end{equation}
\item[(ii)] We introduce the $\omega$-Fermi point $k_{F}^{\omega} \in I_{\omega}$ and the $\omega$-Fermi velocity $v_{\omega}$ as
\begin{equation}
e_{\omega}(k_{F}^{\omega}) = \mu\;,\qquad v_{\omega} = \partial_{k} e_{\omega}(k_{F}^{\omega})\;. 
\end{equation}
Notice that $v_{\omega} \neq 0$, by the strict monotonicity of $e_{\omega}$.
\item[(iii)] For any $\omega=1,\dots, N$, and $k\in I_{\omega}$, $e_{\omega}(k)$ is a non-degenerate eigenvalue of $\hat H^{\infty}(k)$. Let $\xi^{\omega}(k)$ be the corresponding eigenfunction,
\begin{equation}
\hat H^{\infty}(k)\xi_{\omega}(k) = e_{\omega}(k)\xi_{\omega}(k)\;.
\end{equation}
We assume that $\xi_{\omega}(k)$ is exponentially localised near one of the boundaries of the cylinder:
\begin{equation}\label{eq:edgeest}
| \partial^{n}_{k}\xi_{\omega;\rho}(k;x_{2}) | \leq C_{n} e^{-c x_{2}}\qquad or \qquad | \partial^{n}_{k}\xi_{\omega;\rho}(k;x_{2})| \leq C_{n} e^{-c(L-x_{2})}\;,
\end{equation}
for all $n\in \N$.
\end{itemize}
\end{assumption}
\begin{remark} For short, we shall write $| \partial^{n}_{k}\xi_{\omega;\rho}(k;x_{2}) | \leq C_{n} e^{-c |x_{2}|_{\omega}}$, with the notation:
\begin{equation}
|x_{2}|_{\omega} = x_{2}\qquad \text{or}\qquad |x_{2}|_{\omega} = L - x_{2}\;,\qquad \text{for all $x_{2}$.}
\end{equation}
\end{remark}
\begin{figure}
    \centering
    \includegraphics[scale=0.5]{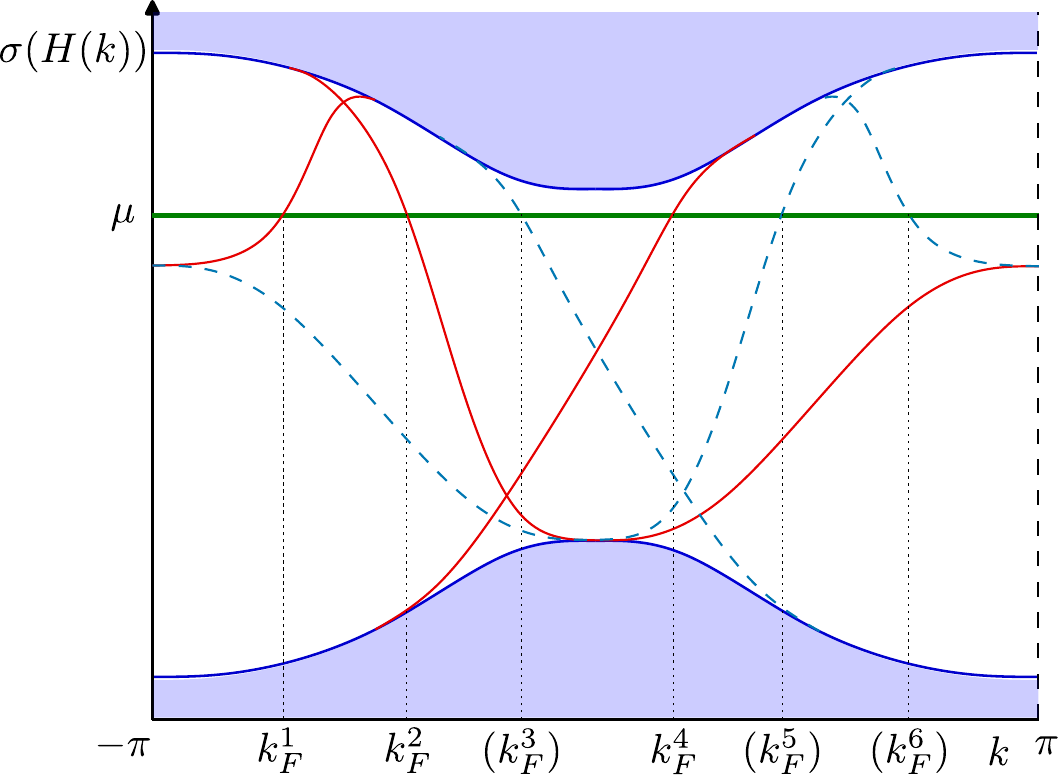}
    \caption{Typical form for the spectrum of $H$. The purple region corresponds to the ``bulk spectrum''. The red curves are the edge modes localized at $x_{2} = 0$, while the dotted curves are the edge modes localized at $x_{2} = L-1$.}
    \label{fig:edgespec}
\end{figure}
The second-quantized grand-canonical Hamiltonian is
\begin{equation}\label{eq:H2d}
\mathcal{H} = \sum_{{\bf x}, {\bf y} \in \Lambda_{L}} a^{*}_{{\bf x}} H({\bf x}; {\bf y}) a_{{\bf y}}
\end{equation}
with $a_{{\bf x}}$ and $a^{*}_{{\bf x}}$ the usual fermionic creation/annihilation operators, compatible with the boundary conditions on $\Gamma_{L}$. Similarly to (\ref{eq:foua}), (\ref{eq:foua2}) we define, for all $k\in B_{L}$:
\begin{equation}
\hat a_{(k, x_{2},\rho)} = \sum_{x_{1}\in \Gamma_{L}} a_{(x, \rho)} e^{-ikx_{1}}\;,\qquad \hat a^{*}_{(k, x_{2}, \rho)} = \sum_{x_{1}\in \Gamma_{L}} a^{*}_{(x,\rho)} e^{ikx_{1}}\;,
\end{equation}
which can be inverted as:
\begin{equation}
a_{(x,\rho)} = \frac{1}{L} \sum_{k \in B_{L}} e^{ikx_{1}} \hat a_{(k,x_{2}, \rho)},\qquad a^{*}_{(x,\rho)} = \frac{1}{L} \sum_{k \in B_{L}} e^{-ikx_{1}} \hat a^{*}_{(k, x_{2}, \rho)}\;.
\end{equation}
In terms of the momentum-space operators, the Hamiltonian reads:
\begin{equation}
\mathcal{H} = \frac{1}{L}\sum_{k \in B_{L}} \sum_{x_{2},y_{2} = 1}^{L}\sum_{\rho,\rho'\in S_{M}} \hat a^{*}_{(k, x_{2}, \rho)} \hat H_{\rho\rho'}(k; x_{2},y_{2}) \hat a_{(k, y_{2},\rho')}\;.
\end{equation}
The grand-canonical Gibbs state of the system at inverse temperature $\beta>0$ is:
\begin{equation}
\langle \mathcal{O} \rangle_{\beta,L} = \Tr \mathcal{O} \rho_{\beta, \mu, L}\;,\qquad \rho_{\beta, \mu, L} = \frac{e^{-\beta (\mathcal{H} - \mu \mathcal{N})}}{\Tr e^{-\beta (\mathcal{H} - \mu \mathcal{N})}}\;,
\end{equation}
with $\mu \in \mathbb{R}$ the chemical potential and $\mathcal{N}$ the number operator. We will suppose that $\mu$ is chosen so that Assumption \ref{ass:B} holds.

\paragraph{Perturbing the system.} Similarly to the one-dimensional case, we expose our system to a time-dependent and slowly varying perturbation, by introducing the Hamiltonian:
\begin{equation}\label{eq:H2dtime}
\mathcal{H}(\eta t) = \mathcal{H} + e^{\eta t} \mathcal{P}\qquad\quad \eta>0,\, t\leq0\;,
\end{equation}
where the perturbation $\mathcal P$ is given by
\begin{equation}\label{eq:Pdef1}
\mathcal{P} =  \theta \sum_{{\bf x} \in \Lambda_{L}} \mu(\theta x) a^{*}_{{\bf x}} a_{{\bf x}}\;;
\end{equation}
given $\mu_{\infty}\in\mathcal C^{\infty}_{c}(\R\times\R^{+})$ supported near the boundary $x_{2}=0$, the function $\mu(\theta x)$ is its periodization:
\begin{equation}\label{eq:period1}
\mu(\theta x) = \sum_{n\in \mathbb{Z}} \mu_{\infty}(\theta (x + n Le_{1}))\;.
\end{equation}
We can further write:
\begin{equation}\label{eq:muFou1}
\begin{split}
\mu(\theta x) &= \frac{1}{L} \sum_{p \in B_{L}} e^{ipx_{1}} \hat \mu_{\theta}(p,x_{2})\;,\\
\hat \mu_{\theta}(p,x_{2}) &= \sum_{m\in \mathbb{Z}} \frac{1}{\theta} \hat \mu_{\infty}((p + 2\pi m)/ \theta,\theta x_{2})\qquad \text{for all $p\in B_{L}$.}
\end{split}
\end{equation}
Let $\mathcal{U}(t;s)$ be the two-parameter unitary group generated by $\mathcal{H}(\eta t)$:
\begin{equation}
i\partial_{t} \mathcal{U}(t;s) = \mathcal{H}(\eta t) \mathcal{U}(t;s)\;,\qquad \mathcal{U}(s;s) = \mathbbm{1}\;.
\end{equation}
The evolution of the Gibbs state $\rho_{\beta, \mu, L}$ is given as:
\begin{equation}
\rho(t) = \lim_{T\to +\infty} \mathcal{U}(t;-T) \rho_{\beta, \mu, L} \mathcal{U}(t;-T)^{*}\;.
\end{equation}
As in the one-dimensional case, we will be interested in the variation of physical observables,
\begin{equation}\label{eq:diffO3}
\Tr \mathcal{O} \rho(t) - \Tr \mathcal{O} \rho_{\beta, \mu, L}\;,
\end{equation}
for $\mathcal{O}$ given by the density and by the current operator, defined below.
\paragraph{Density and current operators.} Similarly to Section \ref{sec:1d}, we set:
\begin{equation}
n_{{\bf x}} = a^{*}_{{\bf x}} a_{{\bf x}}\;,\qquad n_{x}=\sum_{\rho\in S_{M}} n_{(x,\rho)}\;.
\end{equation}
Proceeding as after Eq. (\ref{eq:conteq}), we obtain the lattice continuity equation:
\begin{equation}\label{eq:cons2d}
\partial_{t} \tau_{t}( n_{x} ) = -\mathrm{div}_{x} \tau_{t}(\vec \jmath_{x})\equiv - \mathrm d_{1} j_{1,x} - \mathrm d_{2} j_{2,x}\;,
\end{equation}
where $\mathrm d_{i}f_{x}=f(x)-f(x-e_{i})$ and with the current densities:
\begin{equation}
\begin{split}
j_{1,x} &= j_{(x,x+e_{1})} + \frac{1}{2}\left[ j_{(x,x+e_{1}-e_{2})} + j_{(x,x+e_{1}+e_{2})} +j_{(x-e_{2},x+e_{1})} + j_{(x+e_{2},x+e_{1})}\right]\\
j_{2,x} &= j_{(x,x+e_{2})} + \frac{1}{2}\left[ j_{(x,x+e_{2}-e_{1})} + j_{(x,x+e_{2}+e_{1})} +j_{(x-e_{1},x+e_{2})} + j_{(x+e_{1},x+e_{2})}\right]\;.
\end{split}
\end{equation}
%
%
%
Setting $j_{0,x} = n_{x}$, we define the smeared current and density operators as:
\begin{equation}\label{eq:jmuphi2d}
j_{\nu}(\mu_{\theta}) = \sum_{x\in \Gamma_{L}} j_{\nu,x} \theta \mu(\theta x)\;.
\end{equation}
Also, we introduce the current and the density operators associated with a strip of length $\ell$ in proximity of the $x_{2} = 0$ boundary as:
\begin{equation}\label{eq:jell}
j_{\nu,x_{1}}^{\ell}:=\sum_{x_{2}=0}^{\ell-1} j_{\nu,(x_{1},x_{2})}\;.
\end{equation}
\begin{figure}
    \centering
    \includegraphics[scale=0.4]{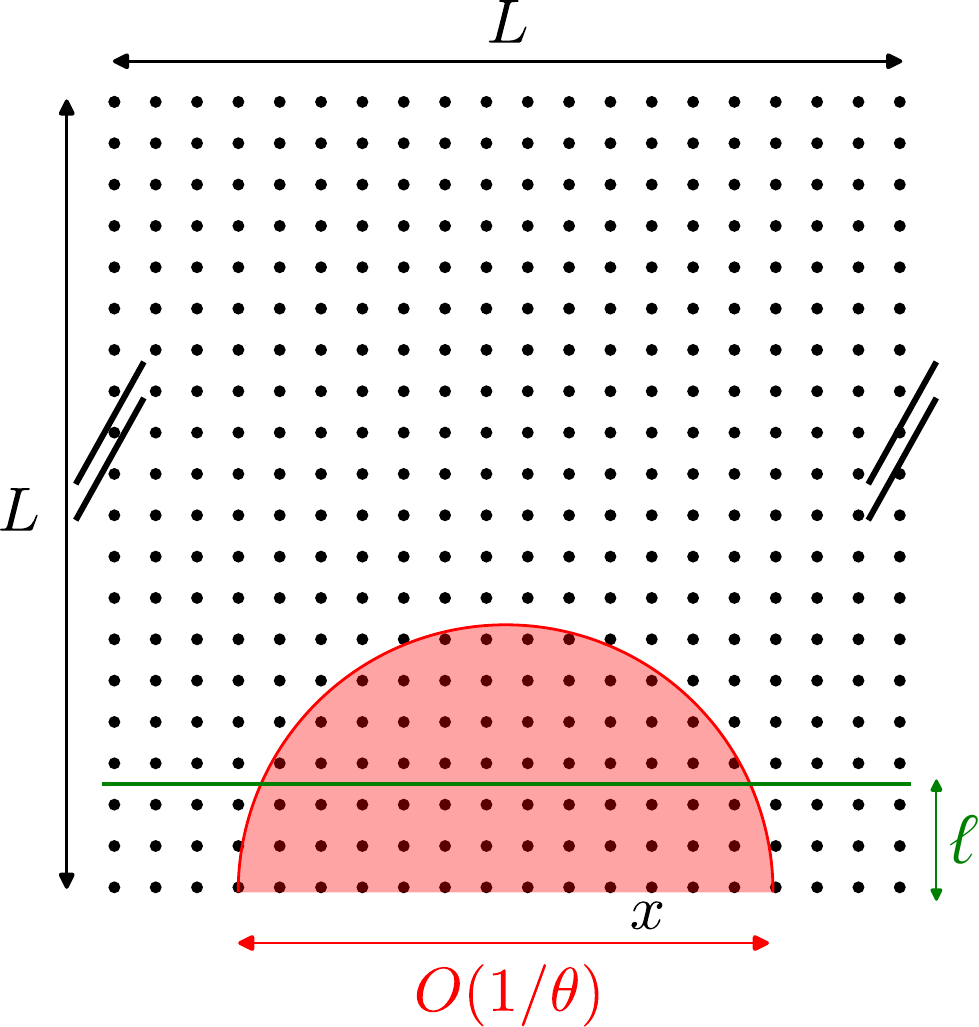}
    \caption{Representation of the lattice $\Gamma_{L}$: the red semi-circle is the typical support of the perturbation $\theta \mu(\theta x)$, whereas $1\ll \ell \ll 1/\theta$ is the length of the fiducial line entering the definition of the edge 2-current $j_{\nu,x}^{\ell}$.}
    \label{fig:edgeperturb}
\end{figure}
We shall call $j_{0,x_{1}}^{\ell}$ the edge density and $j_{1,x_{1}}^{\ell}$ the edge current, corresponding to the point $x_{1}$ at the edge $x_{2}=0$. Finally, in what follows it will also be convenient to consider the total current across the fiducial line at $x_{1}$, $j_{\nu,x_{1}}^{L} \equiv j_{\nu,x_{1}}$. In Fourier space, for $\nu = 0,1$:
\begin{equation}\label{eq:current2d}
\begin{split}
\hat j_{\nu, p} &= \sum_{x_{1}} e^{-ipx_{1}} j_{\nu,x_{1}} \\
&= \frac{1}{L} \sum_{k\in B_{L}} \sum_{y,y'=0}^{L-1} \big(a^{*}_{(k-p,y)}, \hat J_{\nu}(k,p;y,y') a_{(k,y')}\big)
\end{split}
\end{equation}
where the brackets denote summation over the internal degrees of freedom, as in (\ref{eq:currFou}), and:
\begin{equation}
\hat J_{0}(k,p)=1\qquad \hat J_{1}(k,p)=i\frac{\hat H(k)-\hat H(k-p)}{1-e^{-ip}}\;.
\end{equation}
In terms of these operators, the current (\ref{eq:jell}) can be expressed in momentum space as:
\begin{equation}
\hat j^{\ell}_{\nu, p} = \frac{1}{L} \sum_{k\in B_{L}} \sum_{y,y'=0}^{L-1} \big(a^{*}_{(k-p,y)}, \hat J^{\ell}_{\nu}(k,p;y,y') a_{(k,y')}\big)
\end{equation}
where $\hat J^{\ell}_{\nu}(k,p;y,y')$ is a bounded and smooth kernel such that:
\begin{equation}
\begin{split}
\hat J^{\ell}_{\nu}(k,p;y,y') &= 0\quad\qquad\qquad\qquad \text{if $y > \ell$ or $y' > \ell$,} \\
\hat J^{\ell}_{\nu}(k,p;y,y') &= \hat J_{\nu}(k,p;y,y')\qquad \text{if $y < \ell-1$ and $y' < \ell-1$.}
\end{split}
\end{equation}

\subsection{Validity of edge linear response}
In analogy with the one-dimensional case, let us define:
\begin{equation}\label{eq:resp2d}
\chi^{\beta, L,\ell}_{\nu}(x;\eta,\theta) := \frac{1}{\theta} \Big( \Tr j^{\ell}_{\nu,x} \rho(0) - \Tr j^{\ell}_{\nu,x} \rho_{\beta, \mu, L}\Big)\qquad \nu = 0, 1\;.
\end{equation}
The quantity $\chi^{\beta, L,\ell}_{\nu}$ describes the edge response of the system, after introducing a time-dependent perturbation in proximity of the edge, as in (\ref{eq:Pdef1}).  We are interested in studying this quantity in the following order of limits: first $L \to \infty$; then $\beta \to \infty$; then $\eta, \theta \to 0$; and then $\ell \to \infty$. 
%
%
%
\begin{theorem} [Edge response of two-dimensional systems] \label{thm:main2d} Let $\theta = a\eta$. There exist constants $w,\gamma,\eta_{0} > 0$ such that, for all $0\leq\eta<\eta_{0}$ and $a \leq w\left|\log\eta\right|$, for $\beta, L$ large enough:
\begin{equation}\label{eq:mainclaim2d}
\begin{split}
&\chi^{\beta, L,\ell}_{\nu}(x;\eta,\theta) \\
&\quad =-\sum_{\omega}^{*}\frac{v_{\omega}^{\nu}}{2\pi|v_{\omega}|}\int_\R \hat\mu_{\infty}(q,0)e^{iq\theta x}\frac{v_\omega q}{-i/a+v_\omega q}\,\frac{dq}{2\pi} + O(\ell\eta^{\gamma})+O( e^{-c\ell}) + O\Big(\frac{\ell}{\eta^{4} \beta}\Big)
\end{split}
\end{equation}
where the asterisk denotes summation over the edge modes localised near the lower edge, recall (\ref{eq:edgeest}), and where $v_{\omega}^{\nu}=\delta_{\nu,0}+ v_{\omega}\delta_{\nu,1}$.
\end{theorem}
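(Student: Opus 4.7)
The plan is to adapt the entire strategy of Theorem \ref{thm:main1d}, viewing the edge modes as one-dimensional chiral fermions embedded in the quasi-$1d$ geometry of the cylinder; the transverse coordinate $y_{2}$ plays the role of an additional internal index. The key structural observation, built into Assumption \ref{ass:B}, is that the spectrum of $\hat H^{\infty}(k)$ within $\Delta$ of $\mu$ consists only of edge modes, with eigenfunctions $\xi_{\omega}(k)$ exponentially localised near one of the two boundaries. Since both the observable $j^{\ell}_{\nu,x_{1}}$ and the perturbation $\theta\mu(\theta x)$ live near the lower edge $x_{2}=0$, only modes localised there will contribute at leading order, reproducing the $1d$ chiral picture, while the upper-edge modes contribute errors of size $O(e^{-c\ell})$ via evanescent overlaps across the cylinder width.

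First, I would introduce the auxiliary dynamics with the exponential-switch regularisation $\eta \mapsto \eta_{\beta}\in (2\pi/\beta)\mathbb{N}$ and the smooth UV cutoff on the perturbation, and bound the resulting error via a Lieb--Robinson argument adapting Proposition \ref{prp:LR}. Since $j^{\ell}_{\nu,x_{1}}$ has support of volume $O(\ell)$, the commutator estimate picks up an extra $\ell$-factor producing the $\ell/(\eta^{3}\beta)$ remainder in (\ref{eq:mainclaim2d}). The Wick rotation of Proposition \ref{prop:wick} then expresses the full edge response as a sum of time-ordered Euclidean cumulants involving the kernels $\hat J^{\ell}_{\nu}(k,p;y_{2},y_{2}')$ and the two-point function $g(\underline{k};y_{2},y_{2}')$. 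I would next split $g=g_{s}+g_{r}$, with singular part
\begin{equation}
g_{s}(\underline{k};y_{2},y_{2}') = \sum_{\omega=1}^{N} \frac{\chi^{\omega}_{\delta}(\underline{k})}{D_{\omega}(\underline{k})}\, \xi_{\omega}(k;y_{2})\overline{\xi_{\omega}(k;y_{2}')}
\end{equation}
and regular part bounded uniformly with Combes--Thomas-type decay in $|y_{2}-y_{2}'|$. The $n$-th order bubble diagram reduces to a trace over transverse coordinates of a product of such propagators contracted against $\hat J^{\ell}_{\nu}$.

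The crucial computation is that, after tracing the edge-mode projectors $P_{\omega}(k)=|\xi_{\omega}(k)\rangle\langle\xi_{\omega}(k)|$ against $\hat J^{\ell}_{\nu}(k,0)$, the Feynman--Hellmann relation for $e_{\omega}(k)$ together with the localisation estimate (\ref{eq:edgeest}) yields
\begin{equation}
\Tr\big[\hat J^{\ell}_{\nu}(k,0)\,P_{\omega}(k)\big] = v^{\nu}_{\omega} + O(e^{-c\ell})
\end{equation}
for lower-edge modes, and $O(e^{-c\ell})$ for upper-edge modes. Taylor expansion in the external momenta then delivers the analogue of Lemma \ref{lemma:singular}, with the asterisk restriction of (\ref{eq:mainclaim2d}) built in. The loop-cancellation Proposition \ref{prp:loopcanc} transfers verbatim, since it only uses the structure of the relativistic bubbles and the Ward identity (\ref{eq:ward}). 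The bounds of Propositions \ref{prp:loopest} and \ref{lemma:j1}, the linear-response evaluation of Proposition \ref{prop:linear} based on the lattice continuity equation (\ref{eq:cons2d}) projected at zero transverse momentum, and the resulting summability of the Duhamel series all carry over; the factor $\hat\mu_{\infty}(q,0)$ in (\ref{eq:mainclaim2d}) arises because the $y_{2}$-sum of $\mu(\theta x)$ against the exponentially localised edge eigenfunction collapses to $\mu_{\infty}(\theta x_{1},0)$ up to $O(\theta)$, which is absorbed into the $\eta^{\gamma}$ error.

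The main obstacle will be the careful bookkeeping of the $\ell$-dependence and the separation between lower-edge, upper-edge, and off-shell bulk contributions, uniformly in the order $n$ of the Duhamel expansion. One must ensure that the extra $\ell$-factor in the Lieb--Robinson step and the $y_{2}$-summations arising in the regular-propagator contributions are compatible with the $\eta^{\gamma}$ decay (which pins down an admissible growth rate of $\ell$ with $\eta$), and that the $e^{-c\ell}$ smallness of the upper-edge contributions propagates through the high-order estimates of Proposition \ref{prp:loopest}, which already saturate the cancellation mechanism at the $\theta$-dependent depth $n(\theta)$. Verifying that the error accumulates at worst polynomially in $n$, so that the sum over $n$ remains controlled by $\ell\eta^{\gamma}+e^{-c\ell}+\ell/(\eta^{3}\beta)$, is the technically most delicate step.
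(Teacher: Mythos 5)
Your proposal follows essentially the same route as the paper: auxiliary dynamics with the $\ell$-enhanced Lieb--Robinson bound, Wick rotation, the splitting $g=g_{\text{s}}+g_{\text{r}}$ with edge-mode projectors, the Feynman--Hellmann identification of $v^{\nu,\ell}_{\omega}$, transfer of the $1d$ loop cancellation after summing the transverse variables against $\|\xi_{\omega}(k_{F}^{\omega};y)\|^{2}$, and the continuity equation summed over $x_{2}$ to fix the linear response. The only minor inaccuracy is attributing an $O(e^{-c\ell})$ error to the upper-edge modes: their contribution is actually $O(e^{-c(L-\ell)})$ (negligible as $L\to\infty$), while the $O(e^{-c\ell})$ term comes from truncating $\hat J_{\nu}$ to $\hat J^{\ell}_{\nu}$ for the lower-edge modes, which you do also capture.
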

\begin{corollary}\label{cor:2d}
In particular, we can isolate three regimes.
\begin{enumerate}
\item Suppose that $a \to 0$ as $\eta\to0^{+}$. Then, 
\begin{equation}
\chi^{\beta,L,\ell}_{\nu}(x,\eta,\theta) = O(a) + O(\ell\eta^{\gamma})+O( e^{-c\ell}) + O\Big(\frac{\ell}{\eta^{4} \beta}\Big)\;.
\end{equation}
\item Suppose that $a$ is constant in $\eta$. Then, calling $\Theta$ the Heaviside step function:
\begin{equation}\label{eq:explicit2d}
\begin{split}
\chi^{\beta,L,\ell}_{\nu}(x;\eta,\theta) &= -\sum_{\omega}^{*}\frac{v_{\omega}^{\nu}}{2\pi|v_{\omega}|}\int_\R \mu_{\infty}(\theta x-y,0)\Big[\delta(y)-\frac{1}{a|v_{\omega}|} e^{-y/av_{\omega}}\Theta(y/v_{\omega})\Big]\,dy\\&\quad + O(\ell\eta^{\gamma})+O( e^{-c\ell}) + O\Big(\frac{\ell}{\eta^{4} \beta}\Big)\;.
\end{split}
\end{equation}
\item Finally, suppose that $a\to \infty$ as $\eta \to 0^{+}$. Then:
\begin{equation}
\chi^{\beta,L,\ell}_{\nu}(x;\eta,\theta) = -\mu_{\infty}(\theta x,0)\sum_{\omega}^{*}\frac{v_{\omega}^{\nu}}{2\pi|v_{\omega}|} + O(1/a) + O(\ell\eta^{\gamma})+O( e^{-c\ell}) + O\Big(\frac{\ell}{\eta^{4} \beta}\Big)\;;
\end{equation}
in particular, 
\begin{equation}\label{eq:edgecond}
\lim_{\ell\to\infty}\lim_{\eta\to0^{+}} \lim_{\beta \to \infty} \lim_{L\to \infty} \chi_{1}^{\beta, L, \ell}(x;\eta,\theta)=-\mu_{\infty}(0,0)\sum_{\omega}^{*}\frac{\mathrm{sgn}(v_{\omega})}{2\pi}\;.
\end{equation}
\end{enumerate}
\end{corollary}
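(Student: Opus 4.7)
The plan is to start from the conclusion of Theorem \ref{thm:main2d}, which already isolates the explicit integral
\[
I_\omega(a,\theta x):=\int_\R \hat\mu_\infty(q,0)\,e^{iq\theta x}\,\frac{v_\omega q}{-i/a+v_\omega q}\,\frac{dq}{2\pi}
\]
against an additive error of order $O(\ell\eta^\gamma)+O(e^{-c\ell})+O(\ell/(\eta^{3}\beta))$ that is uniform in $a$ within the allowed range $a\leq w\left|\log\eta\right|$. The corollary is therefore purely an asymptotic analysis of $I_\omega(a,\theta x)$ in the three regimes $a\to 0$, $a$ fixed, and $a\to\infty$; the error terms quoted in each statement are inherited directly from Theorem~\ref{thm:main2d}. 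The only real work is the small bookkeeping to verify the $O(1/a)$ remainder in regime~(3) and the convergence of the iterated limit in \eqref{eq:edgecond}.

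For regime~(1), $a\to 0^+$, expand the kernel as
\[
\frac{v_\omega q}{-i/a+v_\omega q}=\frac{i a v_\omega q}{1+i a v_\omega q}=i a v_\omega q+O\bigl(a^{2}q^{2}\bigr),
\]
uniformly for $q$ in compact sets. Since $\mu_\infty(\,\cdot\,,0)\in \mathcal{C}^{\infty}_{c}$, one has $\int_\R (1+|q|)\,|\hat\mu_\infty(q,0)|\,dq<\infty$, so dominated convergence yields $|I_\omega(a,\theta x)|\leq Ca$ uniformly in $x$, which is the first claim.

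For regime~(2), $a$ constant, the integral is evaluated by repeating verbatim the contour argument of Proposition~\ref{prop:linear}(ii), with $\hat\mu_\infty(q,0)$ replacing $\hat\mu_\infty(q)$ and with the sum over all Fermi points replaced by $\sum_\omega^{*}$ over lower-edge modes. One writes
\[
\frac{v_\omega q}{-i/a+v_\omega q}=1+\frac{i/a}{-i/a+v_\omega q};
\]
the first summand integrates against $\hat\mu_\infty(q,0)e^{iq\theta x}$ to give $\mu_\infty(\theta x,0)$, while the Cauchy residue theorem applied to the second summand (closing the contour in the upper or lower half plane according to the signs of $z=\theta x - y$ and of $v_\omega$) reproduces the Heaviside kernel $-\frac{1}{a|v_\omega|}e^{-y/(av_\omega)}\Theta(y/v_\omega)$ appearing in \eqref{eq:explicit2d}.

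For regime~(3), $a\to\infty$, the real-space identity obtained in regime~(2) gives at once
\[
I_\omega(a,\theta x)-\mu_\infty(\theta x,0)=-\frac{1}{a|v_\omega|}\int_\R \mu_\infty(\theta x-y,0)\,e^{-y/(av_\omega)}\Theta(y/v_\omega)\,dy,
\]
which is bounded in absolute value by $\|\mu_\infty(\,\cdot\,,0)\|_{L^{1}}/(a|v_\omega|)=O(1/a)$ uniformly in $x$, since $e^{-y/(av_\omega)}\Theta(y/v_\omega)\leq 1$; this accounts for the $O(1/a)$ remainder in the third bullet. The edge conductance claim \eqref{eq:edgecond} then follows by setting $\nu=1$, so that $v_\omega^{1}/(2\pi|v_\omega|)=\mathrm{sgn}(v_\omega)/(2\pi)$, and by observing that the constraint $a\leq w\left|\log\eta\right|$ forces $\theta=a\eta\to 0$ as $\eta\to 0^+$, whence $\mu_\infty(\theta x,0)\to \mu_\infty(0,0)$ for each fixed $x$ by continuity of $\mu_\infty$. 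Choosing $\ell=\ell(\eta)\to\infty$ slowly enough that $\ell\eta^\gamma\to 0$ (e.g.\ $\ell=\eta^{-\gamma/2}$) makes every error term in \eqref{eq:mainclaim2d} vanish in the iterated limit $L\to\infty$, $\beta\to\infty$, $\eta\to 0^+$, $\ell\to\infty$, and together with the $O(1/a)$ estimate above yields \eqref{eq:edgecond}.
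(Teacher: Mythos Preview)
Your proposal is correct and follows the same route as the paper, which derives the corollary from Theorem~\ref{thm:main2d} by analyzing the integral $I_\omega(a,\theta x)$ in each regime exactly as in the one-dimensional Proposition~\ref{prop:linear}. One small correction concerning the final iterated limit in \eqref{eq:edgecond}: there is no need (and it is in fact inconsistent with the stated order of limits) to couple $\ell$ with $\eta$; in the iterated limit $\ell$ is held fixed while $\eta\to 0^{+}$, so $O(\ell\eta^{\gamma})\to 0$ automatically for each fixed $\ell$, and only afterward does one send $\ell\to\infty$ to kill the remaining $O(e^{-c\ell})$ term.
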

\begin{remark} 
\begin{itemize}
\item[(i)] Theorem \ref{thm:main2d} and Corollary \ref{cor:2d} extend Theorem \ref{thm:main1d} and Corollary \ref{cor:1d} to the case of edge currents of $2d$ quantum systems. In contrast to the $1d$ case, the edge conductance (\ref{eq:edgecond}) might be non-zero. Whenever that is the case, by the bulk-edge duality the system is in a non-trivial topological phase.
\item[(ii)] The proof of Theorem \ref{thm:main2d} is an adaptation of the proof of Theorem \ref{thm:main1d}, discussed below. It relies on the emergent one-dimensional nature of the edge modes, and on the loop cancellation for chiral relativistic fermions, Proposition \ref{prp:loopcanc}.
\end{itemize}
\end{remark}
\subsection{Proof of Theorem \ref{thm:main2d}}
\paragraph{Auxiliary dynamics.} Let:
\begin{equation}
\mathcal{H}_{\beta,\eta}(t) = \mathcal{H} + \theta e^{\eta_{\beta} t} \sum_{{\bf x} \in \Lambda_{L}} \mu_{\alpha}(\theta x) a^{*}_{{\bf x}} a_{{\bf x}}\;,
\end{equation}
with:
\begin{equation}
\mu_{\alpha}(\theta x) := \frac{1}{L} \sum_{p \in B_{L}} e^{ipx} \hat\mu_{\alpha,\theta}(p,x_{2})\;,\qquad \hat\mu_{\alpha,\theta}(p,x_{2}) := \hat \mu_{\theta}(p,x_{2}) \chi( \theta^{\alpha-1} |p|_{\mathbb{T}})\;,
\end{equation}
where $\alpha\in(0,1)$. As in the $1d$ case, we are cutting off all momenta $p$ of norm greater than $2\theta^{1-\alpha}$. Let $\widetilde{\mathcal{U}}(t;s)$ the two-parameter unitary group generated by $\mathcal{H}_{\beta,\eta}(t)$. Then, the following holds.
\begin{proposition}[Approximation by the auxiliary dynamics.]\label{prp:LR2d} Under the same assumptions of Theorem \ref{thm:main2d}, it follows that, for any $m\in \mathbb{N}$:
\begin{equation}\label{eq:LRprop2d}
\Big\| \widetilde{\mathcal{U}}(t;-\infty)^{*} j_{\nu,x}^{\ell} \widetilde{\mathcal{U}}(t;-\infty) - \mathcal{U}(t;-\infty)^{*} j_{\nu,x}^{\ell} \mathcal{U}(t;-\infty)   \Big\| \leq C_{m} \frac{\theta^{1 + \alpha (m -1)}\ell}{\eta^{3}} + \frac{C\theta\ell}{\eta^{4} \beta}.
\end{equation}
\end{proposition}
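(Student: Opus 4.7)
The plan is to adapt the proof of Proposition \ref{prp:LR} to the 2d setting with the extended edge observable $j_{\nu,x}^{\ell}$. Two new features need to be accommodated: the two-dimensional geometry, which enlarges the Lieb-Robinson light-cone area by one extra power of $|t-s|$; and the fact that $j_{\nu,x}^{\ell}$ is supported on a strip of $\ell$ lattice sites rather than on a region of bounded diameter.

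First I would set $\mathcal{U}_{\text{I}}(t;s) := \mathcal{U}(t;s)\widetilde{\mathcal{U}}(t;s)^{*}$, rewrite the left-hand side of (\ref{eq:LRprop2d}) as $\|j_{\nu,x}^{\ell} - \mathcal{U}_{\text{I}}(t;-\infty)^{*} j_{\nu,x}^{\ell} \mathcal{U}_{\text{I}}(t;-\infty)\|$ and apply the fundamental theorem of calculus, exactly as in (\ref{eq:diffLR})--(\ref{eq:LR3}), to dominate it by
\begin{equation*}
\int_{-\infty}^{t} ds\, \Big( \big\| [\widetilde{\mathcal{U}}(t;s)^{*} j_{\nu,x}^{\ell} \widetilde{\mathcal{U}}(t;s), \mathcal{Q}_{1}(s)] \big\| + \big\| [\widetilde{\mathcal{U}}(t;s)^{*} j_{\nu,x}^{\ell} \widetilde{\mathcal{U}}(t;s), \mathcal{Q}_{2}(s)] \big\| \Big)\;,
\end{equation*}
with $\mathcal{Q}_{1}(s) = \theta(e^{\eta s} - e^{\eta_{\beta} s})\sum_{{\bf x}} \mu(\theta x) n_{{\bf x}}$ and $\mathcal{Q}_{2}(s) = \theta e^{\eta_{\beta} s}\sum_{{\bf x}}(\mu(\theta x)-\mu_{\alpha}(\theta x))n_{{\bf x}}$.

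The essential new input is a Lieb-Robinson estimate adapted to the strip observable. Writing $j_{\nu,x}^{\ell}=\sum_{x_{2}=0}^{\ell-1} j_{\nu,(x_{1},x_{2})}$ as a sum of $\ell$ uniformly bounded local currents, the standard Lieb-Robinson bound for the non-autonomous dynamics generated by the short-ranged, uniformly bounded Hamiltonian $\mathcal{H}_{\beta,\eta}$ gives, after summation over ${\bf y}\in \Lambda_{L}$,
\begin{equation*}
\sum_{{\bf y} \in \Lambda_{L}} \big\| [\widetilde{\mathcal{U}}(t;s)^{*} j_{\nu,x}^{\ell} \widetilde{\mathcal{U}}(t;s), n_{{\bf y}}] \big\| \leq C \ell (1+|t-s|)^{2}\;,
\end{equation*}
where the prefactor $\ell$ accounts for the length of the strip and $(1+|t-s|)^{2}$ is the area of the 2d Lieb-Robinson light-cone. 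Combining this bound with the smoothness estimate $\|\mu - \mu_{\alpha}\|_{\infty}\leq K_{m}\theta^{\alpha(m-1)}$, which transfers verbatim from the 1d argument applied to $\hat{\mu}_{\infty}(\cdot,\theta x_{2})$ uniformly in $x_{2}$, the $\mathcal{Q}_{2}$-contribution is bounded by
\begin{equation*}
\int_{-\infty}^{t} ds\, \theta e^{\eta_{\beta} s}\|\mu-\mu_{\alpha}\|_{\infty} C\ell (1+|t-s|)^{2} \leq C_{m}\frac{\theta^{1+\alpha(m-1)}\ell}{\eta^{3}}\;,
\end{equation*}
the extra $1/\eta$ compared to the 1d case arising from $\int_{-\infty}^{t}e^{\eta s}(1+|t-s|)^{2}ds \sim 1/\eta^{3}$.

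The $\mathcal{Q}_{1}$-contribution is treated exactly as in \cite[Proposition 4.1]{GLMP}, using $|e^{\eta s} - e^{\eta_{\beta} s}| \leq C(\eta_{\beta}-\eta)|s|e^{\eta s}$ together with $\eta_{\beta}-\eta\leq 2\pi/\beta$ and the same strip Lieb-Robinson bound; it yields the $C\theta\ell/(\eta^{3}\beta)$ error. Summing both contributions produces (\ref{eq:LRprop2d}). The main point requiring care, rather than a true obstacle, is verifying that the Lieb-Robinson light-cone estimate in two dimensions, for an observable of diameter $\ell$, is uniform in time for the non-autonomous dynamics generated by $\mathcal{H}_{\beta,\eta}(t)$; this is standard for uniformly bounded, finite-range time-dependent Hamiltonians, so no new ideas beyond those employed in \cite{GLMP} are needed.
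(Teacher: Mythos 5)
Your proposal is correct and follows essentially the same route as the paper: the paper proves this by adapting Proposition \ref{prp:LR}, noting precisely the two modifications you implement, namely the two-dimensional Lieb-Robinson cone of area $\sim|t-s|^{2}$ turning the $\eta^{-2}$ into $\eta^{-3}$, and the factor $\ell$ coming from the strip observable $j_{\nu,x}^{\ell}$ (equivalently $\|j_{\nu}^{\ell}\|\leq C\ell$). No substantive differences to report.
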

\begin{proof}
The proof is obtained by straightforwardly adapting the Proof of Proposition \ref{prp:LR} to this two-dimensional setting, with the following differences.  The first is that the Lieb-Robinson ball has volume proportional to $|t-s|^{2}$ (instead of $|t-s|$), and this produces errors proportional to $\eta^{-4}$ bound (instead of $\eta^{-3}$); the second is the presence of the factor $\ell$ in the estimate, due to $\| j_{\nu}^{\ell} \| \leq C\ell$.
\end{proof}
Thanks to Proposition \ref{prp:LR2d} we can write
\begin{equation}\label{eq:approxdyn2d}
\Tr j^{\ell}_{\nu,x} \rho(t) = \Tr j_{\nu,x}^{\ell} \widetilde\rho(t) + \mathcal E_{\nu}^{\beta,L,\ell}(x,t;\eta,\theta)\;,
\end{equation}
where $\tilde \rho(t) = \widetilde{\mathcal{U}}(t;-\infty) \rho_{\beta, \mu, L} \widetilde{\mathcal{U}}(t;-\infty)^{*}$ and the error term $\mathcal E_{\nu}^{\beta,L,\ell}(x,t;\eta,\theta)$ takes into account the approximation by the auxiliary dynamics:

\begin{equation}\label{eq:erraux}
| \mathcal E_{\nu}^{\beta,L,\ell}(x,t;\eta,\theta) | \leq C_{m} \frac{\theta^{1 + \alpha (m -1)}\ell}{\eta^{3}} + \frac{C\theta\ell}{\eta^{4} \beta}\;.
\end{equation}
We require that this error term, once divided by $\theta = a\eta$ (recall the definition (\ref{eq:resp2d})), vanishes as $\eta\to0^{+}$. This forces to choose the parameter $a$ such that:
\begin{equation}\label{eq:errcond2d}
a=o\left(\eta^{\frac{3}{\alpha(m-1)}-1}\right)\;.
\end{equation}
To allow for a divergent $a$ as $\eta \to 0^{+}$, we choose $m\in\N$ such that $\alpha(m-1)>3$. Later, we will have to introduce stronger constraints on $a$.
\paragraph{Analysis of the Duhamel series.} By Proposition \ref{prop:wick}, we can represent the Duhamel series as:
\begin{equation}
\begin{split}
& \Tr j^{\ell}_{\nu,x} \widetilde\rho(0) -  \Tr j^{\ell}_{\nu,x} \rho_{\beta, \mu, L} \\&= \sum_{n= 1}^{\infty}\frac{(-1)^{n}}{n!} \int_{[0,\beta)^{n}} d\underline{s}\, e^{-i\eta_{\beta}(s_{1} + \ldots +s_{n})}\langle {\bf T} \gamma_{s_{1}}(j_{0}(\mu_{\alpha,\theta}));  \cdots; \gamma_{s_{n}}(j_{0}(\mu_{\alpha,\theta})); j^{\ell}_{\nu,x}\rangle_{\beta,L}\;.
 \end{split}
\end{equation}
Proceeding as in (\ref{eq:currfourier})-(\ref{eq:densitytimefourier}), one obtains
\begin{equation}\label{eq:response2d}
\begin{split}
\Tr j^{\ell}_{\nu,x} \widetilde\rho(0) -  \Tr j^{\ell}_{\nu,x} \rho_{\beta, \mu, L} &= \sum_{n= 1}^{\infty}\frac{(-\theta)^{n}}{n!L^{n}} \sum_{\{p_{i}\}\in B_{L}^{n}} \sum_{\{y_{i}\} \in \{0, \ldots, L-1\}^{n}} \Big[\prod_{i=1}^{n} \hat\mu_{\alpha,\theta}(-p_{i}, y_{i})\Big] e^{-ip_{n+1}x} \\&\quad \cdot \frac{1}{\beta L}\langle {\bf T}\, \hat n_{\underline{p}_{1},y_{1}}\;; \cdots \;; \hat n_{\underline{p}_{n},y_{n}}\;; \hat\jmath^{\ell}_{\nu,\underline{p}_{n+1}} \rangle_{\beta, L}\;,
 \end{split}
\end{equation}
with the usual conventions $\ul p_{i} = (\eta_{\beta},p_{i})$, $\ul p_{n+1} =-\sum_{i=1}^{n} \ul p_{i}$. The proof of Theorem \ref{thm:main2d} is based on a detailed analysis of the cumulant expansion in (\ref{eq:response2d}). By Wick's rule:
\begin{equation}\label{eq:corrwick2d}
\begin{split}
&\frac{1}{\beta L}\langle {\bf T}\, \hat n_{\underline{p}_{1},y_{1}}\;; \cdots \;; \hat n_{\underline{p}_{n},y_{n}}\;; \hat\jmath^{\ell}_{\nu,\underline{p}_{n+1}} \rangle_{\beta, L} \\
&= - \sum_{\pi\in S_{n}} \int_{\beta,L} \frac{d\underline{k}}{(2\pi)^{2}} \sum_{z, z'} \sum_{\{y_{i}\}}  \Tr \Big[ \hat J^{\ell}_{\nu}(k, p_{n+1}; z,z')\prod_{i = 1}^{n+1} g\Big(\underline{k} + \sum_{j <i} \underline{p}_{\pi(j)}; y_{\pi(i-1)},y_{\pi(i)}\Big) \Big]
\end{split}
\end{equation}
where the trace is over the internal degrees of freedom, we set $y_{\pi(0)}\equiv z'$ and $y_{\pi(n+1)}\equiv z$, and where the propagator $g$ is given by
\begin{equation}
g(\underline{k}) = \frac{1}{ik_{0} + \hat H(k) - \mu}\;.
\end{equation}
As done after (\ref{eq:gfou}), we decompose the propagator in a singular plus a more regular part:
\begin{equation}\label{eq:g2d}
g(\underline{k}) = g_{\text{a}}(\underline{k}) + g_{\text{b}}(\underline{k})\;,
\end{equation}
where: 
\begin{equation}
 g_{\text{b}}(\underline{k}) = g(\underline{k}) [1-\chi(|\hat H(k) - \mu|/\Delta)]
\end{equation}
with $\mu,\Delta$ as in Assumption \ref{ass:B}. The propagator $ g_{\text{b}}(\underline{k})$ satisfies, by a Combes-Thomas argument (see {\it e.g.} \cite[Appendix A]{CMP} for more details):
\begin{equation}
\Big| \text{d}_{k_{0}}^{n_{0}} \text{d}_{k_{1}}^{n_{1}} g_{\text{b};\rho,\rho'}(\underline{k}; y,z) \Big| \leq \frac{C_{n_{0}, n_{1}}}{1 + |k_{0}|} e^{-c|x_{2} - y_{2}|}\;.
\end{equation}
Instead, the propagator $g_{\text{a}}(\underline{k})$ is given by:
\begin{equation}
g_{\text{a}}(\underline{k}) = \sum_{\omega=1}^{N} \frac{\chi(|e_{\omega}(k) - \mu|/\Delta)}{ik_{0}+e_{\omega}(k)-\mu} P_{\omega}(k)
\end{equation}
where $e_{\omega}$ is the dispersion relation of the $\omega$ edge mode, and $P_{\omega}$ is the rank-$1$ projector onto the edge mode $\xi_{\omega}$. Next, proceeding as after (\ref{eq:omegadelta}), we extract from $g_{\text{a}}(\underline{k})$ its relativistic part, up to remainder term which is more regular. All in all, we have:
\begin{equation}\label{eq:g2dsplit}
\begin{split}
&g(\underline{k}) = g_{\text{s}}(\underline{k}) + g_{\text{r}}(\underline{k}) \\
g_{\text{s}}(\underline{k}) = \sum_{\omega=1}^{N} \frac{\chi^{\omega}_{\delta}(\underline k)}{D_{\omega}(\underline k)} P_{\omega}(k)&\;,\qquad \Big\| \text{d}^{n}_{k_{\alpha}} g_{\mathrm{r}}(\underline{k}; y,z) \Big\| \leq \sum_{\omega = 1}^{N} \frac{C_{n}}{1+|k_{0}|} \frac{e^{-c|y-z|}}{\|\underline{k} - \underline{k}_{F}^{\omega}\|^{n}}\;,
\end{split}
\end{equation}
with $\chi^{\omega}_{\delta}(\underline k)$ and $D_{\omega}(\underline{k})$ as in (\ref{eq:omegadelta}). We omit the details. 

Proceeding as in (\ref{eq:split1d}), we rewrite the correlation function in (\ref{eq:corrwick2d}) as:
\begin{equation}
\begin{split}
&\frac{1}{\beta L}\langle {\bf T}\, \hat n_{\underline{p}_{1},y_{1}}\;; \cdots \;; \hat n_{\underline{p}_{n},y_{n}}\;; \hat\jmath^{\ell}_{\nu,\underline{p}_{n+1}} \rangle_{\beta, L} \\&\qquad = \widetilde S^{\beta,L,\ell}_{n;\nu}(\underline{p}_{1}, y_{1}, \ldots, \underline{p}_{n}, y_{n}) + \widetilde R^{\beta,L,\ell}_{n;\nu}(\underline{p}_{1}, y_{1}, \ldots, \underline{p}_{n}, y_{n})\;,
\end{split}
\end{equation}
where:
\begin{equation}\label{eq:550}
\begin{split}
&\widetilde S^{\beta,L,\ell}_{n;\nu}(\underline{p}_{1}, y_{1}, \ldots, \underline{p}_{n}, y_{n}) \\
&= - \sum_{\pi\in S_{n}} \int_{\beta,L} \frac{d\underline{k}}{(2\pi)^2} \sum_{z, z'}  \Tr \Big[ \hat J^{\ell}_{\nu}(k, p_{n+1}; z,z')\prod_{i = 1}^{n+1} g_{\text{s}}\Big(\underline{k} + \sum_{j <i} \underline{p}_{\pi(j)}; y_{\pi(i-1)},y_{\pi(i)}\Big) \Big]\;,
\end{split}
\end{equation}
and where the error term $\widetilde R^{\beta,L,\ell}_{n;\nu}$ contains at least one propagator $g_{\text{r}}$. The next lemma is the analogue of Lemma \ref{lemma:singular}.
\begin{lemma}[Structure of the singular part] For any $p_{1},\dots,p_{n}$ in the support of $\hat\mu_{\alpha,\theta}$,
\begin{equation}\label{eq:splitting2}
\widetilde S^{\beta,L,\ell}_{n;\nu}(\underline{p}_{1}, y_{1}, \ldots, \underline{p}_{n}, y_{n}) = S^{\beta,L,\ell}_{n;\nu}(\underline{p}_{1},y_{1}, \ldots, \underline{p}_{n},y_{n}) + T^{\beta,L,\ell}_{n;\nu}(\underline{p}_{1},y_{1}, \ldots, \underline{p}_{n},y_{n})
\end{equation}
where: for $g_{\omega}(\underline{k}) = \chi_{\delta}^{\omega}(\underline{k}) / D_{\omega}(\underline{k})$,
\begin{equation}\label{eq:singular2d}
\begin{split}
&S^{\beta,L,\ell}_{n;\nu}(\underline{p}_{1},y_{1}, \ldots, \underline{p}_{n},y_{n}) = - \sum_{\omega = 1}^{N} v^{\nu,\ell}_{\omega} \prod_{i=1}^{n} \| \xi_{\omega}(k_{F}^{\omega}; y_{i}) \|^{2} \int_{\beta,L} \frac{d\underline{k}}{(2\pi)^{2}} \sum_{\pi\in S_{n}} \prod_{i=1}^{n+1} g_{\omega}\Big(\ul k+\sum_{j< i} \ul p_{\pi(j)}\Big)\\
&T^{\beta,L,\ell}_{n;\nu}(\underline{p}_{1},y_{1}, \ldots, \underline{p}_{n},y_{n}) \\
&\quad = - \sum_{\omega = 1}^{N} \int_{\beta,L} \frac{d\underline{k}}{(2\pi)^{2}} \sum_{\pi\in S_{n}}f_{n,\omega}^{\nu,\ell}(k;p_{\pi(1)},y_{\pi(1)}, \dots,p_{\pi(n)}, y_{\pi(n)}) \prod_{i=1}^{n+1} g_{\omega}\Big(\ul k+\sum_{j< i} \ul p_{\pi(j)}\Big)\;,
\end{split}
\end{equation}
with $v^{\nu,\ell}_{\omega}$ such that:
\begin{equation}
\begin{split}
\big| v^{\nu,\ell}_{\omega} - \delta_{\nu,0} - \delta_{\nu,1} v_{\omega} \big| \leq Ce^{-c\ell} \qquad &\text{if the edge mode $\omega$ is localized at $x_{2} = 0$} \\
\big| v^{\nu,\ell}_{\omega}  \big| \leq Ce^{-c(L - \ell)} \qquad &\text{if the edge mode $\omega$ is localized at $x_{2} = L-1$}\;, 
\end{split}
\end{equation}
and $\| \xi_{\omega}(k_{F}^{\omega}; y_{i}) \|^{2} = \sum_{\rho=1}^{M} | \xi_{\omega;\rho}(k_{F}^{\omega}; y_{i}) |^{2}$. Furthermore, 
%
%
\begin{equation}\label{eq:rembound2}
|f_{n,\omega}^{\nu,\ell}(k;p_{1},y_{1}, \dots,p_{n}, y_{n})| \leq C \Big(|k-k_{F}^{\omega}|_{\mathbb{T}}\delta_{\nu,1}+ n\sum_{i=1}^{n} |p_{i}|_{\mathbb{T}}\Big)e^{- c\sum_{i=1}^{n} |y_{i}|_{\omega}}\;.
\end{equation}
\end{lemma}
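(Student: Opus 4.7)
The proof follows the template of the one-dimensional Lemma \ref{lemma:singular}, with two new ingredients to accommodate: the transverse coordinates $y_i$ carried by the density operators, and the fiducial-strip cutoff encoded in $\hat J^{\ell}_{\nu}$. First I would argue that only a single chirality $\omega$ contributes per loop. The cutoffs $\chi_\delta^{\omega}$ have pairwise disjoint supports by (\ref{eq:omegaomega'}), and $p_i \in \mathrm{supp}(\hat\mu_{\alpha,\theta})$ implies $|p_i|_{\mathbb{T}}\leq\theta^{1-\alpha}$, so the accumulated shift $\sum_{j<i}\underline{p}_{\pi(j)}$ is smaller than the separation between these supports for $\theta$ small. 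Substituting the rank-one projector $(P_{\omega}(k))_{\rho\rho'}(y,z) = \xi_{\omega,\rho}(k;y)\overline{\xi_{\omega,\rho'}(k;z)}$ and contracting internal indices along the chain of (\ref{eq:550}) then yields
\begin{equation*}
\widetilde S^{\beta,L,\ell}_{n;\nu} = -\sum_{\omega=1}^{N}\sum_{\pi\in S_n}\int_{\beta,L}\frac{d\underline{k}}{(2\pi)^2}\,\Big[\prod_{i=1}^{n+1} g_{\omega}(\underline{k}_i)\Big]\,t^{\nu,\ell}_{\omega}(k;\,p_{\pi(1)},y_{\pi(1)},\ldots,p_{\pi(n)},y_{\pi(n)}),
\end{equation*}
where $\underline{k}_i = \underline{k}+\sum_{j<i}\underline{p}_{\pi(j)}$ and
\begin{equation*}
t^{\nu,\ell}_{\omega} = \Big[\sum_{z,z'}\langle\xi_{\omega}(k_{n+1};z)\,|\,\hat J^{\ell}_{\nu}(k,p_{n+1};z,z')\,|\,\xi_{\omega}(k;z')\rangle_{\rho}\Big]\prod_{i=1}^{n}\langle\xi_{\omega}(k_i;y_{\pi(i)})\,|\,\xi_{\omega}(k_{i+1};y_{\pi(i)})\rangle_{\rho},
\end{equation*}
with $\langle\cdot\,|\,\cdot\rangle_{\rho}$ the partial inner product over internal indices at fixed transverse site.

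The leading value $t^{\nu,\ell}_{\omega}|_{p=0,\,k=k_F^{\omega}}$ is obtained by collapsing the partial overlaps to $\|\xi_{\omega}(k_F^{\omega};y_{\pi(i)})\|^2$ and recognising the current matrix element as $v^{\nu,\ell}_{\omega} := \langle \xi_{\omega}(k_F^{\omega})|\hat J^{\ell}_{\nu}(k_F^{\omega},0)|\xi_{\omega}(k_F^{\omega})\rangle$. For $\nu = 0$, $\hat J_0 = \mathbbm{1}$ restricted to $\{y<\ell\}$ gives $v^{0,\ell}_{\omega} = \sum_{y<\ell}\|\xi_{\omega}(k_F^{\omega};y)\|^2$, which by normalisation and (\ref{eq:edgeest}) is $1+O(e^{-c\ell})$ for lower-edge modes and $O(e^{-c(L-\ell)})$ for upper-edge modes. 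For $\nu = 1$, the Hellmann--Feynman identity gives $\langle\xi_{\omega}|\partial_{k}\hat H|\xi_{\omega}\rangle = v_{\omega}$ at $k=k_F^{\omega}$, together with the analogous exponential correction from the strip restriction. These are precisely the bounds claimed for $v^{\nu,\ell}_{\omega}$.

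The remainder $f^{\nu,\ell}_{n,\omega} := t^{\nu,\ell}_{\omega} - v^{\nu,\ell}_{\omega}\prod_i \|\xi_{\omega}(k_F^{\omega};y_i)\|^2$ is then estimated by Taylor expansion in the displacements $k_{i+1}-k_i = \underline{p}_{\pi(i)}$ and in $k-k_F^{\omega}$ and $p_{n+1}$. Each $|p_i|_{\mathbb T}$ contribution arises from expanding one of the $n$ partial overlaps or the current matrix element in $p_{n+1}$, yielding the $n\sum_i |p_i|_{\mathbb T}$ term; the $|k-k_F^{\omega}|_{\mathbb T}\delta_{\nu,1}$ term arises from the explicit $k$-dependence of $\hat J_1(\cdot,0)=\partial_{k}\hat H$, which is absent for $\nu = 0$ since $\hat J_0$ is strictly $k$-independent. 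The exponential decay $e^{-c\sum_i |y_i|_{\omega}}$ follows by applying (\ref{eq:edgeest}) to each $\xi_{\omega,\rho}(k;y_i)$ and to its $k$-derivatives, which are the only $y_i$-dependent quantities in every term of $t^{\nu,\ell}_{\omega}$. The main technical point is to ensure that this exponential survives uniformly on $\mathrm{supp}(\chi_\delta^{\omega})$ after the Taylor expansion, and in particular to handle the $k$-dependence of the partial norms $\|\xi_{\omega}(k;y_i)\|^2$ (absent in 1D but nontrivial here): the derivative bound of Assumption \ref{ass:B}(iii) absorbs any $k$-variation of these factors into the already-present $e^{-c|y_i|_{\omega}}$, so that the contribution can be grouped with the $|p_i|_{\mathbb T}$ terms without generating a bare $|k-k_F^{\omega}|_{\mathbb T}$ factor for $\nu = 0$.
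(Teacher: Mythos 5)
Your proposal follows the same route as the paper: chirality separation using the disjoint supports of the $\chi^{\omega}_{\delta}$ together with $|p_{i}|_{\mathbb{T}}\leq\theta^{1-\alpha}$, contraction of the chain of rank-one projectors into partial overlaps at fixed transverse sites times the matrix element of $\hat J^{\ell}_{\nu}$ between edge eigenvectors, identification of $v^{\nu,\ell}_{\omega}$ via normalization (for $\nu=0$) and Feynman--Hellmann (for $\nu=1$) with $O(e^{-c\ell})$, resp. $O(e^{-c(L-\ell)})$, corrections from the strip cutoff, and Taylor expansion of the rest using the smoothness and exponential localization of $\xi_{\omega}$ in (\ref{eq:edgeest}). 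Up to that point the argument matches the paper's and is sound.

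The gap is in your last step, the claim that the $k$-dependence of the partial norms $\|\xi_{\omega}(k;y_{i})\|^{2}$ ``can be grouped with the $|p_{i}|_{\mathbb{T}}$ terms without generating a bare $|k-k_{F}^{\omega}|_{\mathbb{T}}$ factor for $\nu=0$''. This absorption does not work: set all spatial momenta $p_{i}=0$ and take $\nu=0$, $\ell=L$. Then the exact value of the contracted trace is $\prod_{i}\|\xi_{\omega}(k;y_{i})\|^{2}$ (since $v^{0,L}_{\omega}(k)=1$ by normalization), while the leading term subtracted in (\ref{eq:singular2d}) is $\prod_{i}\|\xi_{\omega}(k_{F}^{\omega};y_{i})\|^{2}$; their difference is generically of size $|k-k_{F}^{\omega}|_{\mathbb{T}}\,e^{-c\sum_{i}|y_{i}|_{\omega}}$ because the transverse profile of an edge mode does depend on $k$, and it obviously cannot be bounded by $\sum_{i}|p_{i}|_{\mathbb{T}}=0$. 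Pointing at the derivative bounds in Assumption \ref{ass:B}(iii) only controls the prefactor of $|k-k_{F}^{\omega}|_{\mathbb{T}}$ and preserves the exponential in $y_{i}$; it does not remove the $|k-k_{F}^{\omega}|_{\mathbb{T}}$ factor. So what your expansion honestly yields is the weaker bound $C\big(|k-k_{F}^{\omega}|_{\mathbb{T}}+n\sum_{i}|p_{i}|_{\mathbb{T}}\big)e^{-c\sum_{i}|y_{i}|_{\omega}}$ for both values of $\nu$, i.e.\ without the $\delta_{\nu,1}$ (note this is precisely the form actually invoked later, in (\ref{eq:cp}), so nothing downstream is affected; the $\delta_{\nu,1}$ is exact in the $1d$ Lemma \ref{lemma:singular} because there $\Tr P_{\omega}(k)\equiv 1$, but in the quasi-$1d$ setting the replacement $k\to k_{F}^{\omega}$ in the factors $\|\xi_{\omega}(\cdot;y_{i})\|^{2}$ costs an error of order $|k-k_{F}^{\omega}|_{\mathbb{T}}$ regardless of $\nu$). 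You should either prove the sharp form by a genuinely different bookkeeping, or state and use the weaker bound; the hand-waving as written does not establish (\ref{eq:rembound2}) as stated.
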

\begin{proof} Let us write $g_{\text{s}}(\underline{k}) = \sum_{\omega} g_{\omega}(\underline{k}) P_{\omega}(k)$ in (\ref{eq:550}). As in the one-dimensional case, for $\eta, |p_{i}|$ small enough, in the loop integral no jump between different chiralities is allowed. Thus, we have: 
\begin{equation}\label{eq:trace}
\begin{split}
&\widetilde S^{\beta,L,\ell}_{n;\nu}(\underline{p}_{1}, y_{1}, \ldots, \underline{p}_{n}, y_{n}) \\
&= - \sum_{\pi\in S_{n}}\sum_{\omega = 1}^{N} \int_{\beta,L} \frac{d\underline{k}}{(2\pi)^2} \sum_{z, z'}  \Tr \Big[ \hat J^{\ell}_{\nu}(k, p_{n+1}; z,z')\prod_{i = 1}^{n+1} P_{\omega} \Big(k+ \sum_{j <i} p_{\pi(j)}; y_{\pi(i-1)},y_{\pi(i)}\Big)\Big] \\
&\qquad \cdot \prod_{i=1}^{n+1} g_{\omega}\Big(\underline{k} + \sum_{j <i} \underline{p}_{\pi(j)}\Big)\;.
\end{split}
\end{equation}
Consider the trace in (\ref{eq:trace}). We rewrite it as:
\begin{equation}\label{eq:555}
\begin{split}
&\sum_{z, z'}  \Tr \Big[ \hat J^{\ell}_{\nu}(k, p_{n+1}; z,z')\prod_{i = 1}^{n+1} P_{\omega} \Big(k + \sum_{j <i} p_{\pi(j)}; y_{\pi(i-1)},y_{\pi(i)}\Big)\Big] \\
& = \sum_{z, z'}  \Tr \Big[ \hat J^{\ell}_{\nu}(k, 0; z,z')\prod_{i = 1}^{n+1} P_{\omega} \Big(k; y_{\pi(i-1)},y_{\pi(i)}\Big)\Big] + \mathcal{E}_{n;\omega;1}^{\nu}(k,p_{1}, \ldots, p_{n}; y_{1}, \ldots, y_{n})\;;
\end{split}
\end{equation}
using the decay estimates of the edge modes (\ref{eq:edgeest}) and the smoothness of the kernel of $\hat J_{\nu}(k, p)$, the error term can be estimated as:
\begin{equation}
| \mathcal{E}_{n;\omega;1}^{\nu}(k,p_{1}, \ldots, p_{n}; y_{1}, \ldots, y_{n}) | \leq Cne^{-c\sum_{i = 1}^{n} |y_{i}|_{\omega}} \sum_{i} |p_{i}|_{\mathbb{T}}\;.
\end{equation}
Consider now the main term in (\ref{eq:555}). 
%
%
%
%
%
%
%
Let:
\begin{equation}
\sum_{z,z'} \sum_{\rho,\rho'}\overline{\xi_{\omega,\rho}(k; z)} \hat J^{\ell}_{\nu;\rho,\rho'}(k, 0; z,z') \xi_{\omega,\rho'}(k; z') =: v^{\nu,\ell}_{\omega}(k)\;;
\end{equation}
if $\omega$ labels an edge mode localized at $x_{2} = L-1$, then $|v^{\nu,\ell}_{\omega}(k)|\leq Ce^{-c(L - \ell)}$ by the exponential decay of the edge modes. Instead, if $\omega$ labels an edge mode localized at $x_{2} = 0$,
\begin{equation}
v^{0,L}_{\omega}(k) = 1\;,\qquad v^{1,L}_{\omega}(k) = \partial_{k} e_{\omega}(k)\;,
\end{equation}
where the last identity follows from the Feynman-Hellmann argument. Also, by the decay and the regularity of the edge modes,
\begin{equation}
\Big| v^{\nu,\ell}_{\omega}(k) - v^{\nu,L}_{\omega}(k)\Big| \leq C e^{-c|\ell|_{\omega}}\;,\qquad \Big| v^{\nu,\ell}_{\omega}(k) - v^{\nu,\ell}_{\omega}(k_{F}^{\omega}) \Big| \leq C|k - k_{F}^{\omega}|_{\mathbb{T}}\;.
\end{equation}
We then write:
\begin{equation}\label{eq:FH}
\sum_{z, z'}  \Tr \Big[ \hat J^{\ell}_{\nu}(k, 0; z,z')\prod_{i = 1}^{n+1} P_{\omega} \Big(k; y_{\pi(i)},y_{\pi(i-1)}\Big)\Big] = v_{\omega}^{\nu,\ell}(k) \prod_{i=1}^{n} \| \xi_{\omega}(k; y_{i}) \|^{2}\;.
\end{equation}
Combining (\ref{eq:trace})-(\ref{eq:FH}), we obtain the leading term in (\ref{eq:splitting2}), using once more the regularity properties (\ref{eq:edgeest}) of the edge modes. The bound for the error term (\ref{eq:rembound2}) follows putting together all the estimates for the error terms accumulated. We omit the details.
\end{proof}
Let:
\begin{equation}\label{eq:errR2d}
R^{\beta,L,\ell}_{n;\nu}(\underline{p}_{1},y_{1}, \ldots, \underline{p}_{n},y_{n}) = T^{\beta,L,\ell}_{n;\nu}(\underline{p}_{1},y_{1}, \ldots, \underline{p}_{n},y_{n}) + \widetilde{R}^{\beta,L,\ell}_{n;\nu}(\underline{p}_{1},y_{1}, \ldots, \underline{p}_{n},y_{n})\;.
\end{equation}
Recalling the definition (\ref{eq:resp2d}) of the response $\chi^{\beta, L,\ell}_{\nu}(x;\eta,\theta)$, and the expansion (\ref{eq:response2d}), we have:
\begin{equation}\label{eq:respexp2d}
\begin{split}
\chi^{\beta, L,\ell}_{\nu}(x;\eta,\theta) &= -\sum_{n= 1}^{\infty}\frac{(-\theta)^{n}}{n!L^{n}} \sum_{\{p_{i}\}\in B_{L}^{n}} \sum_{\{y_{i}\} \in \{0, \ldots, L-1\}^{n}} \Big[\prod_{i=1}^{n} \hat\mu_{\alpha,\theta}(-p_{i}, y_{i})\Big] e^{ip_{n+1}x} \\&\quad \cdot [S_{n;\nu}^{\beta, L,\ell}(\underline{p}_{1},y_{1}, \ldots, \underline{p}_{n},y_{n})+ R_{n;\nu}^{\beta, L, \ell}(\underline{p}_{1},y_{1}, \ldots, \underline{p}_{n},y_{n})] + \frac{1}{\theta}\mathcal E_{\nu}^{\beta,L,\ell}(x;\eta,\theta)\;,
\end{split}
\end{equation}
with the error $\mathcal E_{\nu}^{\beta,L,\ell}$ satisfying (\ref{eq:erraux}).
\paragraph{Evaluation of the linear response.} Let us consider the $\beta, L \to \infty$ contribution to the linear response,
\begin{equation}\label{eq:linlin2d}
\chi^{\ell, \text{lin}}_{\nu}(x;\eta,\theta)= - \int_{\mathbb{T}} \frac{dp}{2\pi} \sum_{y = 0}^{\infty} \hat\mu_{\alpha,\theta}(-p,y) e^{-ip x} [S_{1;\nu}^{\ell}(\ul p,y)+ R_{1;\nu}^{\ell}(\ul p,y)]\;.
\end{equation}
The next proposition is the analogue of Proposition \ref{prop:linear}.
\begin{proposition}[Evaluation of the linear response]\label{prop:linear2d} We have:
\begin{equation}\label{eq:lin2d}
\chi_{\nu}^{\ell,\mathrm{lin}}(x,\eta,\theta)=-\sum_{\omega}^{*}\chi_{\nu,\omega}\int_\R \frac{dq}{2\pi}\, \hat\mu_{\infty}(q,0)e^{iq\theta x}\frac{v_\omega q}{-i/a+v_\omega q} + O(\ell\theta^{\alpha}) + O(e^{-c\ell})\;,
\end{equation}
with $\chi_{\nu,\omega}=v_{\omega}^{\nu}/(2\pi |v_\omega|)$, and where the sum runs only over the edge modes localised  at $x_{2} = 0$.
\end{proposition}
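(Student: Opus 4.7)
The plan is to adapt the proof of Proposition \ref{prop:linear} to the two-dimensional edge setting. First, I would decompose $\chi_\nu^{\ell,\mathrm{lin}}(x;\eta,\theta) = \chi_\nu^{\ell,S}(x;\eta,\theta) + \chi_\nu^{\ell,R}(x;\eta,\theta)$ according to the singular/regular splitting (\ref{eq:splitting2}), and evaluate each contribution in the $\beta, L \to \infty$ limit.

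For the singular contribution $\chi_\nu^{\ell,S}$, substitute the explicit form (\ref{eq:singular2d}). Edge modes localized at $x_2 = L-1$ contribute only through $|v_\omega^{\nu,\ell}| = O(e^{-c(L-\ell)})$, which vanishes as $L \to \infty$; for modes localized at $x_2 = 0$, $v_\omega^{\nu,\ell} = v_\omega^\nu + O(e^{-c\ell})$. The $y$-sum $\sum_{y\geq 0} \hat\mu_{\alpha,\theta}(-p,y)\,\|\xi_\omega(k_F^\omega;y)\|^2$ is handled by combining the exponential localization (\ref{eq:edgeest}) of $\xi_\omega$ near $x_2=0$, the normalization $\sum_y \|\xi_\omega(k_F^\omega;y)\|^2 = 1$, and the approximation $\hat\mu_\theta(-p,y) \simeq \theta^{-1}\hat\mu_\infty(-p/\theta, \theta y)$ from the dominant Poisson summand; Taylor expanding in $\theta y$ reduces the sum to $\theta^{-1}\hat\mu_\infty(-p/\theta, 0)\,\chi(\theta^{\alpha-1}|p|_{\mathbb{T}})$ modulo an $O(1)$ correction. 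Proposition \ref{lemma:bubble} then evaluates $\mathfrak B^\omega_\delta(\theta\ul q)$, and rescaling $p = \theta q$ produces the leading singular contribution modulo errors $O(\theta^{1-\alpha})+O(e^{-c\ell})$.

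For the regular contribution $\chi_\nu^{\ell,R}$, first establish a 2d analogue of Lemma \ref{lemma:rem1}, namely $|R^\ell_{1;\nu}(\ul p, y) - R^\ell_{1;\nu}(\ul 0, y)| \leq C_\alpha e^{-c|y|_\omega}\|\ul p\|^\alpha$, by combining Combes--Thomas-type transverse decay of $g_r$ with the remainder bound (\ref{eq:rembound2}). The constant $R^\ell_{1;\nu}(\ul 0, y)$ is then determined via a Ward identity from the lattice continuity equation (\ref{eq:cons2d}) applied to $(1/\beta L)\langle\mathbf{T}\, \hat n_{\ul p, y}; \hat\jmath^\ell_{\nu,-\ul p}\rangle$. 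Setting $p=0$ kills the $d_1$-current contribution by $x_1$-translation invariance; in contrast to the 1d case, a transverse divergence $d_2 \langle\mathbf{T}\, \hat\jmath_{2, \cdot, y}; \hat\jmath^\ell_{\nu,-\ul p}\rangle$ and a contact commutator $\langle[\hat N_y, \hat\jmath^\ell_{\nu, \ul 0}]\rangle$ persist. Both involve correlations between an operator in row $y$ (effectively coupled to the lower-edge mode $\xi_\omega$) and the edge current supported in $x_2 \in [0,\ell-1]$, and by edge-mode localization combined with Combes--Thomas decay through the bulk gap they are $O(e^{-c\ell})$. This yields
\[
R^\ell_{1;\nu}(\ul 0, y) = -\lim_{p_0 \to 0} S^\ell_{1;\nu}((p_0, 0), y) + O(e^{-c\ell}) = \sum_{\omega}^{*} \frac{v_\omega^\nu}{4\pi|v_\omega|} \,\|\xi_\omega(k_F^\omega; y)\|^2 + O(e^{-c\ell}).
\]

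Finally, combining the singular and regular contributions and using the algebraic identity $(-iq_0+v_\omega q)/(iq_0+v_\omega q) + 1 = 2v_\omega q/(iq_0+v_\omega q)$ with $q_0 = 1/a$ reduces each $\omega^{*}$-summand to $(v_\omega^\nu/(2\pi|v_\omega|))\,v_\omega q/(iq_0 + v_\omega q)$; a change of variable $q \to -q$ then matches the right-hand side of (\ref{eq:lin2d}), with the error budget $O(\ell\theta^{\alpha})+O(e^{-c\ell})$ absorbing the bubble correction, the $y$-sum Taylor remainder, the H\"older remainder of $R^\ell_{1;\nu}$, and the Ward corrections. The main technical obstacle is the Ward identity itself: whereas in 1d the total number operator $\hat N$ commutes exactly with $\hat\jmath_{\nu, 0}$, in 2d both the transverse divergence and the contact commutator must be shown to be exponentially small in $\ell$, which requires a careful combination of edge-mode localization with Combes--Thomas decay across the bulk spectral gap.
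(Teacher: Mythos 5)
Your treatment of the singular part (localization of the upper-edge modes, $v_\omega^{\nu,\ell}\to v_\omega^{\nu}$, normalization of $\|\xi_\omega(k_F^\omega;\cdot)\|^2$, the bubble of Proposition \ref{lemma:bubble}, and the final algebraic recombination) follows the paper's route. The genuine gap is in your Ward-identity step. You work at fixed row $y$ and claim that the surviving transverse divergence $d_2\langle\mathbf T\,\hat\jmath_{2,\cdot,y};\hat\jmath^{\ell}_{\nu,-\ul p}\rangle$ and the contact commutator $\langle[\hat N_y,\hat\jmath^{\ell}_{\nu,\ul 0}]\rangle$ are $O(e^{-c\ell})$ by ``edge-mode localization plus Combes--Thomas decay''. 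But the relevant rows $y$ sit at $O(1)$ distance from the lower edge, i.e.\ well \emph{inside} the strip $x_2\in[0,\ell-1]$, so there is no large separation to exploit and these terms do not decay in $\ell$ at all. Concretely, for $y\leq \ell-O(1)$ the commutator $[\hat N_y,\hat\jmath^{\ell}_{1,\ul 0}]$ is an $\ell$-independent sum of symmetric hopping ("bond kinetic energy") operators on the diagonal bonds joining rows $y-1,y,y+1$; its Gibbs expectation is a discrete $x_2$-derivative of edge profiles and is generically $O(1)$ near the boundary (it vanishes only deep in the bulk, by approximate translation invariance in $x_2$). The same applies to the $d_2$ term. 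Hence your claimed pointwise identity $R^{\ell}_{1;\nu}(\ul 0,y)=\sum_\omega^{*}\tfrac{v_\omega^\nu}{4\pi|v_\omega|}\|\xi_\omega(k_F^\omega;y)\|^2+O(e^{-c\ell})$ is unjustified, and is presumably false row by row. (A smaller issue of the same kind: the pointwise bound $|R^{\ell}_{1;\nu}(\ul p,y)-R^{\ell}_{1;\nu}(\ul 0,y)|\leq C_\alpha e^{-c|y|_\omega}\|\ul p\|^\alpha$ is too strong for the contributions with two regular propagators; what holds, and what is needed, is the $y$-summed bound as in (\ref{eq:Rdiff2d}).)

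The repair — and the paper's actual argument — is to sum the continuity equation (\ref{eq:cons2d}) over the \emph{entire} transverse coordinate $x_2\in\{0,\dots,L-1\}$ before invoking any conservation: the $d_2$ term then telescopes to boundary values that vanish by the Dirichlet boundary conditions, and at $p=0$ the contact term vanishes identically because $\sum_{x\in\Gamma_L} n_x=\mathcal N$ commutes with $j^{\ell}_{\nu,0}$ (the row operator $\hat N_y$ does not). This only determines the summed quantity $\sum_{y} R^{\beta,L,\ell}_{1;\nu}((\eta_\beta,0),y)=-\sum_{y} S^{\beta,L,\ell}_{1;\nu}((\eta_\beta,0),y)$, but that is all you need: after replacing $\hat\mu_{\alpha,\theta}(p,y)$ by $\hat\mu_{\alpha,\theta}(p,0)$ (cost $O(\ell\theta)$, using the smoothness of $\mu_\infty$ and the decay of the propagators and edge modes), only the $y$-summed remainder at $\ul p=\ul 0$ enters the linear response. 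With this modification the remaining steps of your proposal (continuity of the remainder in $\ul p$, evaluation of the bubble, $q\to-q$ and the identity combining $\mathfrak B^{\omega}_\infty$ with the Ward-determined constant) go through and reproduce (\ref{eq:lin2d}).
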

\begin{proof} By the smoothness of the test function, recall (\ref{eq:muFou1}):
\begin{equation}
\big| \hat\mu_{\alpha,\theta}(p,y) - \hat\mu_{\alpha,\theta}(p,0) \big|  \leq \frac{C_{r} y}{1 + |p / \theta|_{\mathbb{T}_{{\theta}^{-1}}}^{r}}\qquad \text{for all $r\in \mathbb{N}$.}
\end{equation}
Using the exponential decay of the edge modes and the estimates on the propagator, see discussion after (\ref{eq:g2d}), we have:
\begin{equation}
\begin{split}
\chi^{\ell, \text{lin}}_{\nu}(x;\eta,\theta) &= - \int_{\mathbb{T}} \frac{d p}{2\pi} \sum_{y = 1}^{\infty} \hat\mu_{\alpha,\theta}(-p,0) e^{-ip x} [S_{1;\nu}^{\ell}(\ul p,y)+ R_{1;\nu}^{\ell}(\ul p,y)] + O(\ell \theta)\;.
\end{split}
\end{equation}
The same estimates also allow to show, for $\alpha>0$:
\begin{equation}\label{eq:Rdiff2d}
\sum_{y = 0}^{\infty} |R_{1;\nu}^{\ell}(\ul p,y) - R_{1;\nu}^{\ell}(\ul 0,y)|\leq C \ell \|\underline{p}\|^{\alpha} + C e^{-c\ell}\;.
\end{equation}
Consider the contribution of $S_{1;\nu}^{\ell}$ to the linear response. Recalling the equation (\ref{eq:bubble}) defining the relativistic bubble diagram $\mathfrak B^{\omega}_{\delta}$, we have:
\begin{equation}\label{eq:s1bubble2d}
\begin{split}
\sum_{y = 0}^{\infty} S_{1;\nu}^{\ell}(\ul p,y) &=  - \sum_{y = 0}^{\infty} \sum_{\omega}^{*} v_{\omega}^{\nu,\ell} \| \xi_{\omega}(k_{F}^{\omega}; y) \|^{2} \int \frac{d\underline{k}}{(2\pi)^{2}}\, g_{\omega}(\underline{k}) g_{\omega}(\underline{k} + \underline{p}) \\
&=\sum_{\omega}^{*} v_{\omega}^{\nu,\ell} \mathfrak B_{\delta}^{\omega}(\ul p)\;,
\end{split}
\end{equation}
where the sum is restricted to the edge modes localized at $y = 0$. Therefore:
\begin{equation}\label{eq:resp2d0}
\begin{split}
\chi^{\ell, \text{lin}}_{\nu}(x;\eta,\theta) &= - \int_{\mathbb{T}} \frac{dp}{2\pi} \sum_{y = 0}^{\infty} \hat\mu_{\alpha,\theta}(-p,0) e^{-ip x} [S_{1;\nu}^{\ell}(\ul p,y)+ R_{1;\nu}^{\ell}(\ul p,y)] + O(\ell \theta) \\
&= -\int_{\mathbb{T}} \frac{dp}{2\pi} \hat\mu_{\theta}(-p,0) e^{-ip x} \Big[\sum_{\omega}^{*} v_{\omega}^{\nu,\ell} \mathfrak B_{\delta}^{\omega}(\ul p) + \sum_{y = 0}^{\infty} R_{1;\nu}^{\ell}(\underline{0},y)\Big] + O( \ell \theta^{\alpha}) + O(e^{-c\ell})\;,
\end{split}
\end{equation}
where the error terms take into account (\ref{eq:Rdiff2d}) and the replacement of $\hat\mu_{\alpha,\theta}(p,0)$ with $\hat\mu_{\theta}(p,0)$. Next, we fix the second term in the integral using the lattice continuity equation. Let us introduce the shorthand $\mathcal O_{\ul x, x_{2}}=\gamma_{x_{0}}(\mathcal O_{(x_{1}, x_{2})})$; then, Eq. (\ref{eq:cons2d}) implies:
\begin{equation}
\begin{split}
i\partial_{x_{0}}\langle\mathbf T n_{\ul x,x_{2}};j^{\ell}_{\nu,\ul x'}\rangle_{\beta,L} + \mathrm d_{x_{1}} \langle\mathbf T j_{1,\ul x,x_{2}};j^{\ell}_{\nu,\ul x'}\rangle_{\beta,L} &+ \mathrm d_{x_{2}} \langle\mathbf T j_{2,\ul x,x_{2}};j^{\ell}_{\nu,\ul x'}\rangle_{\beta,L}\\&= i\delta(x_{0}-x'_{0}) \langle [n_{(x,x_{2})},j^{\ell}_{\nu, x'}]\rangle_{\beta,L}\;.
\end{split}
\end{equation}
Taking the Fourier transform, summing over $x_{2}$, and using the Dirichlet boundary conditions:
\begin{equation}
p_{0} \frac{1}{\beta L}\langle\mathbf T \hat n^{L}_{\ul p};\hat\jmath^{\ell}_{\nu,-\ul p}\rangle_{\beta,L} + (1-e^{-ip}) \frac{1}{\beta L} \langle\mathbf T \hat\jmath_{1,\ul p}^{L};\hat\jmath^{\ell}_{\nu,-\ul p}\rangle_{\beta,L} = i\sum_{x\in \Gamma_{L}} e^{ipx_{1}}\langle[n_{x},j^{\ell}_{\nu,0}]\rangle_{L}\;.
\end{equation}
Choosing $\ul p = (\eta_{\beta},0)\neq\ul 0$, the right-hand side vanishes because $\sum_{x\in\Gamma_{L}} n_{x}=\mathcal N$ is the number operator, which commutes with $j^{\ell}_{\nu,0}$. Thus:
\begin{equation}
\sum_{x_{2} = 0}^{L-1} \langle\mathbf T \hat n_{(\eta,0),x_{2}};\hat\jmath^{\ell}_{\nu,(-\eta,0)}\rangle_{\beta, L}=0\quad\Longrightarrow\quad \sum_{y = 0}^{L-1} R_{\nu;1}^{\beta,L,\ell}((\eta_{\beta},0),y) = -\sum_{y = 0}^{L-1} S_{\nu;1}^{\beta,L,\ell}((\eta_{\beta},0),y)\;.
\end{equation}
Taking the limit $\beta, L \to \infty$, and using the continuity in $\underline{p}$ of $R_{\nu;1}^{\ell}(\underline{p}, y)$, we can determine the second term in the integral in (\ref{eq:resp2d0}) in terms of (\ref{eq:s1bubble2d}). We ultimately get:
\begin{equation}
\chi^{\ell,\mathrm{lin}}_{\nu}(x,\eta,\theta)=-
\sum_{\omega}^{*}\frac{v^{\nu,\ell}_{\omega}}{2\pi|v_{\omega}|}\int_{\mathbb{T}} \frac{dp}{(2\pi)}\, \hat\mu_{\theta}(p,0) e^{ipx}\frac{v_{\omega}p}{-i\eta+v_{\omega} p}+ O( \ell \theta^{\alpha}) + O(e^{-c\ell})\;.
\end{equation}
The final claim (\ref{eq:lin2d}) follows recalling the definition of $\hat\mu_{\theta}(p,0)$ in terms of $\hat \mu_{\infty}(p/\theta, 0)$, Eq. (\ref{eq:muFou1}), using the fast decay of $\hat \mu_{\infty}(p)$, performing the change of variables $p/\theta = q$, and recalling that $|v^{\nu,\ell}_{\omega} - v^{\nu}_{\omega}|\leq Ce^{-c\ell}$.
\end{proof}
We are now ready to prove the main result of this section, Theorem \ref{thm:main2d}.
\begin{proof}[Proof of Theorem \ref{thm:main2d}] The starting point is the identity (\ref{eq:respexp2d}). The explicit form of the linear response term, Eq. (\ref{eq:explicit2d}), is obtained proceeding as in Proposition \ref{prop:linear}. Let us now consider the higher order terms,
\begin{equation}\label{eq:hot2d}
\begin{split}
&-\sum_{n= 2}^{\infty}\frac{(-\theta)^{n-1}}{n!L^{n}} \sum_{\{p_{i}\} \in B_{L}^{n}} \sum_{\{y_{i}\} \in \{0, \ldots, L-1\}^{n}} \Big[\prod_{i=1}^{n} \hat\mu_{\alpha,\theta}(-p_{i}, y_{i})\Big] e^{ip_{n+1}x} \\&\qquad \quad \cdot [S_{n;\nu}^{\beta, L,\ell}(\underline{p}_{1},y_{1}, \ldots, \underline{p}_{n},y_{n})+ R_{n;\nu}^{\beta, L, \ell}(\underline{p}_{1},y_{1}, \ldots, \underline{p}_{n},y_{n})]\;.
\end{split}
\end{equation}
We claim that, for some $\gamma>0$:
\begin{equation}\label{eq:hot2dest}
|(\ref{eq:hot2d})| \leq C\ell \eta^{\gamma}\;.
\end{equation}
This estimate, combined with the evaluation of the linear response, Proposition \ref{prop:linear2d}, and with the bound (\ref{eq:erraux}) for the error terms produced by the comparison with the auxiliary dynamics, implies the claim of Theorem \ref{thm:main2d}, Eq. (\ref{eq:erraux}).

Let us prove (\ref{eq:hot2dest}). Consider the contribution associated with $S_{n;\nu}^{\beta, L,\ell}$. By (\ref{eq:singular2d}), 
\begin{equation}\label{eq:Sncontribution}
\begin{split}
&-\sum_{n= 2}^{\infty}\frac{(-\theta)^{n-1}}{n!L^{n}} \sum_{\{p_{i}\} \in B_{L}^{n}} \sum_{\{y_{i}\} \in \{0, \ldots, L-1\}^{n}} \Big[\prod_{i=1}^{n} \hat\mu_{\alpha,\theta}(-p_{i}, y_{i})\Big] e^{ip_{n+1}x} S_{n;\nu}^{\beta, L,\ell}(\underline{p}_{1},y_{1}, \ldots, \underline{p}_{n},y_{n}) \\
&\quad = -\sum_{n= 2}^{\infty}\frac{(-\theta)^{n-1}}{n!L^{n}} \sum_{\{p_{i}\} \in B_{L}^{n}} \sum_{\omega =1 }^{N} \Big[\prod_{i=1}^{n} \tilde \mu_{\alpha,\theta;\omega}(-p_{i})\Big] e^{ip_{n+1}x} S_{n;\nu;\omega}^{\beta, L,\ell}(\underline{p}_{1} \ldots, \underline{p}_{n})
\end{split}
\end{equation}
where:
\begin{equation}
\tilde \mu_{\alpha,\theta;\omega}(p_{i}) = \sum_{y_{i}=0}^{L-1}\hat\mu_{\alpha,\theta}(p_{i}, y_{i}) \| \xi_{\omega}(k_{F}^{\omega}; y_{i}) \|^{2}\;,
\end{equation}
and:
\begin{equation}
S_{n;\nu;\omega}^{\beta, L,\ell}(\underline{p}_{1} \ldots, \underline{p}_{n}) =  - v^{\nu,\ell}_{\omega} \int_{\beta,L} \frac{d\underline{k}}{(2\pi)^{2}} \sum_{\pi\in S_{n}} \prod_{i=1}^{n+1} g_{\omega}\Big(\ul k+\sum_{j< i} \ul p_{\pi(j)}\Big)\;.
\end{equation}
The right-hand side of eq. (\ref{eq:Sncontribution}) is the same type of contribution that has been studied for one-dimensional systems. By Proposition \ref{prp:loopest}, and proceeding as in the proof of Theorem \ref{thm:main1d}, we have, for $\gamma > 0$:
\begin{equation}
\sum_{n= 2}^{\infty}\frac{\theta^{n-1}}{n!L^{n}} \sum_{\{p_{i}\} \in B_{L}^{n}} \sum_{\omega = 1}^{N} \Big[\prod_{i=1}^{n} \big|\tilde \mu_{\alpha,\theta,\omega}(-p_{i}) \big|\Big] \Big| S_{n;\nu;\omega}^{\beta, L,\ell}(\underline{p}_{1} \ldots, \underline{p}_{n}) \Big| \leq C\eta^{\gamma}\;.
\end{equation}
Let us now discuss the error terms $R_{n;\nu}^{\beta, L, \ell}$ in (\ref{eq:respexp2d}). Consider the term $T^{\beta,L,\ell}_{n;\nu}$ in (\ref{eq:errR2d}). We have:
\begin{equation}\label{eq:cpp}
\begin{split}
&\sum_{n= 2}^{\infty}\frac{\theta^{n-1}}{n!L^{n}} \sum_{\{p_{i}\} \in B_{L}^{n}} \sum_{\{y_{i}\} \in \{0, \ldots, L-1\}^{n}} \Big[\prod_{i=1}^{n} \big| \hat\mu_{\alpha,\theta}(-p_{i}, y_{i}) \big| \Big] |T^{\beta,L,\ell}_{n;\nu}(\underline{p}_{1}, \ldots, \underline{p}_{n})| \\
& \leq \sum_{n= 2}^{\infty}\frac{\theta^{n-1}}{n!L^{n}} \sum_{\{p_{i}\} \in B_{L}^{n}} \sum_{\{y_{i}\} \in \{0, \ldots, L-1\}^{n}} \Big[\prod_{i=1}^{n} \big| \hat\mu_{\alpha,\theta}(-p_{i}, y_{i}) \big| \Big] \\
&\quad \cdot \sum_{\omega = 1}^{N} \int_{\beta,L} \frac{d\underline{k}}{(2\pi)^{2}} \sum_{\pi\in S_{n}} \big|f_{n,\omega}^{\nu,\ell}(k;p_{\pi(1)},y_{\pi(1)}, \dots,p_{\pi(n)}, y_{\pi(n)})\big| \prod_{i=1}^{n+1} \Big| g_{\omega}\Big(\ul k+\sum_{j< i} \ul p_{\pi(j)}\Big)\Big|\;;
\end{split}
\end{equation}
%
using (\ref{eq:rembound2}):
\begin{equation}\label{eq:cp}
\begin{split}
&\sum_{\{y_{i}\} \in \{0, \ldots, L-1\}^{n}} \Big[\prod_{i=1}^{n} \big| \hat\mu_{\alpha,\theta}(-p_{i}, y_{i}) \big| \Big] \big|f_{n,\omega}^{\nu,\ell}(k;p_{1},y_{1}, \dots,p_{n}, y_{n})\big| \\
&\quad \leq \sum_{\{y_{i}\} \in \{0, \ldots, L-1\}^{n}} \Big[\prod_{i=1}^{n} \big| \hat\mu_{\alpha,\theta}(-p_{i}, y_{i}) \big| \Big] C^{n} \Big(|k-k_{F}^{\omega}|_{\mathbb{T}} + \sum_{i=1}^{n} |p_{i}|_{\mathbb{T}} \Big) e^{- c\sum_{i=1}^{n} |y_{i}|_{\omega}} \\
&\quad \equiv C^{n} \Big(|k-k_{F}^{\omega}|_{\mathbb{T}} + \sum_{i=1}^{n} |p_{i}|_{\mathbb{T}} \Big) \Big[\prod_{i=1}^{n} \big| \bar{\mu}_{\alpha,\theta,\omega}(-p_{i}) \big| \Big] \;.
\end{split}
\end{equation}
Plugging (\ref{eq:cp}) into (\ref{eq:cpp}), we obtain an expression completely analogous to the one-dimensional case. Proceeding as for (\ref{eq:Tdim}):
\begin{equation}\label{eq:Test2d}
\begin{split}
(\ref{eq:cpp}) &\leq \sum_{n\geq 2}\frac{C^{n} a^{n-1}}{n!} \left| \log \eta\right| (\eta + n \theta) \\
&\leq (1 + n a) e^{Ca} \left| \log \eta\right| \eta\;.
\end{split}
\end{equation}
To conclude, let us estimate the terms $\widetilde{R}^{\beta,L,\ell}_{n;\nu}$ in (\ref{eq:errR2d}). Recall the representation:
\begin{equation}\label{eq:tildeR2d}
\begin{split}
&\widetilde{R}^{\beta,L,\ell}_{n;\nu}(\underline{p}_{1},y_{1}, \ldots, \underline{p}_{n},y_{n}) \\
&= \sum_{\pi\in S_{n}} \sum_{\substack{f\in\{r,s\}^{n+1}\\ f\not\equiv s}} \sum_{z, z'} \int_{\beta,L}\frac{d\underline k}{(2\pi)^{2}}  \Tr\Big[\hat J^{\ell}_{\nu}(k,p_{n+1};z,z')\prod_{i=1}^{n+1} g_{f(i)}\Big(\ul k+\sum_{l< i} \ul p_{\pi(l)}; y_{\pi(i-1)},y_{\pi(i)}\Big)\Big]\;.
\end{split}
\end{equation}
Let $\widetilde R^{\beta,L,\ell}_{n;\nu,j}$ be the contribution to $\widetilde R^{\beta,L,\ell}_{n;\nu}$ with $j$ propagators labelled by $\text{r}$. Let 
\begin{equation}
\tilde g_{\text{r}}(\underline{k}; y_{i}, y_{j}) = (1 + |k_{0}|) g_{\text{r}}(\underline{k}; y_{i}, y_{j})
\end{equation}
and recall:
\begin{equation}
g_{\text{s}}(\underline{k}; y_{i}, y_{j}) = \sum_{\omega} g_{\omega}(\underline{k}) P_{\omega}(k; y_{i}, y_{j})\;.
\end{equation}
Let $m_{0}, m_{1}, \ldots, m_{n-j}$ be the labels such that $f(m_{k}) = \text{s}$, in increasing order. Denoting by $I$ the set of labels associated with the $g_{\text{r}}$ propagators, we can rewrite the trace in (\ref{eq:tildeR2d}) as:
\begin{equation}\label{eq:trace2daa}
\begin{split}
&\Tr\Big[\hat J^{\ell}_{\nu}(k,p_{n+1};z,z')\prod_{i=1}^{n+1} g_{f(i)}\Big(\ul k+\sum_{l< i} \ul p_{\pi(l)}; y_{\pi(i-1)},y_{\pi(i)}\Big)\Big] \\
& = \sum_{\{\omega_{m_{i}}\}} g_{\omega_{m_{0}}}\Big( \ul k+\sum_{l< m_{0}} \ul p_{\pi(l)} \Big) \cdots g_{\omega_{m_{n-j}}}\Big( \ul k+\sum_{l< m_{n-j}} \ul p_{\pi(l)} \Big) \prod_{q \in I} \frac{1}{|k_{0} + q \eta| + 1} \\
&\quad \cdot \Tr\Big[\hat J^{\ell}_{\nu}(k,p_{n+1};z,z')\prod_{i=1}^{n+1} A_{f(i)}\Big(\ul k+\sum_{l< i} \ul p_{\pi(l)}; y_{\pi(i-1)},y_{\pi(i)}\Big)\Big]\;,
\end{split}
\end{equation}
where the sum is over the labels $\omega_{m_{0}},\ldots, \omega_{m_{n-j}}$, and:
\begin{equation}
A_{f(i)}(\ul k) = \left\{ \begin{array}{cc} P_{\omega_{i}}(k) & \text{if $i\in I^{c}$} \\ \tilde g_{\text{r}}(\underline{k}) & \text{if $i \in I$.}  \end{array} \right.
\end{equation}
Using the exponential decay of the edge modes and the exponential decay in $|y_{i} - y_{j}|$ of $\tilde g_{\text{r}}(\underline{k}; y_{i}, y_{j})$, we easily get:
\begin{equation}
\begin{split}
&\sum_{z, z'} \sum_{\{y_{i}\} \in \{0, \ldots, L-1\}^{n}} \Big[\prod_{i=1}^{n} \big| \hat\mu_{\alpha,\theta}(-p_{i}, y_{i}) \big| \Big] \\&\quad \cdot  \Big| \Tr\Big[\hat J^{\ell}_{\nu}(k,p_{n+1};z,z')\prod_{i=1}^{n+1} A_{f(i)}\Big(\ul k+\sum_{l< i} \ul p_{\pi(l)}; y_{\pi(i-1)},y_{\pi(i)}\Big)\Big]\Big| \\
&\quad \leq C^{n} \ell \Big[\prod_{i=1}^{n} \big| \nu_{\alpha,\theta}(p_{i}) \big| \Big]
\end{split}
\end{equation}
for:
\begin{equation}
\nu_{\alpha,\theta}(p) := \frac{1}{\theta} \frac{\chi( \theta^{\alpha-1} |p|)}{1 + | p/\theta |_{\mathbb{T}_{\theta^{-1}}}^{2}}\;.
\end{equation}
Thus, the contribution associated with (\ref{eq:trace2daa}) to (\ref{eq:tildeR2d}) summed over the $\{y_{i}\}$ variables is bounded as:
\begin{equation}\label{eq:591}
\begin{split}
&C^{n} \ell \sum_{\pi\in S_{n}} \int_{\beta, L}\frac{d\underline k}{(2\pi)^{2}} \sum_{\{\omega_{m_{i}}\}} \Big| g_{\omega_{m_{0}}}\Big( \ul k+\sum_{l< m_{0}} \ul p_{\pi(l)} \Big)\cdots g_{\omega_{m_{n-j}}}\Big( \ul k+\sum_{l< m_{n-j}} \ul p_{\pi(l)} \Big) \Big| \prod_{q \in I} \frac{1}{|k_{0} + q \eta| + 1} \\
&\qquad \cdot \Big[\prod_{i=1}^{n} \big| \nu_{\alpha,\theta}(p_{i}) \big| \Big]\;;
\end{split}
\end{equation}
from now on, the discussion proceeds exactly as in the one-dimensional case, see the discussion after (\ref{eq:remj}). In fact, the functions in the product over $q\in I$ in Eq. (\ref{eq:591}) play the role of the bounded propagators in the one-dimensional case. Therefore, recalling (\ref{eq:n>1}), choosing $a\leq w\left|\log\eta\right|$ for $w$ small enough, and for some $\gamma > 0$, the final result is:
\begin{equation}\label{eq:tildeR2d2}
\sum_{n= 2}^{\infty}\frac{\theta^{n-1}}{n!L^{n}} \sum_{\{p_{i}\} \in B_{L}^{n}} \sum_{\{y_{i}\} \in \{0, \ldots, L-1\}^{n}} \Big[\prod_{i=1}^{n} \big| \hat\mu_{\alpha,\theta}(p_{i}, y_{i}) \big| \Big] \big|\widetilde{R}^{\beta,L,\ell}_{\nu;n}(\underline{p}_{1},y_{1}, \ldots, \underline{p}_{n},y_{n})\big| \leq C \ell \eta^{\gamma}\;.
\end{equation}
Eqs. (\ref{eq:Test2d}), (\ref{eq:tildeR2d2}) prove (\ref{eq:hot2dest}), and conclude the proof of Theorem \ref{thm:main2d}. 
\end{proof}

\end{document}